\def\submission{0}
\newtheorem{theorem}{Theorem}[section]
\newtheorem{lemma}[theorem]{Lemma}
\newtheorem{remark}{Remark}
\newtheorem{fact}{Fact}
\def \sample { \overset{\hspace{0.1em}\mathsf{\scriptscriptstyle\$}}{\leftarrow} }
\newcommand{\ra}{\rightarrow}
\newcommand{\la}{\leftarrow}
\newcommand{\pro}{P}
\newcommand{\ver}{V}
\newcommand{\secpar}{n}
\newcommand{\negl}{\mathsf{negl}}
\newcommand{\lang}{L}
\newcommand{\key}{k}
\newcommand{\comy}{y}
\newcommand{\bfy}{\mathbf{y}}
\newcommand{\td}{\mathsf{td}}
\newcommand{\st}{\mathsf{st}}
\newcommand{\out}{\mathsf{out}}
\newcommand{\bit}{\{0,1\}}
\newcommand{\ans}{a}
\newcommand{\poly}{\mathsf{poly}}
\newcommand{\hil}{\mathcal{H}}
\newcommand{\defeq}{:=}
\newcommand{\Succ}{\mathsf{Succ}}
\newcommand{\A}{\mathcal{A}}
\newcommand{\B}{\mathcal{B}}
\newcommand{\ext}{\mathsf{Ext}}
\newcommand{\calX}{\mathcal{X}}
\newcommand{\calY}{\mathcal{Y}}
\newcommand{\calS}{\mathcal{S}}
\newcommand{\ot}{\otimes}
\newcommand{\func}{\mathsf{Func}}
\newcommand{\TT}{\mathtt{T}}
\newcommand{\PRG}{\mathsf{PRG}}
\newcommand{\QTM}{\mathsf{QTM}}
\newcommand{\QTIME}{\mathsf{QTIME}}
\newcommand{\Acc}{\mathsf{Acc}}
\newcommand{\RE}{\mathsf{RE}}
\newcommand{\crs}{\mathsf{crs}}
\newcommand{\ek}{\mathsf{ek}}
\newcommand{\rsetup}{\mathsf{RE}.\mathsf{Setup}}
\newcommand{\renc}{\mathsf{RE}.\mathsf{Enc}}
\newcommand{\rdec}{\mathsf{RE}.\mathsf{Dec}}
\newcommand{\inp}{\mathsf{inp}}
\newcommand{\Time}{\mathsf{Time}}
\newcommand{\Menc}{\widehat{M_\inp}}
\newcommand{\SNARK}{\mathsf{SNARK}}
\newcommand{\snark}{\mathsf{snark}}
\newcommand{\NP}{\mathsf{NP}}
\newcommand{\rela}{\mathcal{R}}
\newcommand{\FHE}{\mathsf{FHE}}
\newcommand{\fhe}{\mathsf{fhe}}
\newcommand{\fhekeygen}{\mathsf{FHE}.\mathsf{KeyGen}}
\newcommand{\fheenc}{\mathsf{FHE}.\mathsf{Enc}}
\newcommand{\fhedec}{\mathsf{FHE}.\mathsf{Dec}}
\newcommand{\fheeval}{\mathsf{FHE}.\mathsf{Eval}}
\newcommand{\sk}{\mathsf{sk}}
\newcommand{\pk}{\mathsf{pk}}
\newcommand{\ct}{\mathsf{ct}}
\newcommand{\setupeff}{\setup_{\mathsf{eff}}}
\newcommand{\vereff}{V_{\mathsf{eff}}}
\newcommand{\vereffone}{V_{\mathsf{eff},1}}
\newcommand{\vereffthree}{V_{\mathsf{eff},3}}
\newcommand{\vereffout}{V_{\mathsf{eff},\mathsf{out}}}
\newcommand{\proeff}{P_{\mathsf{eff}}}
\newcommand{\proefftwo}{P_{\mathsf{eff},2}}
\newcommand{\proefffour}{P_{\mathsf{eff},4}}
\newcommand{\setup}{\mathsf{Setup}}
\newcommand{\re}{\mathsf{re}}
\newcommand{\transcript}{\mathsf{trans}}
\newcommand{\setupefffs}{\setup_{\mathsf{eff}\text{-}\mathsf{fs}}}
\newcommand{\proefffs}{P_{\mathsf{eff}\text{-}\mathsf{fs}}}
\newcommand{\proefffstwo}{P_{\mathsf{eff}\text{-}\mathsf{fs},2}}
\newcommand{\verefffs}{V_{\mathsf{eff}\text{-}\mathsf{fs}}}
\newcommand{\verefffsone}{V_{\mathsf{eff}\text{-}\mathsf{fs},1}}
\newcommand{\verefffsout}{V_{\mathsf{eff}\text{-}\mathsf{fs},\out}}
\newcommand{\game}{\mathsf{Game}}
\newcommand*{\opro}[2]{|#1\rangle\langle#2|}
\newcommand*{\ipro}[2]{\langle #1|#2\rangle}
\newcommand{\TD}{\mathsf{TD}}
\newcommand*{\regK}{\mathbf{K}}
\newcommand*{\regX}{\mathbf{X}}
\newcommand*{\regY}{\mathbf{Y}}
\newcommand*{\regZ}{\mathbf{Z}}
\newcommand*{\regC}{\mathbf{C}}
\newcommand{\nai}[1]{{\color{purple}[Nai: #1]}}
\newcommand{\km}[1]{{\color{brown}[KM: #1]}}
\newcommand{\takashi}[1]{{\color{orange}[Takashi: #1]}}
\newcommand{\nai}[1]{}
\newcommand{\km}[1]{}
\newcommand{\takashi}[1]{}
\title{Classical Verification of Quantum Computations with \\ Efficient Verifier}
\title{Classical Verification of Quantum Computations with Efficient Verifier}
\begin{document}

\ifnum\submission=0
\author[1]{Nai-Hui Chia}
\author[2]{Kai-Min Chung}
\author[3]{Takashi Yamakawa\thanks{This work was done in part while the author was visiting Academia Sinica.}}
\affil[1]{Department of Computer Science, University of Texas at Austin}
\affil[2]{Institute of Information Science, Academia Sinica}
\affil[3]{NTT Secure Platform Laboratories}
\else
\author{\empty}\institute{\empty} %for deleting ``No Author Given" and ``No Institute Given"
\fi

\maketitle

\ifnum\submission=0
\vspace{-7mm} % to shorten the margin between the title and abstract
\fi

\begin{abstract}
In this paper, we extend the protocol of classical verification of quantum computations (CVQC) recently proposed by Mahadev to make the verification efficient.
Our result is obtained in the following three steps:
\begin{itemize}
    \item We show that parallel repetition of Mahadev's protocol has negligible soundness error. This gives the first constant round CVQC protocol with negligible soundness error. In this part, we only assume the quantum hardness of the learning with error (LWE) problem similarly to the Mahadev's work.
    \item We construct a two-round CVQC protocol in the quantum random oracle model (QROM) where a cryptographic hash function is idealized to be a random function.
    This is obtained by applying the Fiat-Shamir transform to the parallel repetition version of the Mahadev's protocol.
    \item We construct a two-round CVQC protocol with efficient verifier in the CRS+QRO model where both prover and verifier can access to a (classical) common reference string generated by a trusted third party in addition to quantum access to QRO.
    Specifically, the verifier can verify a $\QTIME(T)$ computation in time $\poly(\secpar,\log T)$ where $\secpar$ is the security parameter.
    For proving soundness, we assume that a standard model instantiation of our two-round protocol with a concrete hash function (say, SHA-3) is sound and the existence of post-quantum indistinguishability obfuscation and post-quantum fully homomorphic encryption in addition to the quantum hardness of the LWE problem. 
\end{itemize}
\end{abstract}
\section{Introduction}
Can we verify quantum computations by a classical computer? This problem has been a major open problem in the field until Mahadev~\cite{FOCS:Mahadev18a} finally gave an affirmative solution.
Specifically, she constructed an interactive protocol between an efficient classical verifier (a BPP machine) and an efficient quantum prover (a BQP machine) where the verifier can verify the result of the BQP computation.
(In the following, we call such a protocol a CVQC protocol.\footnote{``CVQC" stands for ``Classical Verification of Quantum Computations"}) 
Soundness of her protocol relies on a computational assumption that the learning with error (LWE) problem \cite{JACM:Regev09} is hard for an efficient quantum algorithm, which has been widely used in the field of cryptography. We refer to the extensive survey by Peikert \cite{FTTCS:Peikert16} for details about LWE and its cryptographic applications.

Though her result is a significant breakthrough, there are still several drawbacks. First, her protocol has soundness error $3/4$, which means that a cheating prover may convince the verifier even if it does not correctly computes the BQP computation with probability at most $3/4$. Though we can exponentially reduce the soundness error by sequential repetition, we need super-constant rounds to reduce the soundness error to be negligible.
If parallel repetition works to reduce the soundness error, then we need not increase the number of round.
However, parallel repetition may not reduce soundness error for computationally sound protocol in general \cite{FOCS:BelImpNao97,TCC:PieWik07}.
Thus, it is still open to construct constant round protocol with negligible soundness error.

Another issue is about verifier's efficiency. In her protocol, for verifying a computation that is done by a quantum computer in time $T$, the verifier's running time is as large as $\poly(T)$.
Considering a situation where a device with weak classical computational power outsources computations to untrusted quantum server, we may want to make the verifier's running time as small as possible.
Such a problem has been studied well in the setting where the prover is classical and we know solutions where verifier's running time only logarithmically depends on $T$~\cite{STOC:Kilian92,SIAM:Micali00,STOC:KalRazRot13,STOC:KalRazRot14,JACM:GolKalRot15,STOC:ReiRotRot16,STOC:BraHolKal17,STOC:BKKSW18,FOCS:HolRot18,STOC:CCHLRRW19,STOC:KalPanYan19}.
Hopefully, we want to obtain a CVQC protocol where the verifier runs in logarithmic time.  

\subsection{Our Results}
In this paper, we solve the above drawbacks of the Mahadev's protocol. 
Our contribution is divided into three parts:
\begin{itemize}
    \item We show that parallel repetition version of Mahadev's protocol has negligible soundness error. This gives the first constant round CVQC protocol with negligible soundness error.
    \item We construct a two-round CVQC protocol in the quantum random oracle model (QROM) \cite{AC:BDFLSZ11} where a cryptographic hash function is idealized to be a random function that is only accessible as a quantum oracle.
    This is obtained by applying the Fiat-Shamir transform \cite{C:FiaSha86,C:LiuZha19,C:DFMS19} to the parallel repetition version of the Mahadev's protocol.
    \item We construct a two-round CVQC protocol with logarithmic-time verifier in the CRS+QRO model where both prover and verifier can access to a (classical) common reference string generated by a trusted third party in addition to quantum access to QRO.
    For proving soundness, we assume that a standard model instantiation of our two-round protocol with a concrete hash function (say, SHA-3) is sound and the existence of post-quantum indistinguishability obfuscation \cite{JACM:BGIRSVY12,SIAM:GGH0SW16} and (post-quantum) fully homomorphic encryption (FHE) \cite{STOC:Gentry09} in addition to the quantum hardness of the LWE problem. 
\end{itemize}

\subsection{Technical Overview}   \label{subsec:tech-intro}
\paragraph{Overview of Mahadev's protocol.}
 First, we recall the high-level structure of Mahadev's 4-round CVQC protovol.\footnote{See Sec. \ref{sec:mahadev_overview} for more details.}
On input a common input $x$, a quantum prover and classical verifier proceeds as below to prove and verify that $x$ belongs to a BQP language $L$.
\begin{description}
\item[Fist Message:] The verifier generates a pair of ``key" $k$ and a ``trapdoor" $\td$, sends $k$ to the prover, and keeps $\td$ as its internal state.

\item[Second Message:] The prover is given the key $k$, generates a classical ``commitment" $y$ along with a quantum state $\ket{\st_{\pro}}$, sends $y$ to the verifier, and keeps $\ket{\st_{\pro}}$ as its internal state.

\item[Third Message:] The verifier randomly picks a ``challenge" $c\sample \bit$ and sends $c$ to the prover. Following the terminology in \cite{FOCS:Mahadev18a}, we call the case of $c=0$ the ``test round" and the case of $c=1$ the ``Hadamard round".

\item[Forth Message:]
The prover is given a challenge $c$, generates a classical ``answer" $a$ by using the state $\ket{\st_{\pro}}$, and sends $a$ to the verifier.

\item[Final Verification:]
Finally, the verifier returns $\top$ indicating acceptance or $\bot$ indicating rejection.
In case $c=0$, the verification can be done publicly, that is, the final verification algorithm need not use $\td$.
\end{description}

Mahadev showed that the protocol achieves negligible completeness error and constant soundness error against computationally bounded cheating provers. More precisely, she showed that if $x\in L$, then the verifier accepts with probability $1-\negl(n)$ where $n$ is the security parameter, and if $x\notin L$, then any quantum polynomial time cheating prover can let the verifier accept with probability at most $3/4$.
For proving this, she first showed the following lemma:\footnote{Strictly speaking, she just proved a similar property for what is called a ``measurement protocol" instead of CVQC protocol. But this easily implies a similar statement for CVQC protocol since CVQC protocol can be obtained by combining a measurement protocol and the (amplified version of) Morimae-Fitzsimons protocol \cite{PR:MorFit18} without affecting the soundness error as is done in \cite[Section 8]{FOCS:Mahadev18a}.}

\begin{lemma}[informal]\label{lem:Mah_soundness_informal}
For any $x\notin L$, if a  quantum polynomial time cheating prover passes the test round with probability $1-\negl(n)$, then it passes the Hadamard with probability $\negl(n)$ assuming the quantum hardness of the LWE problem.
\end{lemma}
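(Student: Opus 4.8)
The plan is to recover Mahadev's soundness lemma by putting an arbitrary computationally bounded cheating prover into a canonical ``commit-then-measure'' form, extracting from it a well-defined committed quantum state, and then invoking the statistical soundness of the underlying local-Hamiltonian (measurement) problem for $x\notin\lang$. First I would fix notation: after receiving the key $\key$, a quantum polynomial-time cheating prover $\pro^*$ may be assumed without loss of generality to prepare a pure state on registers $\regY$ (the commitment register, which is measured and whose outcome $\comy$ is sent), $\regX$ (a purported preimage register), $\regW$ (the ``logical'' register encoding the claimed measurement/history state), and $\regZ$ (private workspace); on challenge $c\in\bit$ it applies a unitary $U_c$ and measures to produce the answer $\ans$. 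Writing $p_T$ and $p_H$ for its acceptance probabilities in the test round ($c=0$) and the Hadamard round ($c=1$), the goal is to show that $p_T\ge 1-\negl(\secpar)$ forces $p_H\le\negl(\secpar)$ whenever $x\notin\lang$.

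The heart of the proof is a \emph{commitment-extraction} step: if $p_T\ge 1-\negl(\secpar)$, then there is an (inefficiently computable) state $\ket{\phi}$ on $\regW$ such that, conditioned on the sent commitment $\comy$, the distribution of the measurement outcome the honest verifier reconstructs in the Hadamard round using $\td$ is $\negl(\secpar)$-close to the distribution obtained by measuring $\ket{\phi}$ in the basis prescribed by the protocol. I would prove this in three moves. (i) Using the computational indistinguishability of the injective and claw-free ($2$-to-$1$) modes of the noisy trapdoor claw-free family — which holds under the quantum hardness of LWE — replace the real key by an injective key; since test-round success means $\regX$ contains a genuine preimage of $\comy$, in injective mode this pins down $\regX$ to the unique preimage, which in turn constrains the correlated $\regW$ via a collapsing/binding-style argument. (ii) Switch back to claw-free mode and invoke the adaptive hardcore-bit property (again LWE) to rule out that $\pro^*$ can, even coherently, both exhibit a preimage and output the hidden bit $d\cdot(x_0\oplus x_1)$ relating the two claw preimages $x_0,x_1$; this forces the Hadamard-round answer to behave like an honest Hadamard-basis measurement of the committed state up to negligible error. (iii) Propagate the test-round slack $\varepsilon=1-p_T=\negl(\secpar)$ through these hybrids by a perturbation/operator analysis in the spirit of Mahadev's ``perfect prover'' argument, bounding the relevant state perturbations by $O(\sqrt{\varepsilon})$, which is still negligible.

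Given the commitment-extraction step, I would finish by appealing to the statistical soundness of the outer verification into which the measurement protocol is plugged — the (amplified) Morimae--Fitzsimons / local-Hamiltonian protocol: for $x\notin\lang$ the associated Hamiltonian has ground energy bounded away from the acceptance threshold, so \emph{no} state $\ket{\phi}$ can make the verifier accept with non-negligible probability. Combined with the $\negl(\secpar)$-closeness of $\pro^*$'s Hadamard-round outcomes to honest measurements of $\ket{\phi}$, this gives $p_H\le\negl(\secpar)$. (As in the footnote, one literally proves the closeness statement for Mahadev's measurement protocol, and CVQC soundness then follows by composition with the outer protocol without changing the soundness error.)

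I expect the main obstacle to be steps (ii)--(iii): turning a purely computational LWE hypothesis into a statistical statement about $\pro^*$'s quantum state. One has to handle the case where only one of the two claw preimages lies in the support of $\regX$ (resolved by the injective-mode detour of move (i), where the hidden bit is information-theoretically undetermined), carry the adaptive hardcore-bit advantage bound through a reduction that runs $\pro^*$ coherently on \emph{both} challenges while measuring the appropriate registers, and control the accumulation of negligible errors across the mode switches and the final rounding. The genuinely delicate piece is the operator analysis in (iii) — bounding the commutator of the test-round and Hadamard-round measurement projectors on the committed state — where essentially all of the technical work concentrates.
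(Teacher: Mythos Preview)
The paper does not prove this lemma at all. Both the informal statement in the overview and its formal counterpart (Lemma~\ref{lem:Mah_soundness}) are explicitly attributed to Mahadev's work and used as a black box; the paper simply states the lemma as ``implicit in \cite{FOCS:Mahadev18a}'' and moves on. So there is no ``paper's own proof'' to compare your proposal against --- the authors deliberately do not reprove Mahadev's result, and all of their technical effort goes into the parallel-repetition analysis (Lemmas~\ref{lem:partition} and~\ref{lem:partition_further}) that \emph{uses} this lemma.

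Your proposal is therefore not a proof of anything in this paper but rather a sketch of how one might reconstruct Mahadev's original argument. As such a sketch it is broadly on the right track --- the injective/claw-free mode switch, the adaptive hardcore bit, the extraction of a committed state, and the final appeal to the Morimae--Fitzsimons soundness are indeed the main ingredients of \cite{FOCS:Mahadev18a}. But you should be aware that (a) this is a substantial reconstruction of a long paper, not a short lemma proof, and (b) your step (iii) is where most of the real work in Mahadev's paper lives, and ``propagate the slack by a perturbation/operator analysis'' is hiding essentially all of it; in particular, the argument that the Hadamard-round answer is close to an honest $X/Z$ measurement of a fixed state is considerably more delicate than an $O(\sqrt{\varepsilon})$ bound suggests. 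If the intent was to match what \emph{this} paper does, the correct move is simply to cite Mahadev.
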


Given the above lemma, it is easy to prove the soundness of the protocol. Roughly speaking, we consider a decomposition of the Hilbert space $\hil_{\pro}$ for the prover's internal state $\ket{\psi_{\pro}}$ into two subspaces $S_{0}$ and $S_1$ so that $S_0$ (resp. $S_1$) consists of quantum states that lead  to rejection (resp. acceptance) in the test round.
That is, we define these subspaces so that if the cheating prover's internal state after sending the second message is $\ket{s_0}\in S_0$ (resp. $\ket{s_1}\in S_1$), then the verifier returns rejection (acceptance) at last of the protocol. Here, we note that the decomposition is well-defined since we can assume that a cheating prover just applies a fixed unitary on its internal space and measure some registers for generating the forth message without loss of generality. 
Let $\Pi_b$ be the projection onto $S_b$ and $\ket{\psi_b}:= \Pi_{b}\ket{\psi_{\pro}}$ for $b\in \bit$.  
Then $\ket{\psi_{0}}$ leads to rejection in the test round (with probability $1$), so if the verifier uniformly chooses $c\sample \bit$, then $\ket{\psi_{0}}$ leads to acceptance with probability at most $1/2$. 
On the other hand, since $\ket{\psi_{1}}$ leads to the acceptance in the test round (with probability $1$), by Lemma~\ref{lem:Mah_soundness_informal},  $\ket{\psi_{1}}$ leads to the acceptance in the Hadamard round with only negligible probability. Therefore,  the verifier uniformly chooses $c\sample \bit$, then $\ket{\psi_{1}}$ leads to acceptance with probability at most $1/2+\negl(\secpar)$.
Therefore, intuitively speaking, $\ket{\psi_{\pro}}=\ket{\psi_0}+\ket{\psi_1}$ leads to acceptance with probability at most $1/2+\negl(\secpar)$, which completes the proof of soundness. \footnote{Here is a small gap since measurements are not linear and thus we cannot simply conclude that  $\ket{\psi_{\pro}}$ leads to acceptance with probability at most $1/2+\negl(\secpar)$ even though the same property holds for both $\ket{\psi_0}$ and $\ket{\psi_1}$. Indeed, Mahadev just showed that the soundness error is at most $3/4$ instead of $1/2+\negl(\secpar)$ to deal with this issue. A concurrent work by Alagic et al. \cite{arXiv:AlaChiHun19} proved that the Mahadev's protocol actually achieves soundness error $1/2+\negl(\secpar)$ with more careful analysis.}

\paragraph{Parallel repetition.}
Now, we turn our attention to parallel repetition version of the Mahadev's protocol. Our goal is to prove that the probability that the verifier accepts on $x\notin L$ is negligible if the verifier and prover run the Mahadev's protocol $m$-times parallelly for sufficiently large $m$ and the verifier accepts if and only if it accepts on the all coordinates.
Our first (failed) attempt is to apply a similar argument to the one for the stand-alone Mahadev's protocol. 
Specifically, we consider a decomposition of the Hilbert space $\hil_{\pro}$ into $S_{i,0}$ and $S_{i,1}$ for each $i\in[m]$ so that $S_{i,0}$ (resp. $S_{i,1}$) consists of quantum states that lead  to rejection (resp. acceptance) in the test round on the $i$-th coordinate.\footnote{As explained later, this sentence actually does not make sense.}
Then for a $c\in \bit^m$, we can decompose $\ket{\psi_{\pro}}$ as
\[
\ket{\psi_{\pro}}=\ket{\psi_{1}}+...+\ket{\psi_{m}}+\ket{\psi'}
\]
so that $\ket{\psi_{i}}\in S_{i, c_i}$ for $i\in [m]$ by recursively applying projections onto $S_{i,0}$ and $S_{i,1}$.
Then, $\ket{\psi_{i}}$ leads to acceptance on the $i$-th coordinate with a negligible probability if the challenge on the $i$-th coordinate is $c_i$ by using the same argument as in the stand-alone setting.  
Moreover, if we decompose it in this way for randomly chosen $c$, we have $E_{c\sample \bit^m}[\|\ket{\psi'}\|]\leq 2^{-m}$ since an expected norm is halved whenever we apply either of projections onto $S_{i,0}$ or $S_{i,1}$ randomly.
Overall, each term $\ket{\psi_{i}}$ leads to rejection on the $i$-th coordinate with overwhelming probability and the expected norm of the remaining term $\ket{\psi'}$ is negligible  if the verifier randomly picks a challenge $c\sample \bit^m$. 
Therefore it seems that we can conclude that the probability that the verifier accepts on the all coordinate can be only  negligible.

However, this does not work as it is. The problem is that a cheating prover's behavior in the forth round depends on challenges $c=c_1....c_m \in \bit^m$ on all coordinates.
Thus, even if we focus on analysis of the test round on the $i$-th round (i.e., the case of $c_i=0$), all other challenges $c_{-i}=c_1...c_{i-1}c_{i}...c_{m}$ still have flexibility, and a different choice of $c_{-i}$ leads to a different prover's behavior. In other words, the prover's strategy should be  described as a unitary over $\hil_{\regC} \ot \hil_{\pro}$ where $\hil_{\regC}$ is a Hilbert space to store a challenge.
Therefore $S_{i,0}$ and $S_{i,1}$ cannot be well-defined as a decomposition of $\hil_{\pro}$.

Here, we observe that what is essential to make the above approach work is to decompose $\hil_\pro$ into subspaces $S_{i,0}$ and $S_{i,1}$ so that the following property holds:
\ifnum\submission=0
\begin{quotation}
\emph{For $b\in \bit$, any state $\ket{s_{i,b}}\in S_{i,b}$ and any fixed $c\in \bit^m$ such that $c_i=b$,\\ $\ket{s_{i,b}}$ leads to acceptance on the $i$-th coordinate with ``small" probability.\footnote{If we could prove that the probability is negligible, then that would have made the rest of the proof simpler. However, we can only prove a slightly weaker statement that for any polynomial $p$, there exists a choice of parameters such that the probability is at most $1/p(\secpar)$. We use ``small" probability to mean this in this overview. We note that this is still sufficient for the rest of the proof.}}
\end{quotation}
\else
\begin{quotation}
\emph{For $b\in \bit$, any state $\ket{s_{i,b}}\in S_{i,b}$ and any fixed $c\in \bit^m$ such that $c_i=b$, $\ket{s_{i,b}}$ leads to acceptance on the $i$-th coordinate with ``small" probability.\footnote{If we could prove that the probability is negligible, then that would have made the rest of the proof simpler. However, we can only prove a slightly weaker statement that for any polynomial $p$, there exists a choice of parameters such that the probability is at most $1/p(\secpar)$. We use ``small" probability to mean this in this overview. We note that this is still sufficient for the rest of the proof.}}
\end{quotation}
\fi

%\begin{enumerate}
%    \item For any state $\ket{s_{i,0}}\in S_{i,0}$ and any fixed $c$ such that $c_i=0$, $\ket{s_{i,0}}$ leads to acceptance on the $i$-th coordinate with ``small" probability. 
%    \item For any state $\ket{s_{i,1}}\in S_{i,1}$ and any fixed $c$ such that $c_i=1$, $\ket{s_{i,1}}$ leads to acceptance on the $i$-th coordinate with ``small" probability. 
%\end{enumerate}

Here, our idea is that instead of decomposing $\hil_\pro$ into ``always-reject" space and ``always-accept" space as above, we introduce a certain threshold $\gamma$, and we define $S_{i,0}$ (resp. $S_{i,1}$) as a subspace consisting of states that leads to acceptance in the test round on the $i$-th coordinate with probability smaller (resp. grater) than the threshold $\gamma$.

Then the case of $b=0$ is easy to analyze:
Since any state $s_{i,0}\in S_{i,0}$ leads to acceptance in the test round on the $i$-th coordinate with probability at most $\gamma$ when $c_{-i}$ is uniformly chosen from $\bit^{m-1}$. This implies that $s_{i,0}\in S_{i,0}$ leads to acceptance in the test round on the $i$-th coordinate with probability at most $2^{m-1}\gamma$ for any fixed $c_{-i}$.
Thus, we can show the first property if we set parameters so that $2^{m-1}\gamma$ is ``small".
\footnote{One may think that the exponential loss in $m$ is problematic since we consider super-logarithmic number of parallel repetition. However, by a standard technique, we can focus on the case of $m=O(\log n)$ without loss of generality. See the first paragraph of Sec. \ref{sec:proof_of_soundness} for details.}

The case of $b=1$ is a little trickier to argue.
First, we note that any state $\ket{s_{i,1}}\in S_{i,1}$ leads to acceptance in the test round on the $i$-th coordinate with probability at least $\gamma$ by definition. 
Then our idea is to amplify the probability to $1-\negl(n)$ by repeating the cheating prover's forth-message-generation algorithm $O(1/\gamma)$ times. 
If we can do this, then we would be able to show that $\ket{s_{i,1}}$ leads to acceptance in the Hadamard round with only a negligible probability by Lemma \ref{lem:Mah_soundness_informal}.
However, an obvious problem is that we cannot amplify the probability of passing the test round by simply repeating the cheating prover since the state $\ket{s_{i,1}}$ may be broken once the prover runs.
We overcome this problem by using the idea of \cite{MW05,NWZ09}, which give amplification theorem for QMA.
Very roughly speaking, they show that we can ``reuse" the same quantum state to amplify the probability to pass verification if the original state is an eigenvector of a certain operator associated with the verification procedure.
Based on this idea, we define $S_{i,0}$ and $S_{i,1}$ as spaces spanned by eigenvectors with eigenvalues smaller or grater than the threshold $\gamma$ so that we can amplify the probability to pass the verification when arguing the second property. 

 Now, we get close to our final goal, but there still remains an issue. 
 In the analysis for the case of $b=1$, we apply Lemma \ref{lem:Mah_soundness_informal} for a cheating prover that posseses $\ket{s_{i,1}}$, which is $S_{i,1}$ component of $\ket{\st_\pro}$ as its internal state.
 However, for applying this lemma, we have to ensure that we can efficiently generate $\ket{s_{i,1}}$ because the lemma can be applied only to efficient cheating provers.
 On the other hand, though we can assume that  $\ket{\st_\pro}$ can be efficiently generated, we do not know how to efficiently generate $\ket{s_{i,1}}$ since we do not know how to perform projection onto $S_{i,1}$.
 As explained above, we define $S_{i,1}$ as a space spanned by eigenvectors of a certain operator with eigenvalues larger than $\gamma$, so our primary idea is to use the phase estimation to estimate the eigenvalue as is done in \cite{NWZ09}.
 However, a problem is that we can only perform phase estimation with accuracy  $1/\poly(\secpar)$ in polynomial time in $\secpar$.
 Therefore, if the magnitude of $\ket{\st_\pro}$ on eigenvectors with eigenvalues within the range of $\gamma\pm 1/\poly(\secpar)$ is non-negligible, then the ``error" of the projection becomes non-negligible.
 Our final idea is to randomly chooses the threshold $\gamma$ from $T$ possible values.
 Then the expected magnitude on eigenvectors with eigenvalues within the range of $\gamma\pm 1/\poly(\secpar)$ is at most $1/T$ over the random choice of $\gamma$.
 By taking $T$ sufficiently large, we can finally resolve the problem and complete the soundness of parallel repetition version of the Mahadev's protocol. 
 
 It is worth noting that constructing the  efficient projection mentioned above is the most technically involved part in this work. We will provide further explanation before Lemma~\ref{lem:partition} in Section~\ref{sec:proof_of_soundness}. 
 %leave it to Section~\ref{sec:proof_of_soundness} and give more explanation before Lemma~\ref{lem:partition}.  
 
 \paragraph{Two-round protocol via Fiat-Shamir transform.}
 Here, we explain how to convert the parallel repetition version of the Mahadev's protocol to a two-round protocol in the QROM.
 First, we observe that the third message of the Mahadev's protocol is public-coin, and thus the parallel repetition version also satisfies this property. 
 Then by using the Fiat-Shamir transform  \cite{C:FiaSha86}, we can replace the third message with hash value of the transcript up to the second round.
 Though the Fiat-Shamir transform was originally proven sound only in the classical ROM, recent works \cite{C:LiuZha19,C:DFMS19} showed that it is also sound in the QROM.
 This enables us to apply the Fiat-Shamir transform to the parallel repetition version of the Mahadev's protocol to obtain a two-round protocol in the QROM.
 
 \paragraph{Making verification efficient.}
 Finally, we explain how to make the verification efficient.
 Our idea is to delegate the verification procedure itself to the prover by using delegation algorithm for classical computation.
 Since the verification is classical, this seems to work at first glance.
 However, there are the following two problems:
 \begin{enumerate}
      \item  There is not a succinct description of the verification procedure since the verification procedure is specified by the whole transcript whose size is $\poly(T)$ when verifying a language in $\QTIME(T)$. Then the verifier cannot specify the verification procedure to delegate within time $O(\log (T))$.
     \item Since the CVQC protocol is not publicly verifiable, the prover cannot know the description of the verification procedure, which is supposed to be delegated to the prover. 
 \end{enumerate}
 
  We solve the first problem by using the succinct randomized encoding, which enables one to generate a succinct encoding of a Turing machine $M$ and an input $x$ so that the encoding only reveals the information about $M(x)$ and not $M$ or $x$.
 Then our idea is that instead of sending the original first message, the verifier just sends a succinct encoding of $(V_1,s)$ where $V_1$ denotes the Turing machine that takes $s$ as input and works as the first-message-generation algorithm of the CVQC protocol with randomness $PRG(s)$ where $PRG$ is a pseudorandom number generator.
 This enables us to make the transcript of the protocol succinct (i.e., the description size is logarithmic in $T$) so that the verifier can specify the verification procedure succinctly. 
 To be more precise, we have to use the strong output-compressing randomized encoding \cite{AC:BFKSW19}, where the encoding size is independent of the output length of the Turing machine. 
 They construct the strong output-compressing randomized encoding based on iO and other mild assumptions in the common reference string.
 Therefore our CVQC protocol also needs the common reference string.
 
 We solve the second problem by using FHE.
 Namely, the verifier sends an encryption of the trapdoor $\td$ by FHE, and the prover performs the verification procedure over the ciphertext and provides a proof that it honestly applied the homomorphic evaluation by SNARK.
 Then the verifier decrypts the resulting FHE ciphertext and accepts if  the decryption result is ``accept" and the SNARK proof is valid.
 We note that it is recently shown that SNARK exists in the QROM \cite{TCC:ChiManSpo19}, but here is one thing we have to be careful about:
 They only proved the non-adaptive soundness for the SNARK, which only ensures the soundness in the setting where the statement is chosen before making any query to the random oracle.
 However, in the CVQC protocol, a cheating prover may choose a false statement while making queries to the random oracle. 
 To deal with this issue, we first expand the protocol to the four-round protocol where the verifier randomly sends a ``salt" $z$, which is a random string of a certain length, in the third round and the prover uses the ``salted" random oracle $H(z,\cdot)$ for generating the SNARK proof.
 Since the statement to be proven by SNARK is determined up to the second round, and the salting essentially makes the random oracle ``fresh", we can argue the soundness of the CVQC protocol even with the non-adaptive soundness of the SNARK.
 At this point, we obtain four-round CVQC protocol with efficient verification.
 Here, we observe that the third message is just a salt $z$, which is public-coin.
 Therefore we can just apply the Fiat-Shamir transform again to make the protocol two-round.

\subsection{Related Works}
\paragraph{Verification of Quantum Computation.}
There are long line of researches on verification of quantum computation.
Except for solutions relying on computational assumptions, there are two type of settings where verification of quantum computation is known to be possible.
In the first setting, instead of considering purely classical verifier, we assume that a verifier can perform a certain kind of weak quantum computations \cite{FOCS:BroFitKas09,PR:FitKas17,arXiv:ABOEM17,PR:MorFit18}.
In the second setting, we assume that a prover is splitted into two remote servers that share entanglement but do not communicate \cite{Nat:RUV13}.
Though these works do not give a CVQC protocol in our sense, the advantage is that we need not assume any computational assumption for the proof of soundness, and thus they are incomparable to Mahadev's result and ours.

Subsequent to Mahadev's breakthrough result, Gheorghiu and Vidick \cite{FOCS:GheVid19} gave a CVQC protocol that also satisfies blindness, which ensures that a prover cannot learn what computation is delegated.
We note that their protocol requires polynomial number of rounds.

\paragraph{Concurrent Work.}
In a concurrent and independent work, Alagic et al. \cite{arXiv:AlaChiHun19} also shows similar results to our first and second results, parallel repetition theorem for the Madadev's protocol and a two-round CVQC protocol by the Fiat-Shamir transform. We note that our third result, a two-round CVQC protocol with efficient verification, is unique in this paper. 

We mention that we have learned the problem of parallel repetition for Mahadev's protocol from the authors of~\cite{arXiv:AlaChiHun19} on March 2019, but investigated the problem independently later as a steppingstone toward making the verifier efficient. Interestingly, the analyses of parallel repetition in the two works are quite different. Briefly, the analysis in~\cite{arXiv:AlaChiHun19} relies on the observation that for any two different challenges $c_1 \neq c_2 \in \{0,1\}^m$, the projections of an efficient-generated prover's state on the accepting subspaces corresponding to $c_1$ and $c_2$ are almost orthogonal, which leads to an elegant proof of the parallel repetition theorem. On the other hand, the key step of our analysis is to construct an efficient projection that  (approximately) projects the prover's state to two states that will be rejected by the test and Hadamard round with high probability, respectively. We analyze the soundness error of the parallel repetition by applying the efficient projection to each coordinate sequentially. We believe that our construction of the efficient projection is a general technique  which may find other applications.

\section{Preliminaries}
\paragraph{Notations.}
For a bit $b\in\bit$, $\bar{b}$ denotes $1-b$.
For a finite set $\calX$, $x\sample \calX$ means that $x$ is uniformly chosen from $\calX$.
For finite sets $\calX$ and $\calY$, $\func(\calX,\calY)$ denotes the set of all functions with domain $\calX$ and range $\calY$.
A function $f:\mathbb{N}\ra [0,1]$ is said to be negligible if for all polynomial $p$ and sufficiently large $\secpar \in \mathbb{N}$, we have $f(\secpar)< 1/p(\secpar)$ and said to be overwhelming if $1-f$ is negligible.
We denote by $\poly$ an unspecified polynomial and by $\negl$ an unspecified negligible function.
We say that a classical (resp. quantum) algorithm is efficient if it runs in probabilistic polynomial-time (resp. quantum polynominal time).
For a quantum or randomized algorithm $\A$, $y\sample \A(x)$ means that $\A$ is run on input $x$ and outputs $y$ and $y \defeq \A(x;r)$  means that $\A$ is run on input $x$ and randomness $r$ and outputs $y$. 
For an interactive protocol between a ``prover" $P$ and ``verifier" $V$, $y\sample \langle P(x_P),V(x_V)) \rangle(x_P)$ means an interaction between them with prover's private input $x_P$ verifier's private input $x_V$, and common input $x$ outputs $y$.
For a quantum state $\ket{\psi}$, $M_{\regX}\circ \ket{\psi}$ means a measurement in the computational basis on the register $\regX$ of $\ket{\psi}$.
We denote by $\QTIME(T)$ a class of languages decided by a quantum algorithm whose running time is at most $T$.
We use $\secpar$ to denote the security parameter throughout the paper.

\subsection{Learning with Error Problem}
Roughly speaking, the learning with error (LWE) is a problem to solve system of noisy linear equations.
Regev \cite{JACM:Regev09} proved that the hardness of LWE can be reduced to hardness of certain worst-case lattice problems via quantum reductions.
We do not give a definition of LWE in this paper since we use the hardness of LWE only for ensuring the soundness of the Mahadev's protocol (Lemma~\ref{lem:Mah_soundness}), which is used as a black-box manner in the rest of the paper.
Therefore, we use exactly the same assumption as that used in \cite{FOCS:Mahadev18a}, to which we refer for detailed definitions and parameter settings for LWE.

\subsection{Quantum Random Oracle Model}
The quantum random oracle model (QROM) \cite{AC:BDFLSZ11} is an idealized model where a real-world hash function is modeled as a quantum oracle that computes a random function.
More precisely, in the QROM, a random function $H:\calX \ra \cal Y$ of a certain domain $\calX$ and range $\calY$ is uniformly chosen from $\func(\calX,\calY)$ at the beginning, and every party (including an adversary) can access to a quantum oracle $O_H$ that maps $\ket{x}\ket{y}$ to $\ket{x}\ket{y\oplus H(x)}$.
We often abuse notation to denote $\A^{H}$ to mean a quantum algorithm $\A$ is given oracle $O_H$.  

\subsection{Cryptographic Primitives}
\ifnum\submission=1
We give definitions of cryptographic primitives that are used in this paper in Appendix \ref{app:cryptofgraphic_primitives}.
We note that they are only used in Sec~\ref{sec:efficient} where we construct an efficient verifier variant.
\else
Here, we give definitions of cryptographic primitives that are used in this paper.
We note that they are only used in Sec~\ref{sec:efficient} where we construct an efficient verifier variant.

\subsubsection{Pseudorandom Generator}
A post-quantum pseudorandom generator (PRG) is an efficient deterministic classical algorithm $\PRG:\bit^{\ell} \ra \bit^{m}$ such that for any efficient quantum algorithm $\A$, we have
\begin{align*}
    \left|\Pr_{s\sample \bit^{\ell}}[\A(\PRG(s))]-\Pr_{y\sample \bit^{m}}[\A(y)]\right|\leq \negl(\secpar).
\end{align*}

It is known that there exists a post-quantum PRG for any $\ell=\Omega(\secpar)$ and $m=\poly(\secpar)$ assuming post-quantum one-way function \cite{SIAM:HILL99,FOCS:Zhandry12}.
Especially, a post-quantum PRG exists assuming the quantum hardness of LWE.

%\subsubsection{Collision Resistant Hash}
%A family $\famCRH$ of post-quantum collision resistant hash functions (CRH) is a set of functions $\CRH:\bit^{*}\ra \bit^{m}$ such that for any efficient quantum algorithm $\A$, we have 
%\begin{align*}
%    \Pr_{\CRH\sample \famCRH}[\CRH(x)=\CRH(x')\land x\neq x'): (x,x')\sample \A(\CRH)]\leq \negl(\secpar).
%\end{align*}

%It is known that a post-quantum CRH for any $m=\Omega(\secpar)$ assuming quantum hardness of LWE \cite{STOC:Ajtai96}.\footnote{In fact, a weaker assumption which is quantum hardness of the short integer solution (SIS) problem suffices.}

\subsubsection{Fully Homomorphic Encryption}
A post-quantum fully homomorphic encryption consists of four efficient classical algorithm $\Pi_\FHE=\allowbreak (\fhekeygen,\fheenc,\fheeval,\fhedec)$.
\begin{description}
\item[$\fhekeygen(1^\secpar)$:] The key generation algorithm takes the security parameter $1^\secpar$ as input and outputs a public key $\pk$ and a secret key $\sk$.
\item[$\fheenc(\pk,m)$:] The encryption algorithm takes a public key $\pk$ and a message $m$ as input, and outputs a ciphertext $\ct$.
\item[$\fheeval(\pk,C,\ct)$:] The evaluation algorithm takes a public key $\pk$, a classical circuit $C$, and a ciphertext $\ct$, and outputs a evaluated ciphertext $\ct'$.
\item[$\fhedec(\sk,\ct)$:] The decryption algorithm takes secret key $\sk$ and a ciphertext $\ct$ as input and outputs a message $m$ or $\bot$. 
\end{description}
\paragraph{Correctness.}
For all $\secpar\in \mathbb{N}$, $(\pk,\sk)\sample \fhekeygen(1^\secpar)$, $m$ and $C$, we have
\begin{align*}
    \Pr[\fhedec(\sk,\fheenc(\pk,m))=m]=1
\end{align*}
and
\begin{align*}
    \Pr[\fhedec(\sk,\fheeval(\pk,C,\fheenc(\pk,m)))=C(m)]=1.
\end{align*}
\paragraph{Post-Quantum CPA-Security.}
For any efficient quantum adversary $\A=(\A_1,\A_2)$, we have  
\begin{align*}
    &|\Pr[1\sample  \A_2(\ket{\st_{\A}},\ct):(\pk,\sk)\sample \fhekeygen(1^\secpar),(m_0,m_1,\ket{\st_{\A}})\sample \A_1(\pk), \ct\sample \fheenc(\pk,m_0)]\\
    &-\Pr[1\sample  \A_2(\ket{\st_{\A}},\ct):(\pk,\sk)\sample \fhekeygen(1^\secpar),(m_0,m_1,\ket{\st_{\A}})\sample \A_1(\pk), \ct\sample \fheenc(\pk,m_1)]|\\
&\leq \negl(\secpar).
\end{align*}

FHE is usually constructed by first constructing \textit{leveled} FHE, where we have to upper bound the depth of a circuit to evaluate at the setup, and then converting it to FHE by the technique called bootstrapping~\cite{STOC:Gentry09}. 
There have been many constructions of leveled FHE whose (post-quantum) security can be reduced to the (quantum) hardness of LWE \cite{FOCS:BraVai11,ITCS:BraGenVai12,C:Brakerski12,C:GenSahWat13}.
FHE can be obtained assuming that any of these schemes is \textit{circular secure} \cite{EC:CamLys01} so that it can be upgraded into FHE via bootstrapping.
%which requires that the scheme remains secure even if an adversary sees an encryption of the secret key.
We note that Canetti et al. \cite{TCC:CLTV15} gave an alternative transformation from leveled FHE to FHE based on subexponentially secure iO.

\subsubsection{Strong Output-Compressing Randomized Encoding}
A strong output-compressing randomized encoding \cite{AC:BFKSW19} consists of three efficient classical  algorithms $(\rsetup,\renc,\rdec)$.
\begin{description}
\item[$\rsetup(1^\secpar,1^\ell,\crs)$]: It takes the security parameter $1^\secpar$, output-bound $\ell$, and a common reference string $\crs\in\bit^{\ell}$ and outputs a encoding key $\ek$. 
\item[$\renc(\ek,M,\inp,T)$:] It takes an encoding key $\ek$, Turing machine $M$, an input $\inp\in \bit^*$, and a time-bound $T\leq 2^{\secpar}$ (in binary) as input and outputs an encoding $\Menc$.
\item[$\rdec(\crs,\Menc)$:] It takes a common reference string $\crs$ and an encoding $\Menc$ as input and outputs $\out \in \bit^* \cup \{\bot\}$.
\end{description}

\paragraph{Correctness.}
For any $\secpar\in\mathbb{N}$, $\ell,T\in \mathbb{N}$, $\crs\in \bit^{\ell}$, Turing machine $M$ and input $\inp\in \bit^*$ such that $M(\inp)$ halts in at most $T$ steps and returns a string whose length is at most $\ell$, we have 
\begin{align*}
    \Pr\left[\rdec(\Menc,\crs)=M(\inp): \ek\sample\rsetup(1^\secpar,1^\ell,\crs), \Menc\sample\renc(\ek,M,\inp,T)\right]=1.
\end{align*}

\paragraph{Efficiency.}
There exists polynomials $p_1,p_2,p_3$ such that for all $\secpar\in \mathbb{N}$, $\ell\in \mathbb{N}$, $\crs\sample \bit^{\ell}$:
\begin{itemize}
    \item If $\ek\sample \rsetup(1^\secpar,1^\ell,\crs)$, $|\ek|\leq p_1(\secpar, \log \ell)$.
    \item For every Turing machine $M$, time bound $T$, input $\inp\in \bit^{*}$, if  $\Menc\sample\renc(\ek,M,\inp,T)$, then $|\Menc|\leq p_2(|M|,|\inp|,\log T, \log \ell, \secpar)$,
    \item The running time of $\rdec(\crs,\Menc)$ is at most $\min(T,\Time(M,x))\cdot p_3(\secpar,\log T)$
\end{itemize}

\paragraph{Post-Quantum Security.}
%For any efficient classical adversary $\A=(\A_1,\A_2)$, there exists a simulator $\calS$ such that for all efficient quantum distinguisher $\calD$ such that... 

There exists a simulator $\calS$ such that for any $M$ and $\inp$ such that $M(\inp)$ halts in $T^*\leq T$ steps and $|M(\inp)|\leq \ell$ and efficient quantum adversary $\A$,
\begin{align*}
    &|\Pr[1 \sample \A(\crs,\ek,\Menc):\crs\sample \bit^{\ell},\ek\sample\rsetup(1^\secpar,1^\ell,\crs),\Menc\sample\renc(\ek,M,\inp,T)]\\
    &-\Pr[1 \sample \A(\crs,\ek,\Menc):(\crs,\Menc)\sample \calS(1^{\secpar},1^{|M|},1^{|\inp|},M(\inp),T^*),\ek\sample\rsetup(1^\secpar,1^\ell,\crs)]|\leq \negl(\secpar).
\end{align*}

Badrinarayanan et al.~\cite{AC:BFKSW19} gave a construction of strong output-compressing randomized encoding based on iO and the LWE assumption. 

\subsubsection{SNARK in the QROM}
Let $H:\bit^{2\secpar}\ra \bit^{\secpar}$ be a quantum random oracle.
A SNARK for an $\NP$ language $\lang$ associated with a relation $\rela$ in the QROM consists of two efficient oracle-aided classical algorithms $\pro_\snark^H$ and $\ver_\snark^H$.
\begin{description}
\item[$\pro_\snark^H$:] It is an instance $x$ and a witness $w$ as input and outputs a proof $\pi$.
\item[$\ver_\snark^H$:] It is an instance $x$ and a proof $\pi$ as input and outputs $\top$ indicating acceptance or $\bot$ indicating rejection.
\end{description}
We require SNARK to satisfy the following properties:

\noindent\textbf{Completeness.}
For any $(x,w)\in\rela$, we have 
\begin{align*}
    \Pr_{H}[\ver_\snark^H(x,\pi)=\top:\pi\sample \pro_\snark^H(x,w)]=1.
\end{align*}

\noindent\textbf{Extractability.}
There exists an efficient quantum extractor $\ext$ such that
for any $x$ and a malicious quantum prover $\tilde{\pro}_\snark^{H}$ making at most $q=\poly(\secpar)$ queries, if 
\begin{align*}
    \Pr_H[\ver_\snark^H(x,\pi):\pi \sample \tilde{\pro}_\snark^H(x)]
\end{align*}
is non-negligible in $\secpar$, then 
\begin{align*}
    \Pr_H[(x,w)\in \rela: w\sample \ext^{\tilde{\pro}_\snark}(x,1^q,1^\secpar)]
\end{align*}
is non-negligible in $\secpar$.\\

\noindent\textbf{Efficient Verification.}
If we can verify that $(x,w)\in \rela$ in classical time $T$, then for any $\pi \sample \tilde{\pro}_\snark^H(x)$, $\ver_\snark^H(x,\pi)$ runs in classical time $\poly(\secpar,|x|,\log T)$.

Chiesa et al.~\cite{TCC:ChiManSpo19} showed that there exists SNARK in the QROM that satisfies the above properties.
\fi

\subsection{Lemma}
Here, we give a simple lemma, which is used in the proof of soundness of parallel repetition version of the Mahadev's protocol in Sec. \ref{sec:proof_of_soundness}.
\begin{lemma}\label{lem:Cauchy-Schwarz}
Let $\ket{\psi}=\sum_{i=1}^{m} \ket{\psi_i}$ be a quantum state and $M$ be a projective measurement.
Then we have
\begin{align*}
    \Pr[M\circ \ket{\psi}=1]\leq m \sum_{i=1}^{m}\|\ket{\psi_i}\|^2 \Pr[M\circ \frac{\ket{\psi_i}}{\|\ket{\psi_i}\|}=1]
\end{align*}
\end{lemma}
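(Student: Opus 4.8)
The plan is to reduce the statement to a purely linear-algebraic inequality about norms of projected vectors, and then finish it off with the triangle inequality followed by Cauchy--Schwarz.

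First I would fix notation. Since $M$ is a projective measurement, let $\Pi$ denote the projector corresponding to outcome $1$, so that for any unit vector $\ket{\phi}$ we have $\Pr[M\circ\ket{\phi}=1]=\|\Pi\ket{\phi}\|^2$. (If some $\ket{\psi_i}=0$ the corresponding term on the right-hand side is read as $0$, i.e. we interpret $\|\ket{\psi_i}\|^2\Pr[M\circ\frac{\ket{\psi_i}}{\|\ket{\psi_i}\|}=1]$ as $\|\Pi\ket{\psi_i}\|^2$; for cleanliness I will just assume all $\ket{\psi_i}\neq 0$.) Writing $v_i:=\Pi\ket{\psi_i}$, linearity of $\Pi$ gives that the left-hand side equals $\|\Pi\ket{\psi}\|^2=\bigl\|\sum_{i=1}^m v_i\bigr\|^2$, while the right-hand side equals $m\sum_{i=1}^m\|v_i\|^2$, since $\|\ket{\psi_i}\|^2\Pr[M\circ\frac{\ket{\psi_i}}{\|\ket{\psi_i}\|}=1]=\|\ket{\psi_i}\|^2\cdot\frac{\|\Pi\ket{\psi_i}\|^2}{\|\ket{\psi_i}\|^2}=\|v_i\|^2$. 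So the lemma is equivalent to the inequality $\bigl\|\sum_{i=1}^m v_i\bigr\|^2\le m\sum_{i=1}^m\|v_i\|^2$ for arbitrary vectors $v_1,\dots,v_m$ in a Hilbert space.

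This last inequality is immediate: by the triangle inequality $\bigl\|\sum_i v_i\bigr\|\le\sum_i\|v_i\|$, and then Cauchy--Schwarz against the all-ones vector gives $\bigl(\sum_i\|v_i\|\bigr)^2\le m\sum_i\|v_i\|^2$. Equivalently one can expand $\bigl\|\sum_i v_i\bigr\|^2=\sum_{i,j}\langle v_i,v_j\rangle\le\sum_{i,j}\|v_i\|\,\|v_j\|=\bigl(\sum_i\|v_i\|\bigr)^2$ and finish in the same way. Combining this with the two identifications from the previous paragraph yields the claim.

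There is essentially no obstacle here; the only points needing (minor) care are the normalization convention — that $\ket{\psi}$ and each $\ket{\psi_i}/\|\ket{\psi_i}\|$ are unit vectors, so that the probabilities in the statement are literally squared norms of projections — and the degenerate case $\ket{\psi_i}=0$, handled by the convention noted above. It is also worth remarking that the factor $m$ is necessary in general: the bound is tight precisely when all the $v_i=\Pi\ket{\psi_i}$ coincide.
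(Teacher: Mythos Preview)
Your proof is correct and follows essentially the same approach as the paper: both identify the projector $\Pi$, reduce the claim to $\bigl\|\sum_i \Pi\ket{\psi_i}\bigr\|^2 \le m\sum_i \|\Pi\ket{\psi_i}\|^2$, and finish by Cauchy--Schwarz. Your treatment is slightly more explicit (handling the degenerate $\ket{\psi_i}=0$ case and noting tightness), but the argument is the same.
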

\ifnum\submission=1
The proof can be found in Appendix \ref{app:proof_CS}
\else
\begin{proof}
Since $M$ is a projective measurement, there exists a projection $\Pi$ such that 
\begin{align*}
    \Pr[M\circ \ket{\psi}=1]=\bra{\psi} \Pi \ket{\psi}.
\end{align*}
Then we have 
\begin{align*}
    \bra{\psi} \Pi \ket{\psi}&%=\left(\sum_{i=1}^{m} \bra{\psi_i}\right) \Pi \left(\sum_{i=1}^{m} \ket{\psi_i}\right)\\
    =\|\sum_{i=1}^{m} \Pi\ket{\psi_i}\|^2\\
    &\leq m \sum_{i=1}^{m}\|\Pi\ket{\psi_i}\|^2\\
    &=m\sum_{i=1}^{m}\bra{\psi_i} \Pi \ket{\psi_i}\\
    &=m \sum_{i=1}^{m}\|\ket{\psi_i}\|^2 \Pr[M\circ \frac{\ket{\psi_i}}{\|\ket{\psi_i}\|}=1]
\end{align*}
where we used the Cauchy-Schwarz inequality from the second to third lines.
\end{proof}
\fi

\section{Parallel Repetition of Mahadev's Protocol}

\subsection{Overview of Mahadev's Protocol}\label{sec:mahadev_overview}
Here, we recall the Mahadev's protocol \cite{FOCS:Mahadev18a}. We only give a high-level description of the protocol and properties of it and omit the details since they are not needed to show our result. 

The protocol is run between a quantum prover $\pro$ and a classical verifier $\ver$ on a common input $x$. The aim of the protocol is to enable a verifier to classically verify $x\in \lang$ for a BQP language $\lang$ with the help of interactions with a quantum prover.
The protocol is a 4-round protocol where the first message is sent from $\ver$ to $\pro$. 
We denote the $i$-th message generation algorithm by $\ver_i$ for $i\in\{1,3\}$ or $\pro_i$ for $i\in \{2,4\}$ and denote the verifier's final decision algorithm by $\ver_\out$.
Then a high-level description of the protocol is given below.
\begin{description}
\item[$\ver_1$:] On input the security parameter $1^\secpar$ and $x$, it generates a pair $(\key,\td)$ of a``key" and ``trapdoor", sends $\key$ to $\pro$, and keeps $\td$ as its internal state.
\item[$\pro_2$:] On input $x$ and $\key$, it generates a classical ``commitment" $\comy$ along with a quantum state $\ket{\st_\pro}$, sends $\comy$ to $\pro$, and keeps $\ket{\st_\pro}$ as its internal state.
\item[$\ver_3$:] It randomly picks a ``challenge" $c\sample \bit$ and sends $c$ to $\pro$.\footnote{The third message is just a public-coin, and does not depend on the transcript so far or $x$.}
Following the terminology in \cite{FOCS:Mahadev18a}, we call the case of $c=0$ the ``test round" and the case of $c=1$ the ``Hadamard round".
\item[$\pro_4$:] On input $\ket{\st_\pro}$ and $c$, it generates a classical ``answer" $\ans$ and sends $\ans$ to $\pro$.
\item[$\ver_\out$:] On input $\key$, $\td$, $y$, $c$, and $\ans$, it returns $\top$ indicating acceptance or $\bot$ indicating rejection.
In case $c=0$, the verification can be done publicly, that is, $\ver_\out$ need not take $\td$ as input.
\end{description}

For the protocol, we have the following properties:\\
\noindent\textbf{Completeness:}
For all $x\in \lang$, we have $\Pr[\langle \pro,\ver \rangle(x)]=\bot]= \negl(\secpar)$.\\
\noindent\textbf{Soundness:}
If the LWE problem is hard for quantum polynomial-time algorithms, then for any $x\notin \lang$ and a quantum polynomial-time cheating prover $\pro^*$, we have  $\Pr[\langle \pro^*,\ver \rangle(x)]=\bot]\leq 3/4$.

We need a slightly different form of soundness implicitly shown in \cite{FOCS:Mahadev18a}, which roughly says that if a cheating prover can pass the ``test round" (i.e., the case of $c=0$) with overwhelming probability, then it can pass the ``Hadamard round" (i.e., the case of $c=1$) only with a negligible probability. 
\begin{lemma}[implicit in \cite{FOCS:Mahadev18a}]\label{lem:Mah_soundness}
If the LWE problem is hard for quantum polynomial-time algorithms, then for any $x\notin \lang$ and a quantum polynomial-time cheating prover $\pro^*$ such that  $\Pr[\langle \pro^*,\ver \rangle(x)]=\bot\mid c=0]=\negl(\secpar)$, we have $\Pr[\langle \pro^*,\ver \rangle(x)]=\top\mid c=1]=\negl(\secpar)$.
\end{lemma}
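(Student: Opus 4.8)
The plan is to observe that Lemma~\ref{lem:Mah_soundness} is a precise rephrasing of the soundness analysis Mahadev carries out for her \emph{measurement protocol}, transported to the CVQC protocol via the composition with the (amplified) Morimae--Fitzsimons protocol of~\cite[Section 8]{FOCS:Mahadev18a}. So the first step is to recall the two-layer structure of her construction: on input $x$, the CVQC protocol fixes, from the BQP verification circuit for $x$, a local ($XZ$-)Hamiltonian $H_x$ via the amplified Morimae--Fitzsimons encoding, normalized so that $x\in\lang$ implies $H_x$ has a state of energy $\le\negl(\secpar)$ while $x\notin\lang$ implies every state has energy bounded away from the minimum by $1-\negl(\secpar)$; it then runs the measurement protocol, in which $\pro$ commits (via the extended trapdoor claw-free family built from LWE) to an $n$-qubit state, the test round $c=0$ checks consistency of the commitment, and the Hadamard round $c=1$ asks $\pro$ to report outcomes of measuring the committed state in a basis $h$ sampled from a random term of $H_x$, accepting iff the reported outcomes witness low energy of that term.

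Next I would invoke Mahadev's soundness theorem for the measurement protocol. Assuming the quantum hardness of LWE, her analysis shows that for any QPT $\pro^*$ passing the test round with probability $1-\negl(\secpar)$ there is an efficiently preparable (sub-normalized) $n$-qubit state $\rho$ such that, for the basis distribution used by $\ver_3$, the distribution of $\pro^*$'s reported Hadamard-round answers is computationally indistinguishable from the distribution obtained by honestly measuring $\rho$ in the sampled basis. This is the only place the LWE assumption enters; it is exactly the binding/collapsing content of the trapdoor claw-free family together with her commitment analysis, and its hypothesis is precisely $\Pr[\langle \pro^*,\ver\rangle(x)=\bot\mid c=0]=\negl(\secpar)$.

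Then I would combine the two layers: because $x\notin\lang$, the amplified Hamiltonian $H_x$ has the large promise gap, so honestly measuring \emph{any} $n$-qubit state (in particular $\rho$) in a random term's basis passes the Hadamard check with probability at most $\negl(\secpar)$; the efficiently preparable, sub-normalized nature of $\rho$ is what licenses applying the computational indistinguishability from the previous step, yielding that $\pro^*$'s actual Hadamard-round acceptance probability differs from this honest quantity by at most $\negl(\secpar)$. Hence $\Pr[\langle \pro^*,\ver\rangle(x)=\top\mid c=1]\le\negl(\secpar)+\negl(\secpar)=\negl(\secpar)$, which is the claim. Finally I would verify the reduction in~\cite[Section 8]{FOCS:Mahadev18a}: the composition keeps the test round verbatim and only changes how the Hadamard-round answer is scored, so both conditional probabilities above are unchanged, and the ``amplification'' of Morimae--Fitzsimons is exactly what upgrades a constant promise gap to the $1-\negl(\secpar)$ gap used here.

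I expect the only real obstacle to be bookkeeping rather than a new idea. Mahadev's theorem statements are phrased in terms of a combined acceptance quantity (or with $c$ drawn uniformly), so some care is needed to isolate the two conditional probabilities and to track the quantifier order in ``passes the test round with probability $1-\negl(\secpar)$'' (over a fixed basis $h$ versus over the protocol's random $h$). One should also keep in mind that the indistinguishable state $\rho$ is only sub-normalized and only efficiently preparable given $\pro^*$'s code and the commitment, which is harmless here since the honest-acceptance bound we feed into the indistinguishability is purely information-theoretic. No step requires reopening the internals of the claw-free family: Lemma~\ref{lem:Mah_soundness} is used as a black box throughout the rest of the paper, and its proof is correspondingly a black-box repackaging of~\cite{FOCS:Mahadev18a}, making rigorous the informal Lemma~\ref{lem:Mah_soundness_informal}.
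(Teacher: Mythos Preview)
Your proposal is correct and matches the paper's treatment: the paper does not give a standalone proof of Lemma~\ref{lem:Mah_soundness} but simply records it as implicit in~\cite{FOCS:Mahadev18a}, with the footnote to Lemma~\ref{lem:Mah_soundness_informal} noting that Mahadev proves this for the measurement protocol and that the CVQC statement follows by composing with the amplified Morimae--Fitzsimons protocol as in~\cite[Section~8]{FOCS:Mahadev18a}. Your write-up is a faithful and more detailed unpacking of exactly that route.
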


We will also use the following simple fact:
\begin{fact}\label{fact:perfectly_pass_test}
There exists an efficient prover that passes the test round with probability $1$ (but passes the Hadamard round with probability $0$) even if $x\notin \lang$. 
\end{fact}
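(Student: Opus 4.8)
The plan is to exhibit such a prover explicitly, exploiting the fact that passing the test round in Mahadev's protocol is a purely ``structural'' check that has nothing to do with whether $x\in\lang$. Recall that in an honest execution, given the key $\key$, the prover $\pro_2$ first fixes some claimed $n$-qubit state $\ket{\phi}$ (for a true instance this is an accepting history/witness state), coherently evaluates the trapdoor claw-free function indexed by $\key$ on a fresh register, and measures the image register to obtain the classical commitment $\comy$; the post-measurement state $\ket{\st_\pro}$ is then supported on the (one or two) preimages of $\comy$ together with the committed-value registers. The point is that, by construction of this committing step, the residual state is \emph{always} a well-formed commitment to $\comy$: measuring the preimage and committed-value registers in the computational basis always yields a pair satisfying the relation that $\ver_\out$ checks when $c=0$. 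This holds for every choice of $\ket{\phi}$ — in particular for $\ket{\phi}=\ket{0^n}$ — and makes no reference to $x$.

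Accordingly, I would define the cheating prover $\pro^*$ as follows. On input $\key$, it runs the honest commitment procedure of $\pro_2$ but with the trivial claimed state $\ket{0^n}$ in place of the history state, obtaining $\comy$ and $\ket{\st_\pro}$, and sends $\comy$. On receiving the challenge $c$: if $c=0$, it runs the honest test-round answer algorithm $\pro_4$ on $\ket{\st_\pro}$ and sends the result; if $c=1$, it sends a fixed string (say $\bot$) that $\ver_\out$ always rejects. This $\pro^*$ is quantum polynomial time since it only runs (a part of) the honest prover.

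It then remains to check the two claims. When $c=0$, the message of $\pro^*$ is distributed exactly as an honest opening of a well-formed commitment, so the structural check of $\ver_\out$ accepts with probability $1$; this uses only that Mahadev's committing step always produces valid commitments, which is an $x$-independent property of the construction in \cite{FOCS:Mahadev18a}. When $c=1$, $\pro^*$ sends a string that is rejected by definition, so the acceptance probability is $0$. The only delicate point, if one insists on probability exactly $1$ in the test round rather than $1-\negl(\secpar)$, is that the honest committing step has no intrinsic failure probability; since the completeness error of Mahadev's protocol arises solely from the Hadamard-round/Hamiltonian check and not from committing, this is indeed the case. (If one is content with $1-\negl(\secpar)$ in the test round, the argument is even simpler: the honest prover for \emph{any} $x$ passes the test round with that probability, and its Hadamard-round behaviour is irrelevant once we override it with the aborting answer.) I do not expect any real obstacle here — the entire content is the observation that the test-round verification is independent of $x$ and of the committed state, which we may take for granted given the description of Mahadev's protocol.
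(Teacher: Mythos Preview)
The paper does not prove this statement at all: it is labelled a \emph{Fact} and simply asserted as an immediate property of Mahadev's construction, with no argument given. Your proposal supplies exactly the justification one would expect, namely that the test-round check is a structural preimage check independent of $x$ and of the committed state, so honestly committing to an arbitrary state (e.g.\ $\ket{0^n}$) always passes it, while aborting on $c=1$ trivially gives Hadamard-round success $0$. This is correct and is the intended reason the fact holds; there is nothing to compare against on the paper's side.
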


\subsection{Parallel Repetition}
Here, we prove that the parallel repetition of the Mahadev's protocol decrease the soundness bound to be negligible.
Let $\pro^m$ and $\ver^m$ be $m$-parallel repetitions of the honest prover $\pro$ and verifier $\ver$ in the Mahadev's protocol. Then we have the following:
\begin{theorem}[Completeness]\label{thm:rep_completeness}
For all $m= \Omega(\log^2(\secpar))$, for all $x\in \lang$, we have $\Pr[\langle \pro^m,\ver^m \rangle(x)]=\bot]= \negl(\secpar)$.\\
\end{theorem}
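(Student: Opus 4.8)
The plan is to prove Theorem~\ref{thm:rep_completeness} by a simple union bound over the $m$ coordinates, reducing it to the negligible completeness error of the single-execution Mahadev protocol recorded in Section~\ref{sec:mahadev_overview}. The first step is to observe that $\pro^m$ and $\ver^m$ run the $m$ copies of the protocol with completely independent randomness and without any cross-coordinate entanglement: $\ver^m$ samples the pairs $(\key_i,\td_i)$ for $i\in[m]$ independently in the first round, $\pro^m$ prepares $m$ mutually unentangled registers holding $\ket{\st_{\pro,1}},\dots,\ket{\st_{\pro,m}}$ in the second round, $\ver^m$ picks $c=(c_1,\dots,c_m)\sample\bit^m$ coordinatewise uniformly in the third round, and $\pro^m$'s answer in coordinate $i$ is generated from $\ket{\st_{\pro,i}}$ and $c_i$ alone. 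Consequently the tuple of $m$ transcripts is distributed exactly as $m$ independent copies of the single-execution transcript on input $x$; in particular, for each fixed $i$ the probability that coordinate $i$ is rejected (i.e.\ $\ver_\out(\key_i,\td_i,\comy_i,c_i,\ans_i)=\bot$) equals $\Pr[\langle\pro,\ver\rangle(x)=\bot]$, which is $\negl(\secpar)$ for $x\in\lang$ by completeness of the Mahadev protocol.

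Given this, since $\ver^m$ accepts if and only if it accepts in all $m$ coordinates, a union bound yields
\begin{align*}
\Pr[\langle\pro^m,\ver^m\rangle(x)=\bot]\;\le\;\sum_{i=1}^{m}\Pr[\text{coordinate }i\text{ rejected}]\;=\;m\cdot\negl(\secpar),
\end{align*}
and since $\pro^m$ and $\ver^m$ are efficient we have $m=\poly(\secpar)$, so the right-hand side is again $\negl(\secpar)$; equivalently, the acceptance probability is $(1-\negl(\secpar))^m\ge 1-m\cdot\negl(\secpar)$. I would also note that the hypothesis $m=\Omega(\log^2\secpar)$ plays no role in completeness: it is imposed only so that the statement is stated in the same parameter regime as the soundness analysis of Section~\ref{sec:proof_of_soundness}.

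I expect no genuine obstacle in this direction. The only point worth stating carefully is the independence of the coordinates for the \emph{honest} parallel strategy: because $\ket{\st_{\pro,i}}$ carries no correlation with the other copies and $c_i$ does not influence coordinate $j\neq i$, per-coordinate error probabilities simply add, and none of the machinery developed for soundness is needed. I would flag this contrast explicitly, since it is precisely the adaptivity and cross-coordinate entanglement available to a \emph{cheating} prover — which invalidates exactly this kind of coordinatewise decoupling — that makes the soundness analysis of the parallel repetition the hard part of this section.
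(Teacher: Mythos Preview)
Your proposal is correct and is precisely the argument the paper has in mind: the paper simply states that completeness ``easily follows from the completeness of the Mahadev's protocol'' without further detail, and your union-bound over the $m$ independent honest executions is the standard way to spell that out. Your remark that the lower bound $m=\Omega(\log^2\secpar)$ is irrelevant for completeness is also accurate.
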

\begin{theorem}[Soundness]\label{thm:rep_soundness}
For all $m= \Omega(\log^2(\secpar))$, if the LWE problem is hard for quantum polynomial-time algorithms, then for any $x\notin \lang$ and a quantum polynomial-time cheating prover $\pro^*$, we have  $\Pr[\langle \pro^*,\ver^m \rangle(x)]=\top]\leq \negl(\secpar)$.
\end{theorem}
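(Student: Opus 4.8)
The plan is to prove the soundness of the $m$-fold parallel repetition by reducing, via a hybrid/coordinate-by-coordinate argument, to the single-shot soundness lemma (Lemma~\ref{lem:Mah_soundness}), using the ``efficient approximate projection'' strategy sketched in the technical overview. First I would argue that it suffices to treat $m = O(\log \secpar)$: a super-logarithmic $m$ can be handled by grouping coordinates into $O(\log\secpar)$-size blocks and using that if the full protocol had non-negligible soundness error, some block would too, so the exponential losses in $m$ that appear below are affordable. Fix $x \notin \lang$ and an efficient cheating prover $\pro^*$; without loss of generality $\pro^*$ prepares $\ket{\st_{\pro^*}}$ after the second message, and upon receiving $c \in \bit^m$ applies a fixed unitary $U$ on $\hil_{\regC}\ot\hil_{\pro}$ and measures to produce all answers. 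The goal is to show $\Pr[\langle \pro^*, \ver^m\rangle(x) = \top] \le \negl(\secpar)$.

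The core is to build, for each coordinate $i \in [m]$, an \emph{efficiently implementable} measurement/projection that splits $\ket{\st_{\pro^*}}$ into a ``$S_{i,0}$ part'' (states that pass the $i$-th test round with probability $< \gamma$, hence pass the randomized $i$-th challenge with probability bounded away from $1$) and a ``$S_{i,1}$ part'' (states that pass the $i$-th test round with probability $\ge \gamma$). For the $b=1$ part I would invoke the QMA-amplification idea of \cite{MW05,NWZ09}: define the relevant operator as the ``pass the $i$-th test round'' POVM element averaged over $c_{-i} \sample \bit^{m-1}$ composed with $\pro^*$'s answer unitary, take $S_{i,0}, S_{i,1}$ to be spanned by its eigenvectors below/above threshold $\gamma$, and then amplify the acceptance probability of an $S_{i,1}$ state to $1-\negl$ by $O(1/\gamma)$ rounds of the reflect-and-test procedure (which preserves the eigenvector). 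This lets me build an efficient prover $\pro^*_i$ that passes the (single-shot) test round with overwhelming probability, so Lemma~\ref{lem:Mah_soundness} applies and shows the $S_{i,1}$ part passes the $i$-th Hadamard round with negligible probability; the $S_{i,0}$ part passes the $i$-th challenge round with probability $\le 2^{m-1}\gamma$ by a union bound over $c_{-i}$, which is ``small'' for appropriate $\gamma$. To implement the projection onto $S_{i,1}$ efficiently I would use phase estimation of that operator, which only achieves precision $1/\poly(\secpar)$; to control the resulting error I randomize the threshold $\gamma$ over $T$ values so that the expected weight of $\ket{\st_{\pro^*}}$ on eigenvalues within $\gamma \pm 1/\poly(\secpar)$ is $\le 1/T$, which is then negligible by taking $T$ large.

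With the efficient approximate projections in hand, the outer argument runs as follows. Apply the coordinate-$i$ projection sequentially for $i = 1, \dots, m$ (recording, for the challenge $c$ the honest verifier will pick, which of $S_{i,0}/S_{i,1}$ we keep relative to $c_i$), obtaining a decomposition $\ket{\st_{\pro^*}} \approx \ket{\psi_1} + \dots + \ket{\psi_m} + \ket{\psi'}$ where $\ket{\psi_i}$ lies (approximately) in $S_{i,c_i}$ and, since each random projection halves the expected norm, $E_{c \sample \bit^m}[\|\ket{\psi'}\|] \le 2^{-m}$. Each $\ket{\psi_i}$, being in $S_{i,c_i}$, leads to acceptance on the $i$-th coordinate (hence on all coordinates) with ``small'' probability — for $c_i = 0$ directly, for $c_i = 1$ via the amplification-plus-Lemma~\ref{lem:Mah_soundness} route above. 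Then Lemma~\ref{lem:Cauchy-Schwarz} with the $m+1$ terms bounds $\Pr[M\circ\ket{\st_{\pro^*}} = 1]$ (accept on all coordinates) by $(m+1)\big(\sum_i \|\ket{\psi_i}\|^2 \cdot (\text{small}) + \|\ket{\psi'}\|^2 \cdot 1\big)$, which in expectation over $c$ is at most $(m+1)(\text{small} + 2^{-2m})$; choosing the ``small'' probability $\le 1/p(\secpar)$ for any desired polynomial $p$ (affordable by the parameter choices and the reduction to $m = O(\log\secpar)$) and absorbing the approximation errors from phase estimation (negligible by the randomized threshold) gives the claimed negligible soundness error.

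The main obstacle I anticipate is making the ``efficient approximate projection'' fully rigorous: one must simultaneously (i) define $S_{i,0}, S_{i,1}$ as spectral subspaces of an operator for which the $b=1$ amplification is valid and for which Lemma~\ref{lem:Mah_soundness} can legitimately be applied to an \emph{efficient} prover that regenerates the $S_{i,1}$ component, (ii) quantify how the $1/\poly$-precision phase estimation, together with the randomized threshold, degrades both the ``small acceptance probability'' guarantees and the norm bookkeeping, and (iii) verify that applying these projections sequentially across coordinates does not blow up the accumulated error — each projection acts on a state already perturbed by the previous ones, so the errors must be shown to add (not multiply) and to stay negligible after summing over the $O(\log\secpar)$ coordinates. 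Handling the non-commutativity of the per-coordinate operators and the fact that $\pro^*$'s answer unitary entangles all coordinates is exactly where the bulk of the technical work (the content around Lemma~\ref{lem:partition}) will lie.
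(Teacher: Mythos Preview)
Your proposal is correct and follows essentially the same approach as the paper: the reduction to $m=O(\log\secpar)$ by restricting to the first $m$ coordinates, the spectral decomposition of $\hil_\pro$ via the Jordan/NWZ two-projector structure, the phase-estimation-based efficient approximate projection with a randomized threshold $\gamma$ (to control the ``grey zone'' mass), the $b=0$ bound via the $2^{m-1}\gamma$ union bound over $c_{-i}$, the $b=1$ bound via Marriott--Watrous amplification plus Lemma~\ref{lem:Mah_soundness}, and the final assembly via sequential projections and Lemma~\ref{lem:Cauchy-Schwarz} all match the paper's Lemmas~\ref{lem:partition} and~\ref{lem:partition_further} and the concluding calculation. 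The obstacles you flag in (i)--(iii) are exactly the technical content of Lemma~\ref{lem:partition} and its proof; the only minor slips are that the halving gives $E_c[\|\ket{\psi'}\|^2]\le 2^{-m}$ (squared norm), and the reduction to $m=O(\log\secpar)$ is more simply phrased as ``ignore all but the first $m$ coordinates'' rather than a block argument.
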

The completeness (Theorem~\ref{thm:rep_completeness}) easily follows from the completeness of the Mahadev's protocol.
In the next subsection, we prove the soundness (Theorem~\ref{thm:rep_soundness}).

\subsection{Proof of Soundness}\label{sec:proof_of_soundness}
First, we remark that it suffices to show that for any $\mu=1/\poly(n)$, there exists $m=O(\log(n))$ such that the success probability of the cheating prover is at most $\mu$.
This is because we are considering $\omega(\log(n))$-parallel repetition, in which case the number of repetitions is larger than   any $m=O(\log(n))$ for sufficiently large $n$, and thus we can just focus on  the first $m$ coordinates ignoring the rest of the coordinates.  
Thus, we prove the above claim in this section.

\noindent\textbf{Characterization of cheating prover.}
Any cheating prover can be characterized by a tuple $(U_0,U)$ of unitaries over Hilbert space $\hil_{\regC}\otimes \hil_{\regX}\otimes \hil_{\regZ} \otimes \hil_{\regY}  \otimes \hil_{\regK}$. 
A prover characterized by $(U_0,U)$ works as follows.\footnote{Here, we hardwire into the cheating prover the instance $x\notin \lang$ on which it will cheat instead of giving it as an input.}
\begin{description}
\item[Second Message:] Upon receiving $k=(\key_1,...,\key_m)$, it applies $U_0$ to the state $\ket{0}_{\regX}\otimes\ket{0}_{\regZ}\otimes\ket{0}_{\regY}\otimes  \ket{k}_{\regK}$, and then measures the $Y$ register to obtain $y=(\comy_1,...,\comy_m)$. Then it sends $\bfy$ to $\ver$ and keeps the resulting state $\ket{\psi(k,y)}_{\regX,\regZ}$ over  $\hil_{\regX,\regZ}$.
\item [Forth Message:] Upon receiving $c\in \bit^{m}$, it applies $U$ to $\ket{c}_{\regC}\ket{\psi(k,y)}_{\regX,\regZ}$ and then measures the $\regX$ register in computational basis to obtain $a=(a_1,...,a_m)$. We denote the designated register for $a_i$ by $\regX_i$. 
\end{description}

Here, we first introduce a general lemma about two projectors that was shown by Nagaj, Wocjan, and Zhang \cite{NWZ09} by using the Jordan's lemma.
\begin{lemma}[{\cite[Appendix A]{NWZ09}}]\label{lem:decomposition}
Let $\Pi_0$ and $\Pi_1$ be projectors on an $N$-dimensional Hilbert space $\hil$ and let $R_0:= 2\Pi_0-I$, $R_1:= 2\Pi_1-I$, and $Q:=R_1R_0$. $\hil$ can be decomposed into two-dimensional subspaces $S_j$ for $j\in[\ell]$ and $N-2\ell$ one-dimensional subspaces $T_j^{(bc)}$ for $b,c\in \bit$ that satisfies the following properties:
\begin{enumerate}
\item For each two-dimensional subspace $S_j$, there exist two orthonormal bases $(\ket{\alpha_j},\ket{\alpha_j^{\bot}})$ and $(\ket{\beta_j},\ket{\beta_j^{\bot}})$ of $S_j$ such that $\ipro{\alpha_j}{\beta_j}$ is a positive real and for all $ \ket{s}\in S_j$,  $\Pi_0 \ket{s} = \ipro{\alpha_j}{s}\ket{\alpha_j}$ and $\Pi_1 |s\rangle =\ipro{\beta_j}{s}\ket{\beta_j}$.  Moreover, $Q$ is a rotation with eigenvalues $e^{\pm i\theta_j}$ in $S_j$ corresponding to eigenvectors $\ket{\phi_j^+}=\frac{1}{\sqrt{2}}(\ket{\alpha_j}+i\ket{\alpha_j^\bot})$ and $\ket{\phi_j^-}=\frac{1}{\sqrt{2}}(\ket{\alpha_j}-i\ket{\alpha_j^\bot})$
where $\theta_j=2\arccos \ipro{\alpha_j}{\beta_j}=2\arccos\sqrt{\bra{\alpha_j}\Pi_1\ket{\alpha_j}}$.
\item  Each one-dimensional subspace $T_j^{(bc)}$ is spanned by a vector $\ket{\alpha_j^{(bc)}}$ such that $\Pi_0 \ket{\alpha_j^{(bc)}}=b \ket{\alpha_j^{(bc)}}$ and $\Pi_1 \ket{\alpha_j^{(bc)}}=c \ket{\alpha_j^{(bc)}}$.
\end{enumerate}
\end{lemma}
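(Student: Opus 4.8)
The statement is a form of Jordan's lemma, and the plan is the classical one: reduce the pair $(\Pi_0,\Pi_1)$ to the spectral decomposition of a single Hermitian operator. Concretely, I would work with $A := \Pi_0\Pi_1\Pi_0$ restricted to $V_0 := \mathrm{Range}(\Pi_0)$ and $B := (I-\Pi_0)\Pi_1(I-\Pi_0)$ restricted to $\ker(\Pi_0)$; both are Hermitian and positive semidefinite with spectrum in $[0,1]$. Fix an orthonormal eigenbasis of $A$ on $V_0$. An eigenvector $\ket{\alpha}$ with eigenvalue $0$ satisfies $\|\Pi_1\ket{\alpha}\|^2 = \bra{\alpha}A\ket{\alpha} = 0$, so it spans a one-dimensional subspace with $\Pi_0\ket{\alpha}=\ket{\alpha}$ and $\Pi_1\ket{\alpha}=0$ (type $T^{(10)}$); an eigenvector with eigenvalue $1$ has $\|\Pi_1\ket{\alpha}\|=\|\ket{\alpha}\|$, hence lies in $\mathrm{Range}(\Pi_1)$, giving type $T^{(11)}$. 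Diagonalizing $B$ on $\ker(\Pi_0)$ the same way produces the remaining one-dimensional types $T^{(00)}$ and $T^{(01)}$. This takes care of the one-dimensional pieces and the second claim of the lemma, which is then immediate from how those vectors were produced as elements of $\mathrm{Range}/\ker$ of $\Pi_0$ intersected with $\mathrm{Range}/\ker$ of $\Pi_1$.

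For an eigenvector $\ket{\alpha_j}$ of $A$ with eigenvalue $\lambda_j\in(0,1)$ I would set $\ket{\beta_j}:=\Pi_1\ket{\alpha_j}/\sqrt{\lambda_j}$ and $S_j:=\mathrm{span}(\ket{\alpha_j},\ket{\beta_j})$, with $\ket{\alpha_j^{\bot}}$ (resp. $\ket{\beta_j^{\bot}}$) the unit vector in $S_j$ orthogonal to $\ket{\alpha_j}$ (resp. $\ket{\beta_j}$). Using $\Pi_0\ket{\alpha_j}=\ket{\alpha_j}$ one computes $\ipro{\alpha_j}{\beta_j}=\sqrt{\lambda_j}$ (a positive real, $<1$ so $S_j$ is genuinely two-dimensional), $\Pi_0\ket{\beta_j}=\sqrt{\lambda_j}\ket{\alpha_j}$, and $\Pi_1\ket{\beta_j}=\ket{\beta_j}$, from which the stated formulas $\Pi_0\ket{s}=\ipro{\alpha_j}{s}\ket{\alpha_j}$ and $\Pi_1\ket{s}=\ipro{\beta_j}{s}\ket{\beta_j}$ for $\ket{s}\in S_j$ follow by linearity; in particular $S_j$ is invariant under both projectors and each restricted projector has rank one. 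Orthogonality of distinct $S_j$'s and of the $S_j$'s against the one-dimensional pieces reduces to $\bra{\alpha_j}\Pi_1\ket{\alpha_k}=\bra{\alpha_j}A\ket{\alpha_k}=\lambda_j\delta_{jk}$ together with $\mathrm{Range}(\Pi_0)\perp\ker(\Pi_0)$. Setting $\theta_j:=2\arccos\sqrt{\lambda_j}$ and fixing the phase of $\ket{\alpha_j^{\bot}}$ so that $\ket{\beta_j}=\cos(\theta_j/2)\ket{\alpha_j}+\sin(\theta_j/2)\ket{\alpha_j^{\bot}}$, the operators $R_0=2\Pi_0-I$ and $R_1=2\Pi_1-I$ act on $S_j$ as reflections across the real lines spanned by $\ket{\alpha_j}$ and $\ket{\beta_j}$, which meet at angle $\theta_j/2$, so $Q=R_1R_0$ is the rotation by $\theta_j$ in $S_j$; reading off its eigenvectors in the basis $(\ket{\alpha_j},\ket{\alpha_j^{\bot}})$ gives $\ket{\phi_j^{\pm}}=\tfrac{1}{\sqrt2}(\ket{\alpha_j}\pm i\ket{\alpha_j^{\bot}})$ with eigenvalues $e^{\pm i\theta_j}$ (absorbing a sign into the phase of $\ket{\alpha_j^{\bot}}$ if needed). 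Since $\bra{\alpha_j}\Pi_1\ket{\alpha_j}=\lambda_j$, this is exactly the stated value of $\theta_j$.

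The step I expect to need the most care is completeness — that the subspaces listed above actually exhaust $\hil$ — which I would handle by a dimension count exploiting the mutual orthogonality above. The $\ket{\alpha_j}$'s span the intermediate-eigenvalue part of $A$, of some dimension $\ell$, and the crucial claim is that the orthogonal complements $\ket{\alpha_j^{\bot}}$ inside the $S_j$'s form an orthonormal basis of the intermediate-eigenvalue part of $B$. I would verify this by the direct computation $B\ket{\alpha_j^{\bot}}=\sin^2(\theta_j/2)\ket{\alpha_j^{\bot}}$ (using $\Pi_1\ket{\alpha_j^{\bot}}=\sin(\theta_j/2)\ket{\beta_j}$ and $(I-\Pi_0)\ket{\beta_j}=\sin(\theta_j/2)\ket{\alpha_j^{\bot}}$) together with the observation that rerunning the pairing construction from $B$ — pairing an intermediate eigenvector $\ket{\gamma}$ of $B$ with $\Pi_1\ket{\gamma}/\|\Pi_1\ket{\gamma}\|$ — reproduces the very same two-dimensional subspaces $S_j$, so those complements leave no intermediate eigenvector of $B$ unaccounted for. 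Then $\dim\hil=\dim V_0+\dim\ker(\Pi_0)$ splits as $(\#T^{(1\cdot)})+\ell+(\#T^{(0\cdot)})+\ell=(N-2\ell)+2\ell$, and since $\bigoplus_j S_j$ is a $2\ell$-dimensional subspace orthogonal to all the one-dimensional pieces, the claimed direct sum is all of $\hil$.
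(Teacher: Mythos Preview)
The paper does not prove this lemma; it is quoted from \cite[Appendix A]{NWZ09} (Jordan's lemma in the formulation convenient for amplitude-amplification arguments) and used as a black box. Your proposal is a correct proof along the standard lines: diagonalize $A=\Pi_0\Pi_1\Pi_0$ on $\mathrm{Range}(\Pi_0)$ to locate the $T^{(1c)}$ pieces and the $\ket{\alpha_j}$'s, pair each intermediate eigenvector with $\Pi_1\ket{\alpha_j}$ to build the two-dimensional invariant blocks, and read off the reflection/rotation structure of $Q$ on each block. The completeness step you flag is indeed the only place requiring care; your idea is right, and the cleanest way to make it precise is to note that $(I-\Pi_0)\Pi_1$ restricts to a linear isomorphism from the $\lambda$-eigenspace of $A$ onto the $(1-\lambda)$-eigenspace of $B$ for every $\lambda\in(0,1)$ (its composition with $\Pi_0\Pi_1$ in either order acts as $\lambda(1-\lambda)I$ on the relevant eigenspace), so the intermediate parts of $A$ and $B$ have the same dimension $\ell$ and the $\ket{\alpha_j^{\bot}}$ exhaust the latter. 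This is essentially the argument in \cite{NWZ09}, so your approach matches the source the paper defers to.
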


We first give ideas about Lemma~\ref{lem:partition} that is the main lemma for this section. For each coordinate $i\in[m]$, we would like to separate the space $\hil_{\regX,\regZ}$ into subspace that ``can pass the test round'' ($S_{i,1}$) and the subspace that ``cannot pass the test round'' ($S_{i,0}$). The first attempt is applying Lemma~\ref{lem:decomposition} to separate the space $\hil_{\regX,\regZ}$ into $S_{i,0}$ and $S_{i,1}$ such that the prover can make the verifier accept by using the amplification algorithm in~\cite{MW05} if the prover's state is in $S_{i,1}$. Then, the same prover will be rejected with $1-\negl(n)$ probability by Lemma~\ref{lem:Mah_soundness}. 

However, two things are making this approach fail. First, since the prover's unitary also operates on $\regC_{-i}$, we also need to consider $\regC_{-i}$ in addition to $\regX,\regZ$. Second, to have the binding property in Lemma~\ref{lem:Mah_soundness}, the prover's inner state must be constructed efficiently, which is not guaranteed by Lemma~\ref{lem:decomposition}. We bypass these difficulties by assuming the state in $\regC_{-i}$ is a uniformly superposed state and constructing an efficient procedure in Procedure~\ref{fig:process_G} to separate the space $\hil_{\regX,\regZ}$. Assuming the state in $\regC_{-i}$ is a uniformly superposed state will time the prover's success probability by $2^{(m-1)}$, which is still small if $m = O(\log n)$. 

Fix $k$ and $y$ for now (until we finish the proof of Lemma~\ref{lem:partition_further}). For $i\in [m]$, we consider two projectors 
\begin{align*}
    &\Pi_{in}:= \opro{0^m}{0^m}_{\regC}\otimes I_{\regX,\regZ}\\
   & \Pi_{i,out} := (UH_{\regC_{-i}})^{\dag}(\sum_{a_i\in \Acc_{k_i,y_i}}\opro{a_i}{a_i}_{\regX_i}\otimes I_{\regC,\regX_{-i},\regZ}) (UH_{\regC_{-i}}),
\end{align*}
where 
$\regX_{-i}:= \regX_1,\dots,\regX_{i-1}, \regX_{i+1},\dots, \regX_{m}$,  $H_{\regC_{-i}}$ means applying Hadamard operators to registers $\regC_1,\dots,\regC_{i-1}, \regC_{i+1},\dots, \regC_{m}$ to produce uniformly random challenges, and $\Acc_{k_i,y_i}$ denotes the set of $a_i$ such that the verifier accepts $a_i$ in the test round on the $i$-th coordinate when the first and second messages are $k_i$ and $y_i$, respectively.
Note that one can efficiently check if $a_i\in \Acc_{k_i,y_i}$ without knowing the trapdoor behind $k_i$ since verification in the test round can be done publicly as explained in Sec. \ref{sec:mahadev_overview}. 
We apply Lemma~\ref{lem:decomposition} for $\Pi_0= \Pi_{in}$ and $\Pi_1=\Pi_{i,out}$ to decompose the space  $\hil_{\regC,\regX,\regZ}$ into the two-dimensional subspaces $\{S_j\}_{j}$ and one-dimensional subspaces $\{T_{j}^{bc}\}_{j,b,c}$.
In the following, we use notations defined in Lemma~\ref{lem:decomposition} for this particular application.
We can write $\ket{\alpha_j}_{\regC,\regX,\regZ}=\ket{0}\ot \ket{\hat{\alpha}_j}_{\regX,\regZ}$  since $\Pi_{in}\ket{\alpha_j}=\ipro{\alpha_j}{\alpha_j}\ket{\alpha_j}=\ket{\alpha_j}$.
Similarly, we can write  $\ket{\alpha_j^{(10)}}_{\regC,\regX,\regZ}=\ket{0}\ot \ket{\hat{\alpha}_j^{(10)}}_{\regX,\regZ}$ 
and  $\ket{\alpha_j^{(11)}}_{\regC,\regX,\regZ}=\ket{0}\ot \ket{\hat{\alpha}_j^{(11)}}_{\regX,\regZ}$. 
Then  $\{\ket{\hat{\alpha}_j}\}_{j}$ and $\{\ket{\hat{\alpha}_{j}^{(1c)}}\}_{j,c}$ span $\hil_{\regX,\regZ}$. 

We let $S_{i,1}$ be the subspace spanned by eigenvectors $\ket{\hat{\alpha}}_{\regX,\regZ}$ which have large eigenvalues (corresponding to high accepting probability) and $S_{i,0}$ be the subspace spanned by eigenvectors $\ket{\hat{\alpha}}_{\regX,\regZ}$ which have small eigenvalues (corresponding to low accepting probability). For the efficient procedure in Procedure~\ref{fig:process_G}, one possible approach is using the amplification algorithm by Marriotts and Watrous~\cite{MW05}. Then, we let $S_{i,0}$ be the subspace accepted by the verifier in the test round with probability $1-\negl(n)$ after the amplification. However, this approach does not work since there will be a big subspace where the verifier accepts with non-negligible probability but not $1-\negl(n)$ probability, i.e., not in either $S_{i,0}$ nor $S_{i,1}$. It is hard to analyze the prover's success probability when its state is in this subspace. To overcome this difficulty, we apply the phase estimation to the operator $Q$ given in Lemma~\ref{lem:decomposition}, set a threshold $\gamma$ for eigenvalues, and undo the phase estimation. Then, we let the subspace spanned by eigenvectors with eigenvalues greater than $\gamma$ be $S_{i,1}$ and the subspace spanned by the others be $S_{i,0}$. Note that this approach still gives the error from the accuracy of phase estimation. For instance, let $\delta$ be the accuracy. Then the subspace with eigenvalues in $[\gamma-\delta,\gamma]$ will be the subspace which cannot be in either $S_{i,0}$ nor $S_{i,1}$. We handle this by choosing a random threshold $\gamma$ from $T$ thresholds such that the average size of the ``grey subspace'' will be $\leq 1/T$. In this way, we can get an approximate efficient projector $G_i$ such that it separates the space of $\regX,\regZ$ into three subspaces: $S_{i,0}$, $S_{i,1}$, and a small error subspace. we will formalize this approach in Lemma~\ref{lem:partition}.

Now, we prove the lemma which gives a decomposition of a cheating prover's state.
\begin{lemma}\label{lem:partition}
Let $(U_0,U)$ be any prover's strategy. Let $m=O(\log \secpar)$, $i\in[m]$, 
$\gamma_0 \in[0,1]$, and $T\in \mathbb{N}$ such that $\frac{\gamma_0}{T}=1/\poly(\secpar)$. Let $\gamma$ be sampled uniformly randomly from $[\frac{\gamma_0}{T},\frac{2\gamma_0}{T},\dots,\frac{T\gamma_0}{T}]$. Then, there exists an efficient quantum procedure $G_{i,\gamma}$ such that for any (possibly sub-normalized) quantum state $\ket{\psi}_{\regX,\regZ}$,  
\ifnum\submission=1
\begin{align*}
    G_{i,\gamma} \ket{0^m}_{\regC}\ket{\psi}_{\regX,\regZ}\ket{0^t}_{ph}\ket{0}_{th}\ket{0}_{in} =& z_0\ket{0^m}_{\regC}\ket{\psi_{0}}_{\regX,\regZ}\ket{0^t01}_{ph,th,in}\\ &+ z_1\ket{0^m}_{\regC}\ket{\psi_{1}}_{\regX,\regZ}\ket{0^t11}_{ph,th,in} + \ket{\psi'_{err}}
\end{align*}
\else
\begin{align*}
    G_{i,\gamma} \ket{0^m}_{\regC}\ket{\psi}_{\regX,\regZ}\ket{0^t}_{ph}\ket{0}_{th}\ket{0}_{in} = z_0\ket{0^m}_{\regC}\ket{\psi_{0}}_{\regX,\regZ}\ket{0^t01}_{ph,th,in}+ z_1\ket{0^m}_{\regC}\ket{\psi_{1}}_{\regX,\regZ}\ket{0^t11}_{ph,th,in} + \ket{\psi'_{err}}
\end{align*}
\fi
where $t$ is the number of qubits in the register $ph$, $z_0,z_1\in \mathbb{C}$ such that $|z_0|=|z_1|=1$, and 
$z_0$, $z_1$, $\ket{\psi_0}_{\regX,\regZ}$, $\ket{\psi_1}_{\regX,\regZ}$, and $\ket{\psi_{err}}_{\regX,\regZ}$ may depend on $\gamma$.

Furthermore, the following properties are satisfied.
\begin{enumerate}
    \item If we define $\ket{\psi_{err}}_{\regX,\regZ}\defeq \ket{\psi}_{\regX,\regZ} - \ket{\psi_{0}}_{\regX,\regZ}- \ket{\psi_{1}}_{\regX,\regZ}$, then we have  $E_{\gamma}[\|\ket{\psi_{err}}_{\regX,\regZ}\|^2]\leq \frac{6}{T}+\negl(n)$.
\item For any fixed $\gamma$, $\Pr[M_{ph,th,in}\circ \ket{\psi'_{err}} \in \{0^t01,0^t11\}] =0$.   
This implies that if we apply the measurement $M_{ph,th,in}$ on $\frac{G_{i,\gamma} \ket{0^m}_{\regC}\ket{\psi}_{\regX,\regZ}\ket{0^t}_{ph}\ket{0}_{th}\ket{0}_{in}}{\|\ket{\psi}_{\regX,\regZ}\|}$, then the outcome is $0^tb1$ with probability $\|\ket{\psi_b}_{\regX,\regZ}\|^2$ and the resulting state in the register $(\regX,\regZ)$  is $\frac{\ket{\psi_b}_{\regX,\regZ}}{\|\ket{\psi_b}_{\regX,\regZ}\|}$ ignoring a global phase factor.
    \item For any fixed $\gamma$, $E_{b\in \{0,1\}} [\|\ket{\psi_b}_{\regX,\regZ}\|^2]\leq \frac{1}{2}\|\ket{\psi}_{\regX,\regZ}\|^2$. 
        \item 
For any fixed $\gamma$ and $c\in \bit^m$ such that $c_i=0$, we have 
\begin{align*}
\Pr\left[M_{\regX_i}\circ U\frac{\ket{c}_{\regC}\ket{\psi_0}_{\regX,\regZ}}{\|\ket{\psi_0}_{\regX,\regZ}\|}\in \Acc_{k_i,y_i}\right]\leq 2^{m-1}\gamma+\negl(\secpar).
\end{align*}
    \item 
For any fixed $\gamma$, there exists an efficient quantum algorithm $\ext_i$ such that 
\begin{align*}  
  \Pr\left[\ext_i\left(\frac{\ket{0^m}_{\regC}\ket{\psi_1}_{\regX,\regZ}}{\|\ket{\psi_1}_{\regX,\regZ}\|}\right)\in \Acc_{k_i,y_i}\right]=1-\negl(\secpar).
  \end{align*}   
\end{enumerate}
\end{lemma}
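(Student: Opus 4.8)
The plan is to read off the two subspaces from the Jordan decomposition of the pair $(\Pi_{in},\Pi_{i,out})$ and to realize the (approximate) projection onto them via phase estimation of the Jordan rotation $Q=R_1R_0$. First, since the prover receives the challenge classically, I may assume without loss of generality that the fourth‑message unitary is block‑diagonal with respect to $\regC$, i.e.\ $U=\sum_{c}\opro{c}{c}_{\regC}\otimes V_c$ where $V_c$ acts on the prover's state registers (the prover measures $\regC$, then acts). Then $UH_{\regC_{-i}}\ket{0^m}_{\regC}\ket{\phi}=\frac{1}{\sqrt{2^{m-1}}}\sum_{c_{-i}}\ket{0,c_{-i}}_{\regC}V_{(0,c_{-i})}\ket{\phi}$, and a short calculation shows that the compression $\bra{0^m}_{\regC}\Pi_{i,out}\ket{0^m}_{\regC}$, as an operator on $\hil_{\regX,\regZ}$, equals $\frac{1}{2^{m-1}}\sum_{c_{-i}}V_{(0,c_{-i})}^{\dagger}\Pi_{\Acc}V_{(0,c_{-i})}$ (with $\Pi_{\Acc}$ acting on $\regX_i$) — exactly the average over $c_{-i}$ of the ``test‑pass on coordinate $i$'' operators. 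Now I apply Lemma~\ref{lem:decomposition} with $\Pi_0=\Pi_{in}$, $\Pi_1=\Pi_{i,out}$. Every $\ket{\alpha_j},\ket{\alpha_j^{(10)}},\ket{\alpha_j^{(11)}}$ is supported on $\ket{0^m}_{\regC}$, so writing them as $\ket{0^m}_{\regC}\ket{\hat\alpha_j}$, $\ket{0^m}_{\regC}\ket{\hat\alpha_j^{(10)}}$, $\ket{0^m}_{\regC}\ket{\hat\alpha_j^{(11)}}$ gives an orthonormal basis of $\hil_{\regX,\regZ}$, and (using $\Pi_0\ket{\beta_j}=\ipro{\alpha_j}{\beta_j}\ket{\alpha_j}$) this basis diagonalizes the compression above: $\ket{\hat\alpha_j}$ is an eigenvector of eigenvalue $\lambda_j=\cos^2(\theta_j/2)$, $\ket{\hat\alpha_j^{(10)}}$ of eigenvalue $0$, and $\ket{\hat\alpha_j^{(11)}}$ of eigenvalue $1$. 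I set $S_{i,1}$ to be the span of the $\ket{\hat\alpha_j}$ with $\lambda_j\ge\gamma$ together with all $\ket{\hat\alpha_j^{(11)}}$, and $S_{i,0}$ to be its orthogonal complement in $\hil_{\regX,\regZ}$.

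Next I build $G_{i,\gamma}$: run phase estimation of $Q$ to precision $\delta=1/\poly(\secpar)$, writing into register $ph$ an estimate $\hat\lambda$ of $\lambda_j=\cos^2(\theta_j/2)$ (the sign ambiguity of the estimated phase $\pm\theta_j$ is harmless since $\cos^2$ is even); then set $th\leftarrow1$ if $\hat\lambda\ge\gamma$ and $th\leftarrow0$ otherwise, set $in\leftarrow1$, and finally uncompute the phase estimation. This is efficient because $R_0=2\Pi_{in}-I$ is a computational‑basis reflection, $R_1=2\Pi_{i,out}-I=2(UH_{\regC_{-i}})^{\dagger}\Pi_{\Acc}(UH_{\regC_{-i}})-I$ is built from the prover's own polynomial‑time unitary together with the \emph{publicly checkable} predicate $a_i\in\Acc_{k_i,y_i}$, and phase estimation to precision $\delta$ costs $t=O(\log\poly(\secpar))$ ancillas and $\poly(\secpar)$ controlled‑$Q$ calls. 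Properties 2 and 3 then come essentially for free: I \emph{define} $\ket{\psi'_{err}}$ to be the part of the output orthogonal to $(ph,th,in)\in\{0^t01,0^t11\}$, which makes property 2 hold identically and forces the stated output form; and because $z_0\ket{0^m}_{\regC}\ket{\psi_0}\ket{0^t01}$, $z_1\ket{0^m}_{\regC}\ket{\psi_1}\ket{0^t11}$, and $\ket{\psi'_{err}}$ are pairwise orthogonal while $G_{i,\gamma}$ is unitary on an input of norm $\|\ket{\psi}_{\regX,\regZ}\|$, we get $\|\ket{\psi_0}\|^2+\|\ket{\psi_1}\|^2+\|\ket{\psi'_{err}}\|^2=\|\ket{\psi}\|^2$, which is property 3 (the phases $|z_b|=1$ being those produced by the comparison/uncomputation step).

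For property 1, phase estimation is exact on the genuine eigenvectors $\ket{\hat\alpha_j^{(1c)}}$ and returns $\hat\lambda$ within $\delta$ of $\lambda_j$ on each $\ket{\hat\alpha_j}$ except with negligible amplitude, so if $\ket{\psi}_{\regX,\regZ}=\sum_jc_j\ket{\hat\alpha_j}+(\text{combination of the }\ket{\hat\alpha_j^{(1c)}})$, the only contribution to $\ket{\psi_{err}}=\ket{\psi}-\ket{\psi_0}-\ket{\psi_1}$ beyond a $\negl(\secpar)$ term is from indices $j$ with $|\lambda_j-\gamma|<\delta$. Among the $T$ equally‑spaced candidate thresholds, at most $2\delta T/\gamma_0+1$ lie within $\delta$ of any fixed $\lambda_j$, so $\Pr_\gamma[|\lambda_j-\gamma|<\delta]\le 2\delta/\gamma_0+1/T\le 3/T$ once $\delta\le\gamma_0/T$ (still $1/\poly(\secpar)$); summing $|c_j|^2\le\|\ket{\psi}\|^2\le 1$ and allowing a constant factor from expanding the squared norm of the misclassified piece gives $E_\gamma[\|\ket{\psi_{err}}\|^2]\le 6/T+\negl(\secpar)$. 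For property 5, $\ket{0^m}_{\regC}\ket{\psi_1}$ lies (up to a $\negl$‑norm term) in $\mathrm{span}\{\ket{\alpha_j}:\lambda_j\ge\gamma\}\cup\{\ket{\alpha_j^{(11)}}\}$, so $\ext_i$ runs the amplification of \cite{MW05,NWZ09}: it alternately ``measures $\Pi_{i,out}$'' — apply $UH_{\regC_{-i}}$, measure $\regX_i$, \emph{halt outputting $a_i$ if $a_i\in\Acc_{k_i,y_i}$}, otherwise apply $(UH_{\regC_{-i}})^{\dagger}$ — and ``measures $\Pi_{in}$''. Because every Jordan block touched carries $\Pi_{i,out}$‑weight at least $\gamma\ge\gamma_0/T=1/\poly(\secpar)$ on the initial state, the Jordan‑lemma analysis of this walk shows that after $O(\gamma^{-1}\log\secpar)=\poly(\secpar)$ rounds the probability of never halting is $\negl(\secpar)$, and each round is efficient since the test‑round check is public.

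Finally, property 4. Every $\ket{\hat\alpha_j}$‑component of $\ket{\psi_0}$ has $\lambda_j<\gamma+\delta$ (up to a $\negl$‑norm term), so $\bra{0^m}_{\regC}\bra{\psi_0}\Pi_{i,out}\ket{0^m}_{\regC}\ket{\psi_0}\le(\gamma+\delta)\|\ket{\psi_0}\|^2+\negl(\secpar)$; by the identity from the first paragraph this quantity equals $\frac{1}{2^{m-1}}\sum_{c_{-i}}\|\Pi_{\Acc}V_{(0,c_{-i})}\ket{\psi_0}\|^2$, i.e.\ the test‑pass probability on coordinate $i$ averaged over $c_{-i}$. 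Since each summand is non‑negative, for any single fixed $c=(0,c_{-i})$ we get $\|\Pi_{\Acc}V_{(0,c_{-i})}\ket{\psi_0}\|^2\le\sum_{c'_{-i}}\|\Pi_{\Acc}V_{(0,c'_{-i})}\ket{\psi_0}\|^2=2^{m-1}\,\bra{0^m}_{\regC}\bra{\psi_0}\Pi_{i,out}\ket{0^m}_{\regC}\ket{\psi_0}\le 2^{m-1}(\gamma+\delta)\|\ket{\psi_0}\|^2+\negl(\secpar)$, and folding $2^{m-1}\delta$ into the negligible term (legitimate as $m=O(\log\secpar)$ and $\delta$ may be any fixed inverse polynomial) yields property 4, since the left‑hand side is $\Pr[M_{\regX_i}\circ U\ket{c}_{\regC}\ket{\psi_0}/\|\ket{\psi_0}\|\in\Acc_{k_i,y_i}]\cdot\|\ket{\psi_0}\|^2$. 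The step I expect to be the real obstacle is the construction itself: one needs a \emph{single efficient} procedure that splits $\ket{\psi}$ along the \emph{exact} orthogonal decomposition $S_{i,0}\oplus S_{i,1}$ even though phase estimation only achieves inverse‑polynomial precision near the threshold, and the device of a \emph{random} threshold $\gamma$ drawn from $T$ values is what turns this unavoidable grey‑zone error into something negligible on average; the bulk of the work is then verifying that all the error terms — the $\negl$ from phase estimation and from $\ket{\psi_1}\notin S_{i,1}$, the $O(1/T)$ grey zone, and the $2^{m-1}$ losses — can be driven down simultaneously by a suitable choice of parameters.
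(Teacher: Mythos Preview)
Your approach is essentially the paper's: Jordan decomposition of $(\Pi_{in},\Pi_{i,out})$, phase estimation of $Q$ to approximately project, a randomly chosen threshold $\gamma$ to kill the grey zone, and Marriott--Watrous amplification for $\ext_i$. Your explicit block-diagonality assumption $U=\sum_c\opro{c}{c}\otimes V_c$ is a useful clarification; the paper's identity $\Pr_{c_{-i}}[\cdots]=\bra{0^m}\bra{\psi_0}\Pi_{i,out}\ket{0^m}\ket{\psi_0}/\|\psi_0\|^2$ actually relies on exactly this (the cross terms between different $c_{-i}$ vanish only because $U^\dagger\Pi_{\Acc}U$ is block-diagonal in $\regC$).

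Two genuine technical gaps remain. First, setting $in\leftarrow 1$ unconditionally is not enough: after $U_{est}^\dagger U_{th}U_{est}$ the phase-estimation error leaks amplitude into $\ket{\alpha_j^\perp}$, which has \emph{no} support on $\regC=0^m$, so the state obtained on outcome $(ph,th)=(0^t,b)$ is in general entangled across $\regC$ and $(\regX,\regZ)$ and cannot be written as $z_b\ket{0^m}_{\regC}\ket{\psi_b}_{\regX,\regZ}$. The paper's $U_{in}$ sets $in=1$ iff $\regC=0^m$ precisely to filter this out; without it property~2 as stated fails. Second, in property~4 you obtain the bound $2^{m-1}(\gamma+\delta)$ and then ``fold $2^{m-1}\delta$ into the negligible term''. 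This is incorrect: $2^{m-1}=\poly(\secpar)$ and $\delta=1/\poly(\secpar)$, so $2^{m-1}\delta$ is at best another inverse polynomial, not $\negl(\secpar)$. The paper avoids this by placing the comparison at $\gamma-\delta$ (rather than $\gamma$) and defining the grey zone as $(\gamma-2\delta,\gamma)$; then every non-negligible component of $\ket{\psi_0}$ has $p_j<\gamma$ exactly, yielding the clean $2^{m-1}\gamma+\negl(\secpar)$ bound. Both fixes are small, but as written neither property~2 nor property~4 is established.
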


\begin{proof}[Proof of Lemma~\ref{lem:partition}]
Procedure~\ref{fig:process_G} defines an efficient process $G_{i,\gamma}$, which decomposes $\ket{\psi}_{\regX,\regZ}$ into $\ket{\psi_0}_{\regX,\regZ}$ , $\ket{\psi_1}_{\regX,\regZ}$ , and $\ket{\psi_{err}}_{\regX,\regZ}$  described in Lemma~\ref{lem:partition}.    
Here, $G_{i,\gamma} := U_{in}U^{\dag}_{est}U_{th}U_{est}$ operates on register $\regC$, $\regX$, $\regZ$, and additional registers $ph$, $th$, and $in$, and we let $\delta:=\frac{\gamma_0}{3T}$.

\floatname{algorithm}{Procedure}
\begin{algorithm}[h]
    \begin{mdframed}[style=figstyle,innerleftmargin=10pt,innerrightmargin=10pt]
    \begin{enumerate}
    \item Do quantum phase estimation $U_{est}$ on $Q=(2\Pi_{in}-I)(2\Pi_{i,out}-I)$ with input state $\ket{0^m}_{\regC}\ket{\psi}_{\regX,\regZ}$ and $\tau$-bit precision and failure probability $2^{-n}$ where the parameter $\tau$ will be specified later, i.e.,  
    \begin{align*}
        U_{est}\ket{u}_{\regC,\regX,\regZ}\ket{0^t}_{ph} \rightarrow \sum_{\theta\in(-\pi,\pi]} \alpha_{\theta} \ket{u}_{\regC,\regX,\regZ}\ket{\theta}_{ph}.
    \end{align*}
 such that $\sum_{\theta\notin \bar{\theta}\pm 2^{-\tau}}|\alpha_{\theta}|^2\leq 2^{-\secpar}$ for any eigenvector $\ket{u}_{\regC,\regX,\regZ}$ of $Q$ with eigenvalue $e^{i\bar{\theta}}$.
    \item Apply $U_{th}:\ket{u}_{\regC,\regX,\regZ}\ket{\theta}_{ph}\ket{0}_{th} \xrightarrow{U_{th}} \ket{u}_{\regC,\regX,\regZ}\ket{\theta}_{ph}\ket{b}_{th} $, 
    where $b=1$ if $\cos^2 (\theta/2)\geq \gamma-\delta$.
    %$\cos^2 \theta\in [\gamma-\delta/2, \gamma + \delta/2]$. 
    \item Apply $U_{est}^\dag$. 
   \item Apply $U_{in}: \ket{c}_{\regC}\ket{0}_{in} \xrightarrow{U_{in}}  \ket{c}_{\regC}\ket{b'}_{in}$,
    where $b'=1$ if $c= 0^m$. 
    %\item Measure the single-qubit registers $th$ and $err$. 
\end{enumerate}
    \caption{$G_{i,\gamma}$}
    \label{fig:process_G}
    \end{mdframed}
\end{algorithm}

In the procedure, we choose $\tau$ so that for any $\theta$ and $\theta'$ such that $|\theta'-\theta|\leq 2^{-\tau}$, we have $|\cos^2(\theta'/2)-\cos^2(\theta/2)|\leq \delta/2$.
Since we can upper and lower bound $\cos^2(\theta'/2)-\cos^2(\theta/2)$ by polynomials in $\theta'-\theta$ by considering the Taylor series, we can set $\tau=O(-\log(\delta))$ for satisfying this property.
Since phase estimation with $\tau$-bit precision and failure probability $2^{-\secpar}$ can be done in time $\poly(\secpar,2^{\tau})$ \cite{NWZ09} and $\delta=\frac{\gamma_0}{3T}=1/\poly(\secpar)$ by the assumption, the procedure runs in time $\poly(\secpar)$.

For each $j\in [\ell]$, we define $p_j:= \cos^2(\theta_j/2)=\bra{\alpha_j} \Pi_{i,out} \ket{\alpha_j}$.
We define the following projections on $\hil_{\regX,\regZ}$:
\begin{align*}
    &\Pi_{in, \leq \gamma-2\delta} := \sum_{j: p_j\leq \gamma-2\delta}\opro{\hat{\alpha}_j}{\hat{\alpha}_j}_{\regX,\regZ}+ \sum_{j} \opro{\hat{\alpha}_j^{(10)}}{\hat{\alpha}_j^{(10)}}_{\regX,\regZ},\\
     &\Pi_{in, \geq \gamma} := \sum_{j: p_j\geq  \gamma}\opro{\hat{\alpha}_j}{\hat{\alpha}_j}_{\regX,\regZ}+ \sum_{j} \opro{\hat{\alpha}_j^{(11)}}{\hat{\alpha}_j^{(11)}}_{\regX,\regZ},\\
    &\Pi_{in, mid} := \sum_{j: p_j \in (\gamma-2\delta,\gamma)}\opro{\hat{\alpha}_j}{\hat{\alpha}_j}_{\regX,\regZ}.
\end{align*}
 
We let $\ket{\psi_{\leq \gamma-2\delta}}_{\regX,\regZ}:=\Pi_{in, \leq \gamma-2\delta}\ket{\psi}_{\regX,\regZ}$,  $\ket{\psi_{\geq \gamma}}_{\regX,\regZ}:=\Pi_{in, \geq \gamma}\ket{\psi}_{\regX,\regZ}$, and $\ket{\psi_{mid}}_{\regX,\regZ}:=\Pi_{in, mid}\ket{\psi}_{\regX,\regZ}$.
Then we have 
\begin{align}
\ket{\psi}_{\regX,\regZ}=\ket{\psi_{\leq \gamma-2\delta}}_{\regX,\regZ}+\ket{\psi_{\geq\gamma}}_{\regX,\regZ}+\ket{\psi_{mid}}_{\regX,\regZ}. \label{eq:psi}
\end{align}
Roughly speaking, $\ket{\psi_{\leq \gamma-2\delta}}_{\regX,\regZ}$, $\ket{\psi_{\geq\gamma}}_{\regX,\regZ}$, $\ket{\psi_{mid}}_{\regX,\regZ}$ will correspond to $\ket{\psi_0}$, $\ket{\psi_1}$, and $\ket{\psi_{err}}$ with some error terms as explained in the following.

It is easy to see that $E_{\gamma}[\|\ket{\psi_{mid}}\|^2]\leq \frac{1}{T}$ since $\Pi_{in, mid}$ with different choice of $\gamma$ are disjoint. 
In the following, we analyze how the first two terms of Eq. \ref{eq:psi} develops by $G_{i,\gamma}$. 

$\ket{\psi_{\leq\gamma-2\delta}}_{\regX,\regZ}$ is a superposition of states $\{\ket{\hat{\alpha}_j}_{\regX,\regZ}\}_{j:p_j\leq \gamma-2\delta}$ and $\{\ket{\hat{\alpha}_j^{11}}_{\regX,\regZ}\}_{j}$.
By Lemma~\ref{lem:decomposition}, $\ket{\alpha_j}_{\regC,\regX,\regZ}=\ket{0^m}_{\regC}\ot \ket{\hat{\alpha}_j}_{\regX,\regZ}$ can be written as $\ket{\alpha_j}_{\regC,\regX,\regZ}=\frac{1}{\sqrt{2}}(\ket{\phi_j^+}_{\regC,\regX,\regZ}+\ket{\phi_j^-}_{\regC,\regX,\regZ})$ where $\ket{\phi_j^\pm}_{\regC,\regX,\regZ}$ is an eigenvector of $Q$ with eigenvalue $e^{\pm i\theta_j}$ where $\theta_j=2\arccos(\sqrt{p_j})\geq 2\arccos (\sqrt{\gamma-2\delta})$. 
Moreover,  $\ket{\alpha_j^{(10)}}_{\regC,\regX,\regZ}=\ket{0^m}_{\regC}\ot \ket{\hat{\alpha}_j^{(10)}}_{\regX,\regZ}$ is an eigenvector of $Q$ with eigenvalue $-1=e^{i\pi}$. 
Here, we remark that $\pi\geq 2\arccos x$ for any $0\leq x \leq 1$.
Thus, after applying $U_{est}$ to  $\ket{\psi_{\leq\gamma-2\delta}}_{\regX,\regZ}$, $(1-2^{-\secpar})$-fraction of the state contains $\theta$ in the register $ph$ such that $|\theta|\geq 2\arccos(\sqrt{\gamma-2\delta}) -2^{-\tau}$. which implies $\cos^2(\theta/2)\leq \gamma-\frac{3}{2}\delta<\gamma-\delta$ by our choice of $\tau$.
For this fraction of the state, $U_{th}$ does nothing.
 Thus, we have 
 \begin{align*}
 \TD(U_{est}\ket{0^m}_{\regC}\ket{\psi_{\leq\gamma-2\delta}}_{\regX,\regZ}\ket{0^t00}_{ph,th,in},U_{th}U_{est}\ket{0^m}_{\regC}\ket{\psi_{\leq\gamma-2\delta}}_{\regX,\regZ}\ket{0^t00}_{ph,th,in})\leq 2^{-n}
 \end{align*}
  and thus
  \begin{align*}
  \TD(\ket{0^m}_{\regC}\ket{\psi_{\leq\gamma-2\delta}}_{\regX,\regZ}\ket{0^t00}_{ph,th,in},U_{est}^\dag U_{th}U_{est}\ket{0^m}_{\regC}\ket{\psi_{\leq\gamma-2\delta}}_{\regX,\regZ}\ket{0^t00}_{ph,th,in})\leq 2^{-n}
  \end{align*}
   where $\TD$ denotes the trace distance. 

Therefore we can write
\begin{align}
 G_{i,\gamma}\ket{0^m}_{\regC}\ket{\psi_{\leq\gamma-2\delta}}_{\regX,\regZ}\ket{0^t00}_{ph,th,in}=z_{\leq \gamma-2\delta}\ket{0^m}_{\regC}\ket{\psi_{\leq\gamma-2\delta}}_{\regX,\regZ}\ket{0^t01}_{ph,th,in}+\ket{\psi'_{err,\leq \gamma-2\delta}} \label{eq:leq}
\end{align}
by using $z_{\leq \gamma-2\delta}$ such that $|z_{\leq \gamma-2\delta}|^2\geq 1-2^{-\secpar}$ and a state  $\ket{\psi'_{err,\leq \gamma-2\delta}}$ that is orthogonal to $\ket{0^m}_{\regC}\ket{\psi_{\leq\gamma-2\delta}}_{\regX,\regZ}\ket{0^t01}_{ph,th,in}$ such that $\|\ket{\psi'_{err,\leq \gamma-2\delta}}\|^2 \leq 2^{-n}$.

\begin{comment}
$\ket{\psi_{\geq\gamma}}_{\regX,\regZ}$ is a superposition of $\ket{\hat{\alpha}_j}_{\regX,\regZ}$ such that $p_j\geq \gamma$ and $\ket{\hat{\alpha}_j^{11}}_{\regX,\regZ}$.
By Lemma~??? $\ket{\alpha_j}_{\regC,\regX,\regZ}=\ket{0^m}_{\regC}\ot \ket{\hat{\alpha}_j}_{\regX,\regZ}$ can be written as $\ket{\alpha_j}_{\regC,\regX,\regZ}=\frac{1}{\sqrt{2}}(\ket{\phi_j^+}_{\regC,\regX,\regZ}+\ket{\phi_j^-}_{\regC,\regX,\regZ})$ where $\ket{\phi_j^\pm}_{\regC,\regX,\regZ}$ is an eigenvector of $Q$ with eigenvalue $e^{-i\theta_j}$. 
Moreover,  $\ket{\alpha_j^{(11)}}_{\regC,\regX,\regZ}=\ket{0^m}_{\regC}\ot \ket{\hat{\alpha}_j^{(11)}}_{\regX,\regZ}$ is an eigenvector of $Q$ with eigenvalue $1=e^{0}$. 
Thus, after applying $U_{est}$ to  $\ket{\psi_{\geq\gamma}}_{\regX,\regZ}$, $(1-2^{-\secpar})$-fraction of the state contains $\theta$ in the register $ph$ such that $\theta\leq 2\arccos(\sqrt{\gamma}) +2^{-t}$. which implies $\cos^2(\theta/2)\geq \gamma-\delta$ by our choice of $t$.
 Thus, we have $\TD(U_{est}\ket{0^m}_{\regC}\ket{\psi_{\geq\gamma}}_{\regX,\regZ}\ket{0^t00}_{ph,th,in},U_{th}U_{est}\ket{0^m}_{\regC}\ket{\psi_{\geq\gamma}}_{\regX,\regZ}\ket{0^t00}_{ph,th,in})\leq 2^{-n}$ and thus$\TD(\ket{0^m}_{\regC}\ket{\psi_{\geq\gamma}}_{\regX,\regZ}\ket{0^t00}_{ph,th,in},U_{est}^\dag U_{th}U_{est}\ket{0^m}_{\regC}\ket{\psi_{\geq\gamma}}_{\regX,\regZ}\ket{0^t00}_{ph,th,in})\leq 2^{-n}$ where $\TD$ denotes the trace distance. 

Therefore we can write
\end{comment}

By a similar analysis, we can write 
\begin{align}
 G_{i,\gamma}\ket{0^m}_{\regC}\ket{\psi_{\geq\gamma}}_{\regX,\regZ}\ket{0^t00}_{ph,th,in}=z_{\geq \gamma}\ket{0^m}_{\regC}\ket{\psi_{\geq\gamma}}_{\regX,\regZ}\ket{0^t11}_{ph,th,in}+\ket{\psi'_{err,\geq \gamma}} \label{eq:geq}
\end{align}
by using $z_{\geq \gamma}$ such that $|z_{\geq \gamma}|^2\geq 1-2^{-\secpar}$ and a state  $\ket{\psi'_{err,\geq \gamma}}$ that is orthogonal to $\ket{0^m}_{\regC}\ket{\psi_{\geq\gamma}}_{\regX,\regZ}\ket{0^t01}_{ph,th,in}$ such that $\|\ket{\psi'_{err,\geq \gamma}}\|^2 \leq 2^{-n}$.

Combining Eq. \ref{eq:psi}, \ref{eq:leq}, and \ref{eq:geq}, we have
\begin{align}
\begin{split}
& G_{i,\gamma}\ket{0^m}_{\regC}\ket{\psi}_{\regX,\regZ}\ket{0^t00}_{ph,th,in} \\
&=\ket{0^m}_{\regC}(z_{\geq \gamma}\ket{\psi_{\geq\gamma}}_{\regX,\regZ}+\ket{\eta_{mid,0}}_{\regX,\regZ}+\ket{\eta_{other,0}}_{\regX,\regZ})\ket{0^t01}_{ph,th,in} \\
&+\ket{0^m}_{\regC}(z_{\leq \gamma-2\delta}\ket{\psi_{\leq\gamma-2\delta}}_{\regX,\regZ}+\ket{\eta_{mid,1}}_{\regX,\regZ}+\ket{\eta_{other,1}}_{\regX,\regZ})\ket{0^t11}_{ph,th,in} \\
&+ \ket{\psi'_{err}}. 
\end{split} \label{eq:resulting_state}
\end{align}

where $\ket{\eta_{mid,0}}_{\regX,\regZ}$, $\ket{\eta_{other,0}}_{\regX,\regZ}$, $\ket{\eta_{mid,1}}_{\regX,\regZ}$, and $\ket{\eta_{other,1}}_{\regX,\regZ}$ are defined so that
\begin{align}
& I_{\regC,\regX,\regZ} \ot \opro{0^t01}{0^t01}_{ph,th,in}G_{i,\gamma}\ket{0^m}_{\regC}\ket{\psi_{mid}}_{\regX,\regZ}\ket{0^t00}_{ph,th,in}=\ket{0^m}_{\regC}{\ket{\eta_{mid,0}}_{\regX,\regZ}}\ket{0^t01}_{ph,th,in}, \label{eq:etamid} \\
&I_{\regC,\regX,\regZ} \ot \opro{0^t11}{0^t11}_{ph,th,in}G_{i,\gamma}\ket{0^m}_{\regC}\ket{\psi_{mid}}_{\regX,\regZ}\ket{0^t00}_{ph,th,in}=\ket{0^m}_{\regC}{\ket{\eta_{mid,1}}_{\regX,\regZ}}\ket{0^t11}_{ph,th,in}, \label{eq:etamidprime} \\
&I_{\regC,\regX,\regZ} \ot \opro{0^t01}{0^t01}_{ph,th,in}(\ket{\psi'_{err,\leq \gamma-2\delta}}+\ket{\psi'_{err,\geq \gamma}})=\ket{0^m}_{\regC}{\ket{\eta_{other,0}}_{\regX,\regZ}}\ket{0^t01}_{ph,th,in}, \label{eq:etaother} \\
&I_{\regC,\regX,\regZ} \ot \opro{0^t11}{0^t11}_{ph,th,in}(\ket{\psi'_{err,\leq \gamma-2\delta}}+\ket{\psi'_{err,\geq \gamma}})=\ket{0^m}_{\regC}{\ket{\eta_{other,1}}_{\regX,\regZ}}\ket{0^t11}_{ph,th,in}. \label{eq:etaotherprime} 
\end{align}
and $\ket{\psi'_{err}}$  is defined by 
\ifnum\submission=1
\begin{align}
\begin{split}
\ket{\psi'_{err}}:=
\sum_{s\notin \{0^t01,0^t11\}} I_{\regC,\regX,\regZ} \ot \opro{s}{s}_{ph,th,in}
(&G_{i,\gamma}\ket{0^m}_{\regC}\ket{\psi_{mid}}_{\regX,\regZ}\ket{0^t00}_{ph,th,in}\\
&+\ket{\psi'_{err,\leq \gamma-2\delta}}+\ket{\psi'_{err,\geq \gamma}}). 
\end{split}
\label{eq:psierrprime}
\end{align}
\else
\begin{align}
%\begin{split}
\ket{\psi'_{err}}&:=
\sum_{s\notin \{0^t01,0^t11\}} I_{\regC,\regX,\regZ} \ot \opro{s}{s}_{ph,th,in}(G_{i,\gamma}\ket{0^m}_{\regC}\ket{\psi_{mid}}_{\regX,\regZ}\ket{0^t00}_{ph,th,in}+\ket{\psi'_{err,\leq \gamma-2\delta}}+\ket{\psi'_{err,\geq \gamma}}). \label{eq:psierrprime}
%&-\ket{0^m}_{\regC}(\ket{\eta_{mid,0}}_{\regX,\regZ}+\ket{\eta_{other,0}}_{\regX,\regZ})\ket{0^t01}_{ph,th,in} \\
%&-\ket{0^m}_{\regC}(\ket{\eta_{mid,1}}_{\regX,\regZ}+\ket{\eta_{other,1}}_{\regX,\regZ})\ket{0^t11}_{ph,th,in}. 
%\end{split} 
\end{align}
\fi

We remark that $\ket{\eta_{mid,0}}_{\regX,\regZ}$, $\ket{\eta_{other,0}}_{\regX,\regZ}$, $\ket{\eta_{mid,1}}_{\regX,\regZ}$, and $\ket{\eta_{other,1}}_{\regX,\regZ}$ are well-defined since after applying $G_{i,\gamma}$, the value in the register $in$ is $1$ if and only if the value in the register $\regC$ is $0^m$.

We let $z_0:=\frac{z_{\leq \gamma-2\delta}}{|z_{\leq \gamma-2\delta}|}$, $z_1:=\frac{z_{\geq \gamma}}{|z_{\geq \gamma}|}$, and 
\begin{align}
&\ket{\psi_0}_{\regX,\regZ}:=|z_{\leq \gamma-2\delta}|\ket{\psi_{\leq\gamma-2\delta}}_{\regX,\regZ}+ \overline{z}_0(\ket{\eta_{mid,0}}_{\regX,\regZ}+\ket{\eta_{other,0}}_{\regX,\regZ}), \label{eq:psizero} \\
&\ket{\psi_1}_{\regX,\regZ}:=|z_{\geq \gamma}|\ket{\psi_{\geq\gamma}}_{\regX,\regZ}+ \overline{z}_1(\ket{\eta_{mid,1}}_{\regX,\regZ}+\ket{\eta_{other,1}}_{\regX,\regZ}), \label{eq:psione}
\end{align}
where $\overline{z}_0$ and $\overline{z}_1$ denotes complex conjugates of $z_0$ and $z_1$. By Eq~\ref{eq:resulting_state}, \ref{eq:psizero}, and \ref{eq:psione}, we have
\ifnum\submission=1
\begin{align*}
    G_{i,\gamma} \ket{0^m}_{\regC}\ket{\psi}_{\regX,\regZ}\ket{0^t}_{ph}\ket{0}_{th}\ket{0}_{in} = &z_0\ket{0^m}_{\regC}\ket{\psi_{0}}_{\regX,\regZ}\ket{0^t01}_{ph,th,in}\\ &+ z_1\ket{0^m}_{\regC}\ket{\psi_{1}}_{\regX,\regZ}\ket{0^t11}_{ph,th,in} + \ket{\psi'_{err}}.
\end{align*}
\else
\begin{align*}
    G_{i,\gamma} \ket{0^m}_{\regC}\ket{\psi}_{\regX,\regZ}\ket{0^t}_{ph}\ket{0}_{th}\ket{0}_{in} = z_0\ket{0^m}_{\regC}\ket{\psi_{0}}_{\regX,\regZ}\ket{0^t01}_{ph,th,in}+ z_1\ket{0^m}_{\regC}\ket{\psi_{1}}_{\regX,\regZ}\ket{0^t11}_{ph,th,in} + \ket{\psi'_{err}}.
\end{align*}
\fi

Now, we are ready to prove the five claims in Lemma~\ref{lem:partition}.

\paragraph{Proof of the first claim.}
By Eq.~\ref{eq:psi}, \ref{eq:psizero}, and \ref{eq:psione}, we have
\begin{align*}
\ket{\psi_{err}}_{\regX,\regZ}&=\ket{\psi}_{\regX,\regZ}-\ket{\psi_0}_{\regX,\regZ}-\ket{\psi_1}_{\regX,\regZ}\\
&=(1-|z_{\leq \gamma-2\delta}|)\ket{\psi_{\leq\gamma-2\delta}}_{\regX,\regZ}+(1-|z_{\geq \gamma}|)\ket{\psi_{\geq\gamma}}_{\regX,\regZ}+\ket{\psi_{mid}}_{\regX,\regZ}\\
&-\overline{z}_0(\ket{\eta_{mid,0}}_{\regX,\regZ}+\ket{\eta_{other,0}}_{\regX,\regZ})
-\overline{z}_1(\ket{\eta_{mid,1}}_{\regX,\regZ}+\ket{\eta_{other,1}}_{\regX,\regZ}),
\end{align*}

Since $|z_{\leq \gamma-2\delta}|$ and $|z_{\geq \gamma}|$ are $1-\negl(\secpar)$, the norms of the first two terms are negligible.
By Eq.~\ref{eq:etaother} and \ref{eq:etaotherprime}, we have $\|\ket{\eta_{other,0}}_{\regX,\regZ}\|^2 + \|\ket{\eta_{other,1}}_{\regX,\regZ}\|^2\leq \|\ket{\psi'_{err,\leq \gamma-2\delta}}+\ket{\psi'_{err,\geq \gamma}}\|^2 \leq \negl(\secpar)$.
Therefore we have
\begin{align*}
\|\ket{\psi_{err}}_{\regX,\regZ}\|^2 &\leq \|\ket{\psi_{mid}}_{\regX,\regZ}-\overline{z}_0\ket{\eta_{mid,0}}_{\regX,\regZ}-\overline{z}_1\ket{\eta_{mid,1}}_{\regX,\regZ}\|^2+\negl(\secpar)\\
&\leq 3(\|\ket{\psi_{mid}}_{\regX,\regZ}\|^2+\|\ket{\eta_{mid,0}}_{\regX,\regZ}\|^2 + \|\ket{\eta_{mid,1}}_{\regX,\regZ}\|^2)+\negl(\secpar)
\end{align*}
where the latter inequality follows from the Cauchy-Schwarz inequality.
As already noted, we have $E_{\gamma}[\|\ket{\psi_{mid}}_{\regX,\regZ}\|^2]\leq \frac{1}{T}$.
By Eq.~\ref{eq:etamid} and \ref{eq:etamidprime}, we have $E_{\gamma}[\|\ket{\eta_{mid,0}}_{\regX,\regZ}\|^2 \allowbreak + \|\ket{\eta_{mid,1}}_{\regX,\regZ}\|^2]\leq E_{\gamma}[\|\ket{\psi_{mid}}_{\regX,\regZ}\|^2]\leq \frac{1}{T}$.
Therefore, we have $E_{\gamma}[\|\ket{\psi_{err}}_{\regX,\regZ}\|^2]\leq \frac{6}{T}+\negl(\secpar)$ and the first claim is proven.

\paragraph{Proof of the second claim.}
By Eq~\ref{eq:psierrprime}, we can see that 
\begin{align*}
I_{\regC,\regX,\regZ} \ot (\opro{001}{001}_{ph,th,in}+ \opro{011}{011}_{ph,th,in})\ket{\psi'_{err}}=0.
\end{align*}
This immediately implies the second claim.
\paragraph{Proof of the third claim.}
By the second claim,  $\ket{0^m}_{\regC}\ket{\psi_{0}}_{\regX,\regZ}\ket{0^t01}_{ph,th,in}$,  $\ket{0^m}_{\regC}\ket{\psi_{1}}_{\regX,\regZ}\ket{0^t11}_{ph,th,in}$, and $\ket{\psi'_{err}}$ are orthogonal with one another.
Therefore we have 
\begin{align*}
&\|G_{i,\gamma}\ket{0^m}_{\regC}\ket{\psi}_{\regX,\regZ}\ket{0^t00}_{ph,th,in}\|^2\\
=&\|z_0\ket{0^m}_{\regC}\ket{\psi_{0}}_{\regX,\regZ}\ket{0^t01}_{ph,th,in}\|^2+\|z_1\ket{0^m}_{\regC}\ket{\psi_{1}}_{\regX,\regZ}\ket{0^t11}_{ph,th,in}\|^2+\|\ket{\psi'_{err}}\|^2.
\end{align*}
Since we have 
$\|G_{i,\gamma}\ket{0^m}_{\regC}\ket{\psi}_{\regX,\regZ}\ket{0^t00}_{ph,th,in}\|^2=\|\ket{\psi}_{\regX,\regZ}\|^2$ and $\|z_b\ket{0^m}_{\regC}\ket{\psi_{b}}_{\regX,\regZ}\ket{0^tb1}_{ph,th,in}\|^2 \allowbreak =\|\ket{\psi_{b}}_{\regX,\regZ}\|^2$, the above implies $\|\ket{\psi_{0}}_{\regX,\regZ}\|^2+\|\ket{\psi_{1}}_{\regX,\regZ}\|^2\leq \|\ket{\psi}_{\regX,\regZ}\|^2$, which implies the third claim.
\paragraph{Proof of the forth claim.}
Roughly speaking, we first show that $\ket{0^m}_{\regC}\ket{\psi_0}_{\regX,\regZ}$ 
 can be written as a superposition of states $\{\ket{\alpha_j}\}_{j: p_j\leq \gamma}$ and $\{\ket{\alpha_j^{(10)}}\}_{j}$ except for a term with a negligible norm. 
Since each of the above states has eigenvalues at most $\gamma$ w.r.t. $\Pi_{i,out}$, we can show that $\|\Pi_{i,out}\ket{0^m}_{\regC}\ket{\psi_0}_{\regX,\regZ}\|^2$ is at most $\gamma+\negl(\secpar)$. 
This means that $\ket{\psi_0}_{\regX,\regZ}$ leads to acceptance in the test round on the $i$-th coordinate with probability at most $\gamma+\negl(\secpar)$ if $c_{-i}\in \bit^{m-1}$ is chosen randomly.
Since the number of possible choices of $c$ is $2^{m-1}$, the probability is at most $2^{m-1}\gamma+\negl(\secpar)$ for any fixed $c$, which implies the forth claim.  The detail follows.
 
 We analyze each term of Eq. \ref{eq:psizero} separately.
 First, since we have $\ket{\psi_{\leq \gamma-2\delta}}_{\regX,\regZ}=\Pi_{in, \leq \gamma-2\delta}\ket{\psi}_{\regX,\regZ}$, $\ket{\psi_{\leq \gamma-2\delta}}_{\regX,\regZ}$ is a superposition of states $\{{\ket{\hat{\alpha}_j}}\}_{j: p_j\leq \gamma-2\delta}$ and $\{\ket{\hat{\alpha}_j^{(10)}}\}_{j}$ by the definition of $\Pi_{in, \leq \gamma-2\delta}$.
 Therefore, $\ket{0^m}_{\regC}\ket{\psi_{\leq \gamma-2\delta}}_{\regX,\regZ}$ is a superposition of states $\{{\ket{\alpha_j}}\}_{j: p_j\leq \gamma-2\delta}$ and $\{\ket{\alpha_j^{(10)}}\}_{j}$

 Second, we analyze $\ket{\eta_{mid,0}}$.
 By the definition of $\ket{\psi_{mid}}_{\regX,\regZ}$, the state $\ket{0^m}_{\regC}\ket{\psi_{mid}}_{\regX,\regZ}$ is in the subspace $S_{mid}$, which is the subspace spanned by $\{S_j\}_{j:p_j\in (\gamma-2\delta,\gamma)}$.
We define $\ket{\psi''_{mid,s}}_{\regC,\regX,\regZ}$ so that 
\begin{align*}
G_{i,\gamma}\ket{0^m}_{\regC}\ket{\psi_{mid}}\ket{0^t00}_{ph,th,in}=\sum_{s\in \bit^{t+2}} \ket{\psi''_{mid,s}}_{\regC,\regX,\regZ}\ket{s}_{ph,th,in}.
\end{align*}
Since each subspace $S_j$ is invariant under the projections $\Pi_{in}$ and $\Pi_{i,out}$, each $\ket{\psi''_{mid,s}}_{\regC,\regX,\regZ}$ is also in the subspace $S_{mid}$. 
In particular, $\ket{0^m}_{\regC}\ket{\eta_{mid,0}}_{\regX,\regZ}=\ket{\psi''_{mid,0^t01}}_{\regC,\regX,\regZ}$ is in the subspace $S_{mid}$.
That is, $\ket{0^m}_{\regC}\ket{\eta_{mid,0}}_{\regX,\regZ}$ is a superposition of $\{\ket{\alpha_j}\}_{j: p_j\in (\gamma-2\delta, \gamma)}$.

Third, we can see that $\|\ket{\eta_{other,0}}_{\regX,\regZ}\|=\negl(\secpar)$ from Eq. \ref{eq:etaother} and that $\|\ket{\psi'_{err,\leq \gamma-2\delta}}+\ket{\psi'_{err,\geq \gamma}}\|=\negl(\secpar)$.

Combining the above together with Eq. \ref{eq:psizero}, we can write 
\begin{align}
\ket{0^m}_{\regC}\ket{\psi_0}_{\regX,\regZ}=\sum_{j:p_j<\gamma} d_j \ket{\alpha_j}_{\regC,\regX,\regZ}+\sum_{j} d_j^{(10)} \ket{\alpha_j^{(10)}}_{\regC,\regX,\regZ}+\ket{\psi''_{err,0}}_{\regC,\regX,\regZ}  \label{eq:psizero_decompose} 
\end{align}
for some $d_j,d_j^{(10)}\in \mathbb{C}$ and a state $\ket{\psi''_{err,0}}_{\regC,\regX,\regZ}:=\overline{z}_0 \ket{0^m}_{\regC}\ket{\eta_{other,0}}_{\regX,\regZ}$ whose norm is $\negl(\secpar)$.
Here, we remark that the first term comes from 
both $|z_{\leq \gamma-2\delta}|\ket{0^m}_{\regC}\ket{\psi_{\leq\gamma-2\delta}}_{\regX,\regZ}$ and $\overline{z}_0\ket{0^m}_{\regC}\ket{\eta_{mid,0}}$, and the second term comes from $|z_{\leq \gamma-2\delta}|\ket{0^m}_{\regC}\ket{\psi_{\leq\gamma-2\delta}}_{\regX,\regZ}$.

By the definition of $\Pi_{i,out}$, we have 
\begin{align}
 \Pr_{c_{-i}}\left[M_{\regX_i}\circ U\frac{\ket{c_1...c_{i-1}0c_{i+1}...c_m}_{\regC}\ket{\psi_0}_{\regX,\regZ}}{\|\ket{\psi_0}_{\regX,\regZ}\|}\in \Acc_{k_i,y_i}\right]=\frac{\bra{0^m}_{\regC}\bra{\psi_0}_{\regX,\regZ} \Pi_{i,out} \ket{0^m}_{\regC}\ket{\psi_0}_{\regX,\regZ}}{\|\ket{\psi_0}_{\regX,\regZ}\|^2} \label{eq:accept_probability}
\end{align}
where $c_{-i}$ denotes $c_1...c_{i-1}c_{i+1}...c_{m}$.

By Lemma~\ref{lem:decomposition}, we can see that $\bra{\alpha_j}\Pi_{i,out} \ket{\alpha_{j'}}=0$ for all $j\neq j'$ and $\Pi_{i,out} \ket{\alpha_{j^{(10)}}}=0$ for all $j$.
By substituting Eq. \ref{eq:psizero_decompose} for Eq. \ref{eq:accept_probability}, we have
\begin{align*}
&~~~\Pr_{c_{-i}}\left[M_{\regX_i}\circ U\frac{\ket{c_1...c_{i-1}0c_{i+1}...c_m}_{\regC}\ket{\psi_0}_{\regX,\regZ}}{\|\ket{\psi_0}_{\regX,\regZ}\|}\in \Acc_{k_i,y_i}\right]\\
&=\frac{1}{\|\ket{\psi_0}_{\regX,\regZ}\|^2}\left(\sum_{j:p_j<\gamma} |d_j|^2 \bra{\alpha_j} \Pi_{i,out} \ket{\alpha_j}+ \sum_{j:p_j<\gamma} (\overline{d}_j \bra{\alpha_j}\Pi_{i,out} \ket{\psi''_{err,0}} + d_j \bra{\psi''_{err,0}}\Pi_{i,out} \ket{\alpha_j})\right)\\
%&\leq \sum_{j:p_j<\gamma} |a_j|^2 \gamma + \negl(\secpar)\\
&\leq \gamma+\negl(\secpar)
\end{align*}
where the last inequality follows from $\sum_{j:p_j<\gamma}|d_j|^2\leq \|\ket{\psi_0}_{\regX,\regZ}\|^2$ and $\|\ket{\psi''_{err,0}}_{\regC,\regX,\regZ}\|=\negl(\secpar)$. 
This immediately implies the forth claim considering that the number of possible $c_{-i}$ is $2^{m-1}$ and $m=O(\log \secpar)$.
\paragraph{Proof of the fifth claim.}
By a similar argument to the one in the proof of the forth claim, we can write  
\begin{align}
\ket{0^m}_{\regC}\ket{\psi_1}_{\regX,\regZ}=\sum_{j:p_j>\gamma-2\delta} d_j \ket{\alpha_j}_{\regC,\regX,\regZ}+\sum_{j} `d_j^{(11)} \ket{\alpha_j^{(11)}}_{\regC,\regX,\regZ}+\ket{\psi''_{err,1}}_{\regC,\regX,\regZ}  \label{eq:psione_decompose} 
\end{align}
where $\ket{\psi''_{err,1}}_{\regC,\regX,\regZ}$ is a state such that $\|\ket{\psi''_{err,1}}_{\regC,\regX,\regZ}\|=\negl(\secpar)$.

The algorithm $\ext_{i}$ is described below:

\begin{description}
\item[$\ext_{i}\left(\frac{\ket{0^m}_{\regC}\ket{\psi_1}_{\regX,\regZ}}{\|\ket{\psi_1}_{\regX,\regZ}\|}\right)$:]
Given $\frac{\ket{0^m}_{\regC}\ket{\psi_1}_{\regX,\regZ}}{\|\ket{\psi_1}_{\regX,\regZ}\|}$ as input, $\ext_{i}$ works as follows:
\begin{itemize}
\item Repeat the following procedure $N=\poly(\secpar)$ times where $N$ is specified later:
\begin{enumerate}
\item Perform a measurement $\{\Pi_{i,out},I_{\regC,\regX,\regZ}-\Pi_{i,out}\}$. If the outcome is $0$, i,e, $\Pi_{i,out}$ is applied, then measure the register $\regX_i$ in computational basis to obtain $a_i$, outputs $a_i$, and immediately halts.
\item  Perform a measurement $\{\Pi_{in},I_{\regC,\regX,\regZ}-\Pi_{in}\}$.
\end{enumerate}
\item If it does not halts within $N$ trials in the previous step, output $\bot$.
\end{itemize}
\end{description}

By the definition of $\Pi_{i,out}$, it is clear that $\ext_{i}$ succeeds, (i.e., returns $a_i\in \Acc_{k_i,y_i}$) if it does not output $\bot$.
Since the algorithm $\ext_{i}$ just alternately applies measurements $\{\Pi_{i,out},I_{\regC,\regX,\regZ}-\Pi_{i,out}\}$ and $\{\Pi_{in},I_{\regC,\regX,\regZ}-\Pi_{in}\}$ and each subspaces $S_j$ and $T_j^{(11)}$ are invariant under $\Pi_{in}$ and $\Pi_{i,out}$, we can analyze the success probability of the algorithm separately on each subspace.
Therefore, it suffices to show that $\ext_{i}$ succeeds with probability $1-\negl(\secpar)$ on any input $\ket{\alpha_j}_{\regX,\regZ}$ such that  $p_j> \gamma-2\delta$ or $\ket{\alpha_j^{(11)}}$ for any $j$.
First, it is easy to see that on input $\ket{\alpha_j^{(11)}}$, $\ext_{i}$ returns $a_i\in \Acc_{k_i,y_i}$ at the first trial with probability $1$ since we have $\bra{\alpha_{j}^{(11)}} \Pi_{i,out} \ket{\alpha_{j}^{11}}=1$.
What is left is to prove that $\ext_{i}$ succeeds with probability $1-\negl(\secpar)$ on any input $\ket{\alpha_j}_{\regX,\regZ}$ such that  $p_j> \gamma-2\delta$. 

By Lemma~\ref{lem:decomposition}, it is easy to see that we have
\begin{align*}
&\ket{\alpha_j}_{\regX,\regZ}=\sqrt{p_j}\ket{\beta_j}_{\regX,\regZ}+\sqrt{1-p_j}\ket{\beta_j^\bot}_{\regX,\regZ},\\
&\ket{\beta_j}_{\regX,\regZ}=\sqrt{p_j}\ket{\alpha_j}_{\regX,\regZ}+\sqrt{1-p_j}\ket{\alpha_j^\bot}_{\regX,\regZ}.
\end{align*}

Let  $P_k$ and $P_k^\bot$ be the probability that $\ext_i$ succeeds within $k$ trials starting from the initial state $\ket{\alpha_j}_{\regX,\regZ}$   and $\ket{\alpha_j^\bot}_{\regX,\regZ}$, respectively.
Then by the above equations, it is easy to see that we have $P_0=P_0^\bot=0$ and
\begin{align*}
&P_{k+1}=p_j+(1-p_j)^2 P_{k}+ (1-p_j)p_j P_{k}^\bot, \\
&P_{k+1}^\bot=(1-p_j)+ p_j(1-p_j) P_{k}+ p_j^2 P_{k}^\bot.
\end{align*}

Solving this, we have 
\begin{align*}
P_N=1-(1-2p_j+2p_j^2)^{N-1}(1-p_j).
\end{align*}

Since we assume $p_j> \gamma-2\delta>\frac{\gamma_0}{3T}=1/\poly(\secpar)$, we have $1-2p_j+2p_j^2=1-1/\poly(\secpar)$.
Therefore if we take $N=\poly(\secpar)$ sufficiently large, then $P_N=1-\negl(\secpar)$.
This means that $\ext_i$ succeeds within $N$ steps with probability $1-\negl(\secpar)$ starting from the initial state $\ket{\alpha_j}_{\regX,\regZ}$.
This completes the proof of the fifth claim and the proof of Lemma~\ref{lem:partition}. 
%\end{align*}
\end{proof}

In Lemma~\ref{lem:partition}, we showed that by fixing any $i\in [m]$, we can partition any prover's state $\ket{\psi}_{\regX,\regZ}$ into $\ket{\psi_0}_{\regX,\regZ}$, $\ket{\psi_1}_{\regX,\regZ}$, and $\ket{\psi_{err}}_{\regX,\regZ}$ with certain properties. 
In the following, we sequentially apply Lemma~\ref{lem:partition} for each $i\in[m]$ to further decompose the prover's state.

\begin{lemma}\label{lem:partition_further}
%Fix $c\in \{0,1\}^m$. 
Let $m$, $\gamma_0$, $T$ be as in Lemma~\ref{lem:partition}, and let $\gamma_i\sample [\frac{\gamma_0}{T},\frac{2\gamma_0}{T},\dots,\frac{T\gamma_0}{T}]$ for each $i\in [m]$.
For any $c\in \bit^m$, a state $\ket{\psi}_{\regX,\regZ}$ can be partitioned as follows.% by using Procedure~\ref{fig:process_H}. 
\begin{align*}
    & \ket{\psi}_{\regX,\regZ} = \ket{\psi_{c_1}}_{\regX,\regZ} + \ket{\psi_{\bar{c}_1,c_2}}_{\regX,\regZ} + \cdots +\ket{\psi_{\bar{c}_1,\dots,\bar{c}_{m-1},c_m}}_{\regX,\regZ} + \ket{\psi_{\bar{c}_1,\dots,\bar{c}_m}}_{\regX,\regZ}+ \ket{\psi_{err}}_{\regX,\regZ}
\end{align*}
where the way of partition may depend on the choice of $\hat{\gamma}=\gamma_1...\gamma_m$.
Further, the following properties are satisfied. 
\begin{enumerate}
    \item For any fixed $\hat{\gamma}$ and any $c$, $i\in [m]$ such that $c_i=0$, we have 
    \begin{align*}
    \Pr\left[M_{\regX_i}\circ U \frac{\ket{0^m}_{\regC}\ket{\psi_{\bar{c}_1,\dots,\bar{c}_{i-1},0}}_{\regX,\regZ}}{|\ket{\psi_{\bar{c}_1,\dots,\bar{c}_{i-1},0}}_{\regX,\regZ}|}\in \Acc_{k_i,y_i}\right]\leq 2^{m-1}\gamma_0+ \negl(\secpar).
    \end{align*}
    
    \item For any fixed $\hat{\gamma}$ and any $c$, $i\in[m]$ such that $c_i=1$, there exists an efficient algorithm $\ext_i$ such that 
    \begin{align*}  
  \Pr\left[\ext_i\left(\frac{\ket{0^m}_{\regC}\ket{\psi_{\bar{c}_1,\dots,\bar{c}_{i-1},1}}_{\regX,\regZ}}{\|\ket{\psi_{\bar{c}_1,\dots,\bar{c}_{i-1},1}}_{\regX,\regZ}\|}\right)\in \Acc_{k_i,y_i}\right]=1-\negl(\secpar).
  \end{align*}   
  \item For any fixed $\hat{\gamma}$, we have $E_c[\|\ket{\psi_{\bar{c}_1,\dots,\bar{c}_m}}_{\regX,\regZ}\|^2] \leq 2^{-m}$.
\item For any fixed $c$, we have $E_{\hat{\gamma}}[\|\ket{\psi_{err}}_{\regX,\regZ}\|^2]\leq \frac{6m^2}{T}+\negl(\secpar)$.
    \item For any fixed $\hat{\gamma}$ and $c$ there exists an efficient quantum algorithm $H_{\hat{\gamma},c}$ that is given $\ket{\psi}_{\regX,\regZ}$ as input and produces  $\frac{\ket{\psi_{\bar{c}_1,\dots,\bar{c}_{i-1},c_i}}_{\regX,\regZ}}{\|\ket{\psi_{\bar{c}_1,\dots,\bar{c}_{i-1},c_i}}_{\regX,\regZ}\|}$ with probability $\|\ket{\psi_{\bar{c}_1,\dots,\bar{c}_{i-1},c_i}}_{\regX,\regZ}\|^2$ ignoring a global phase factor.
\end{enumerate}
\end{lemma}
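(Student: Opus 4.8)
The plan is to obtain the decomposition by applying Lemma~\ref{lem:partition} once for each coordinate $i=1,\dots,m$ in turn, peeling off at step $i$ the component selected by the bit $c_i$ and recursing on the other component. Concretely, at step $1$ I would apply Lemma~\ref{lem:partition} with coordinate $1$ and threshold $\gamma_1$ to $\ket{\psi}_{\regX,\regZ}$ itself, and at each subsequent step $i$ to the ``surviving'' state $\ket{\psi_{\bar{c}_1,\dots,\bar{c}_{i-1}}}_{\regX,\regZ}$ produced by step $i-1$; this yields the efficient procedure $G_{i,\gamma_i}$ together with a decomposition $\ket{\psi_{\bar{c}_1,\dots,\bar{c}_{i-1}}}_{\regX,\regZ} = \ket{\psi_{\bar{c}_1,\dots,\bar{c}_{i-1},0}}_{\regX,\regZ} + \ket{\psi_{\bar{c}_1,\dots,\bar{c}_{i-1},1}}_{\regX,\regZ} + \ket{e_i}_{\regX,\regZ}$, where the three terms play the roles of $\ket{\psi_0}$, $\ket{\psi_1}$, $\ket{\psi_{err}}$ in Lemma~\ref{lem:partition}. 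I then recurse on $\ket{\psi_{\bar{c}_1,\dots,\bar{c}_i}}$ (the term matching $\bar{c}_i$), and after $m$ steps set $\ket{\psi_{err}}_{\regX,\regZ} \defeq \sum_{i=1}^m \ket{e_i}_{\regX,\regZ}$. Telescoping the $m$ identities above recovers the displayed decomposition, and, since $\ket{\psi}$ is normalized, the third property of Lemma~\ref{lem:partition} shows every surviving state has norm at most $1$.

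Properties~1 and~2 then follow directly from the fourth and fifth properties of Lemma~\ref{lem:partition} applied at step $i$ to input $\ket{\psi_{\bar{c}_1,\dots,\bar{c}_{i-1}}}_{\regX,\regZ}$, since $\ket{\psi_{\bar{c}_1,\dots,\bar{c}_{i-1},0}}$ and $\ket{\psi_{\bar{c}_1,\dots,\bar{c}_{i-1},1}}$ are exactly the $\ket{\psi_0}$ and $\ket{\psi_1}$ pieces there (for property~1 one also uses $\gamma_i\le\gamma_0$ to replace $2^{m-1}\gamma_i$ by $2^{m-1}\gamma_0$). For property~3, the third property of Lemma~\ref{lem:partition} gives $\|\ket{\psi_{\bar{c}_1,\dots,\bar{c}_{i-1},0}}\|^2+\|\ket{\psi_{\bar{c}_1,\dots,\bar{c}_{i-1},1}}\|^2\le\|\ket{\psi_{\bar{c}_1,\dots,\bar{c}_{i-1}}}\|^2$, so taking the expectation over the uniform independent bit $c_i$ halves the expected squared norm of the surviving state; iterating from $\|\ket{\psi}\|^2\le1$ yields $E_c[\|\ket{\psi_{\bar{c}_1,\dots,\bar{c}_m}}\|^2]\le2^{-m}$. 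For property~4 the key point is that $\ket{\psi_{\bar{c}_1,\dots,\bar{c}_{i-1}}}$ depends only on $c$ and on $\gamma_1,\dots,\gamma_{i-1}$, \emph{not} on $\gamma_i$; hence, fixing $c$, the first property of Lemma~\ref{lem:partition} gives $E_{\gamma_i}[\|\ket{e_i}\|^2]\le\frac{6}{T}+\negl(\secpar)$, and so $E_{\hat{\gamma}}[\|\ket{e_i}\|^2]\le\frac{6}{T}+\negl(\secpar)$ for each $i$; combining the $m$ error terms by the Cauchy--Schwarz inequality (as in Lemma~\ref{lem:Cauchy-Schwarz}) gives $E_{\hat{\gamma}}[\|\ket{\psi_{err}}\|^2]\le m\sum_{i=1}^m E_{\hat{\gamma}}[\|\ket{e_i}\|^2]\le\frac{6m^2}{T}+\negl(\secpar)$, where $m^2\cdot\negl(\secpar)=\negl(\secpar)$ because $m=O(\log\secpar)$.

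Finally, for property~5 I would take $H_{\hat{\gamma},c}$ to run $G_{1,\gamma_1},G_{2,\gamma_2},\dots$ in sequence on the register $(\regX,\regZ)$: at step $i$ it appends fresh ancillas $\ket{0^m}_{\regC}\ket{0^t}_{ph}\ket{0}_{th}\ket{0}_{in}$, applies $G_{i,\gamma_i}$, and measures $ph,th,in$; on outcome $0^tc_i1$ it records a flag ``$i$'', keeps the current $(\regX,\regZ)$-state and halts; on outcome $0^t\bar{c}_i1$ it resets $ph,th,in$ to $\ket{0^t00}$ (the $\regC$ register is already back to $\ket{0^m}$, by the second property of Lemma~\ref{lem:partition}) and proceeds to step $i+1$; on any other outcome it records an ``error'' flag and halts. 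By the second property of Lemma~\ref{lem:partition}, conditioned on reaching step $i$ with the normalized state proportional to $\ket{\psi_{\bar{c}_1,\dots,\bar{c}_{i-1}}}$, the outcome $0^tb1$ appears with probability $\|\ket{\psi_{\bar{c}_1,\dots,\bar{c}_{i-1},b}}\|^2/\|\ket{\psi_{\bar{c}_1,\dots,\bar{c}_{i-1}}}\|^2$ and leaves the $(\regX,\regZ)$-state proportional to $\ket{\psi_{\bar{c}_1,\dots,\bar{c}_{i-1},b}}$; multiplying these conditional probabilities along the surviving branch telescopes to show that flag ``$i$'' is recorded with probability exactly $\|\ket{\psi_{\bar{c}_1,\dots,\bar{c}_{i-1},c_i}}_{\regX,\regZ}\|^2$ and, in that event, the $(\regX,\regZ)$-state equals $\ket{\psi_{\bar{c}_1,\dots,\bar{c}_{i-1},c_i}}_{\regX,\regZ}/\|\ket{\psi_{\bar{c}_1,\dots,\bar{c}_{i-1},c_i}}_{\regX,\regZ}\|$ up to a global phase; efficiency is clear since each $G_{i,\gamma_i}$ is efficient and there are only $m=O(\log\secpar)$ of them. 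I expect this last step to be the main obstacle: one has to check that each intermediate measurement genuinely disentangles and resets the ancilla registers (which is exactly what the second property of Lemma~\ref{lem:partition} buys) so that $G_{i+1,\gamma_{i+1}}$ acts on a clean state, and one must carefully track the order in which the random choices $\gamma_i$ and $c_i$ enter the intermediate states so that the expectations in properties~3 and~4 telescope as claimed.
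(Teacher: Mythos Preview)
Your proposal is correct and follows essentially the same approach as the paper: the paper also builds the decomposition by applying Lemma~\ref{lem:partition} sequentially for $i=1,\dots,m$ with threshold $\gamma_i$ to the surviving state $\ket{\psi_{\bar{c}_1,\dots,\bar{c}_{i-1}}}_{\regX,\regZ}$, derives properties~1--4 from the corresponding items of Lemma~\ref{lem:partition} exactly as you do (including the Cauchy--Schwarz step for property~4), and defines $H_{\hat{\gamma},c}$ as the sequential application of the $G_{i,\gamma_i}$ with measurement of $(ph,th,in)$ at each step. Your observation that $\ket{\psi_{\bar{c}_1,\dots,\bar{c}_{i-1}}}$ is independent of $\gamma_i$, which justifies taking the expectation over $\gamma_i$ in property~4, is a point the paper leaves implicit.
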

\begin{proof}
We inductively define $\ket{\psi_{c_1}}_{\regX,\regZ}$,...,$\ket{\psi_{\bar{c}_1,...,\bar{c}_m}}_{\regX,\regZ}$ as follows.

First, we apply Lemma \ref{lem:partition} for the state $\ket{\psi}_{\regX,\regZ}$ with $\gamma=\gamma_1$ to give a decomposition
\begin{align*}
\ket{\psi}_{\regX,\regZ}=\ket{\psi_0}_{\regX,\regZ}+\ket{\psi_1}_{\regX,\regZ} + \ket{\psi_{err,1}}_{\regX,\regZ} 
\end{align*}
where $\ket{\psi_{err,1}}_{\regX,\regZ}$ corresponds to $\ket{\psi_{err}}_{\regX,\regZ}$ in Lemma \ref{lem:partition}.

For each $i=2,...,m$, we apply  Lemma \ref{lem:partition} for the state $\ket{\psi_{\bar{c}_1,...,\bar{c}_{i-1}}}_{\regX,\regZ}$ with $\gamma=\gamma_i$ to give a decomposition 
\begin{align*}
\ket{\psi_{\bar{c}_1,...,\bar{c}_{i-1}}}_{\regX,\regZ}=\ket{\psi_{\bar{c}_1,...,\bar{c}_{i-1},0}}_{\regX,\regZ}+\ket{\psi_{\bar{c}_1,...,\bar{c}_{i-1},1}}_{\regX,\regZ} + \ket{\psi_{err,i}}_{\regX,\regZ} 
\end{align*}  
where 
$\ket{\psi_{\bar{c}_1,...,\bar{c}_{i-1},0}}_{\regX,\regZ}$, $\ket{\psi_{\bar{c}_1,...,\bar{c}_{i-1},1}}_{\regX,\regZ}$, and $\ket{\psi_{err,i}}_{\regX,\regZ}$ corresponds to $\ket{\psi_{0}}_{\regX,\regZ}$, $\ket{\psi_{1}}_{\regX,\regZ}$, and $\ket{\psi_{err}}_{\regX,\regZ}$ in Lemma \ref{lem:partition}, respectively.
  
 Then it is easy to see that we have
 \begin{align*}
    & \ket{\psi}_{\regX,\regZ} = \ket{\psi_{c_1}}_{\regX,\regZ} + \ket{\psi_{\bar{c}_1,c_2}}_{\regX,\regZ} + \cdots +\ket{\psi_{\bar{c}_1,\dots,\bar{c}_{m-1},c_m}}_{\regX,\regZ} + \ket{\psi_{\bar{c}_1,\dots,\bar{c}_m}}_{\regX,\regZ}+ \ket{\psi_{err}}_{\regX,\regZ}
\end{align*}
where we define $\ket{\psi_{err}}_{\regX,\regZ}\defeq \sum_{i=1}^{m}\ket{\psi_{err,i}}_{\regX,\regZ}$. 
  
The first and second claims immediately follow from the forth and fifth claims of Lemma~\ref{lem:partition} and $\gamma_i\leq \gamma_0$ for each $i\in[m]$.  

By the third claim of  Lemma~\ref{lem:partition}, we have $E_{c_1...c_{i}}[\|\ket{\psi_{\bar{c}_1,...,\bar{c}_{i}}}_{\regX,\regZ}\|]\leq \frac{1}{2}E_{c_1...c_{i-1}}[\|\ket{\psi_{\bar{c}_1,...,\bar{c}_{i-1}}}_{\regX,\regZ}\|]$.
Ths implies the third claim.

By the first claim of  Lemma~\ref{lem:partition}, we have $E_{\gamma_i}[\|\ket{\psi_{err,i}}_{\regX,\regZ}\|^2]\leq \frac{6}{T}+\negl(\secpar)$.
The forth claim follows from this and the Cauchy-Schwarz inequality. 

Finally, for proving the fifth claim, we define the procedure $H_{\hat{\gamma},c}$ as described in Procedure~\ref{fig:process_H}
We can easily see that $H_{\hat{\gamma},c}$ satisfies the desired property by the second claim of  Lemma~\ref{lem:partition}.

\floatname{algorithm}{Procedure}
\begin{algorithm}[h]
    \begin{mdframed}[style=figstyle,innerleftmargin=10pt,innerrightmargin=10pt]
   On input $\ket{\psi}_{\regX,\regZ}$, it works as follows:
   
   For each $i=1,...,m$, it applies 
    \begin{enumerate}
    \item Prepare registers $\regC$, $(ph_1,th_1,in_1)$,..., $(ph_m,th_m,in_m)$ all of which are initialized to be $\ket{0}$.
    \item For each $i=1,...,m$, do the following: 
      \begin{enumerate}
      \item Apply $G_{i,\gamma_i}$ on the quantum state in the registers $(\regC,\regX,\regZ,ph_i,th_i,in_i)$.
      \item Measure the registers $(ph_i,th_i,in_i)$ in the computational basis.
      \item If the outcome is $0^tc_{i}1$, then it halts and returns the state in the register $(\regX,\regZ)$. If the outcome is $0^t\bar{c}_{i}1$, continue to run. Otherwise, immediately halt and abort.
      \end{enumerate}   
    \end{enumerate}

    \caption{$H_{\hat{\gamma},c}$}
    \label{fig:process_H}
    \end{mdframed}
\end{algorithm}
\end{proof}

Given Lemma~\ref{lem:partition_further}, we can start proving Theorem~\ref{thm:rep_soundness}. 

\begin{proof}[Proof of Theorem~\ref{thm:rep_soundness}]
 
First, we recall how a cheating prover characterized by $(U_0,U)$ works.
When the first message $k$ is given, it first applies 
\begin{align*}
    &U_0\ket{0}_{\regX,\regZ}\ket{0}_{\regY}\ket{k}_{\regK} \xrightarrow{\mbox{measure }\regY} \ket{\psi(k,y)}_{\regX,\regZ}\ket{k}_{\regK}.
\end{align*}
to generate the second message $y$ and $\ket{\psi(k,y)}_{\regX,\regZ}$.
Then after receiving the third message $c$, it applies $U$ on $\ket{c}_{\regC}\ket{\psi(k,y)}_{\regX,\regZ}$ and measures the register $\regX$ in the computational basis to obtain the forth message $a$.
In the following, we just write $\ket{\psi}_{\regX,\regZ}$ to mean $\ket{\psi(k,y)}_{\regX,\regZ}$ for notational simplicity.
%let $V_{i,b}$ be a unitary over $\hil_{\regC,\regX,\regZ}$ that runs the verification procedure for $c_i=b$ on the $i$-th coordinate and write the verification result in a designated register (say, $i$-th qubit of $\regZ$), and $M_i$ be the measurement on the designated register that contains the verification result on the $i$-th coordinate.
%Note that $V_{i,1}$ cannot be applied publicly without knowing the trapdoor,  but this does not affect our analysis below.
Let $M_{i,k_i,\td_i,y_i,c_i}$ be the measurement that outputs the verification result of the value in the register $\regX_i$ w.r.t.  $k_i,\td_i,y_i,c_i$, and let $M_{k,\td,y,c}$ be the measurement that returns $\top$ if and only if $M_{i,k_i,\td_i,y_i,c_i}$ returns $\top$ for all $i\in[m]$ where $k=(k_1,...,k_m)$, $\td=(\td_1,...,\td_m)$, $y=(y_1,...,y_m)$ and $c=(c_1,...,c_m)$.
%For any state $\ket{\phi}_{\regC,\regX,\regZ}$, we denote $M\circ \ket{\phi}_{\regC,\regX,\regZ}=\top$ to mean $M_i\circ \ket{\phi}_{\regC,\regX,\regZ}=\top$ for all $i\in[m]$ for notational simplicity. 
With this notation, a cheating prover's success probability can be written as 
\begin{align*}
    \Pr_{k,\td,y,c}[M_{k,\td,y,c}U\ket{c}_{\regC}\ket{\psi}_{\regX,\regZ} = \top].
\end{align*}

Let $\gamma_0$, $\hat{\gamma}$, and $T$ be as in Lemma~\ref{lem:partition_further}.
According to Lemma~\ref{lem:partition_further}, for any fixed $\hat{\gamma}$ and $c\in \bit^{m}$, we can decompose $\ket{\psi}_{\regX,\regZ}$ as 
\begin{align*}
    \ket{\psi}_{\regX,\regZ} =  \ket{\psi_{c_1}}_{\regX,\regZ}+ \ket{\psi_{\bar{c}_1,c_2}}_{\regX,\regZ} + \cdots + \ket{\psi_{\bar{c}_1,\dots, \bar{c}_{m-1},c_{m}}}_{\regX,\regZ} + \ket{\psi_{\bar{c}_1,\dots, \bar{c}_{m-1},\bar{c}_{m}}}_{\regX,\regZ}+ \ket{\psi_{err}}_{\regX,\regZ}.
\end{align*}

To prove the theorem, we show the following two inequalities.
First,  for any  fixed $\hat{\gamma}$, $i\in[m]$, $c\in \bit^{m}$ such that $c_i=0$, $k_i$, $\td_i$, and $y_i$, we have
\begin{align}
 \Pr\left[M_{i,k_i,\td_i,y_i,0} \circ \frac{U\ket{c}_{\regC}\ket{\psi_{\bar{c}_1,\ldots,\bar{c}_{i-1},0}}_{\regX,\regZ}}{\|\ket{\psi_{\bar{c}_1,\ldots,\bar{c}_{i-1},0}}_{\regX,\regZ}\|}=\top\right]\leq 2^{m-1}\gamma_0+\negl(\secpar). \label{eq:Test}
\end{align}
This easily follows from the first claim of Lemma~\ref{lem:partition_further}

Second, for any  fixed $\hat{\gamma}$, $i\in[m]$, and $c\in \bit^{m}$ such that $c_i=1$,
\begin{align}
    \underset{k,\td,y}{E}\left[\|\ket{\psi_{\bar{c}_1,\dots,\bar{c}_{i-1},1}}_{\regX,\regZ}\|^2\Pr\left[M_{i,k_i,\td_i,y_i,1}\circ U\frac{\ket{c}_{\regC}\ket{\psi_{\bar{c}_1,\dots,\bar{c}_{i-1},1}}_{\regX,\regZ}}{\|\ket{\psi_{\bar{c}_1,\dots,\bar{c}_{i-1},1}}_{\regX,\regZ}\|} = \top\right]\right] = \negl(n) \label{eq:Hada}
\end{align}
assuming the quantum hardness of LWE problem.

For proving Eq.~\ref{eq:Hada}, we consider a cheating prover against the original Mahadev's protocol on the $i$-th corrdinate described below:

\begin{enumerate}
    \item Given $k_i$, it picks $k_{-i}=k_1...k_{i-1},k_{i+1},...,k_{m}$ as in the protocol and computes $U_0\ket{0}_{\regX,\regZ}\ket{0}_{\regY}\ket{k}_{\regK}$ and measure the register $\regY$ to obtain $y=(y_1,...,y_m)$ along with the corresponding state $\ket{\psi}_{\regX,\regZ}=\ket{\psi(k,y)}_{\regX,\regZ}$.
    \item Apply $H_{\hat{\gamma},c}$ to generate the state $\frac{\ket{\psi_{\bar{c}_1,\dots,\bar{c}_{i-1},1}}_{\regX,\regZ}}{\|\ket{\psi_{\bar{c}_1,\dots,\bar{c}_{i-1},1}}_{\regX,\regZ}\|}$, which succeeds with probability $\|\ket{\psi_{\bar{c}_1,\dots,\bar{c}_{i-1},1}}_{\regX,\regZ}\|^2$ (ignoring a global phase factor).
    We denote by $\Succ$ the event that it succeeds in generating the state.
    If it fails to generate the state, then it overrides $y_i$ by picking it in a way such that it can pass the test round with probability $1$, which can be done according to Fact~\ref{fact:perfectly_pass_test}.
    Then it sends $y_i$ to the verifier.
    \item Given a challenge $c'_i$, it works as follows:
    \begin{itemize}
     \item When $c'_i=0$ (i.e., Test round), if $\Succ$ occurred, then it runs $\ext_i$ in the second claim of Lemma~\ref{lem:partition_further} on input $\frac{\ket{0^m}_{\regC}\ket{\psi_{\bar{c}_1,\dots,\bar{c}_{i-1},1}}_{\regX,\regZ}}{\|\ket{\psi_{\bar{c}_1,\dots,\bar{c}_{i-1},1}}_{\regX,\regZ}\|}$ to generate a forth message accepted with probability $1-\negl(\secpar)$. 
     If $\Succ$ did not occur, then it returns a forth message accepted with probability $1$, which is possible by Fact~\ref{fact:perfectly_pass_test}.
    \item When $c'_i=1$ (i.e., Hadamard round), if $\Succ$ occurred, then it computes  $U\frac{\ket{c}_{\regC}\ket{\psi_{\bar{c}_1,\dots,\bar{c}_{i-1},1}}_{\regX,\regZ}}{\|\ket{\psi_{\bar{c}_1,\dots,\bar{c}_{i-1},1}}_{\regX,\regZ}\|}$ and measure the register $\regX_i$ to obtain the forth message $a_i$.
    If $\Succ$ did not occur, it just aborts.
    \end{itemize}
\end{enumerate}
Then we can see that this cheating adversary passes the test round with overwhelming probability and passes the Hadamard round with the probability equal to the LHS of Eq.~\ref{eq:Hada}.
Therefore, Eq.~\ref{eq:Hada} follows from Lemma~\ref{lem:Mah_soundness} assuming the quantum hardness of LWE problem.

Now, we are ready to prove the theorem. 
As remarked at the beginning of Sec. \ref{sec:proof_of_soundness}, it suffices to show that for any $\mu=1/\poly(n)$, there exists $m=O(\log(n))$ such that the success probability of the cheating prover is at most $\mu$.
Here we set $m = \log \frac{1}{\mu^2}$, $\gamma_0 = 2^{-2m}$, and $T=2^{m}$. 
Note that this parameter setting satisfies the requirement for Lemma~\ref{lem:partition_further} since
$m=\log \frac{1}{\mu^2}=\log (\poly(\secpar))=O(\log \secpar)$ and
$\frac{\gamma_0}{T}=2^{-3m}=\mu^{6}=1/\poly(\secpar)$.
Then we have
\begin{align*}
    &\Pr_{k,\td,y,c}\left[M_{k,\td,y,c}\circ U\ket{c}_{\regC}\ket{\psi}_{\regX,\regZ}= \top\right] \\
     &=\Pr_{k,\td,y,c,\hat{\gamma}}\left[M_{k,\td,y,c}\circ U\ket{c}_{\regC}\left(\sum_{i=1}^{m}\ket{\psi_{\bar{c}_1,\dots,\bar{c}_{i-1},c_i}}_{\regX,\regZ} + \ket{\psi_{\bar{c}_1,\dots,\bar{c}_m}}_{\regX,\regZ}+ \ket{\psi_{err}}_{\regX,\regZ}\right) = \top\right] \\
    &\leq (m+2) \underset{k,\td,y,c,\hat{\gamma}}{E}\Biggl[\sum_{i=1}^{m} \|\ket{\psi_{\bar{c}_1,\dots,\bar{c}_{i-1},c_i}}_{\regX,\regZ}\|^2\Pr\left[M_{k,\td,y,c}\circ U\frac{\ket{c}_{\regC}\ket{\psi_{\bar{c}_1,\dots,\bar{c}_{i-1},c_i}}_{\regX,\regZ}}{\|\ket{\psi_{\bar{c}_1,\dots,\bar{c}_{i-1},c_i}}_{\regX,\regZ}\|}=\top\right]\\
   &+\|\ket{\psi_{\bar{c}_1,\dots,\bar{c}_m}}_{\regX,\regZ}\|^2\Pr\left[M_{k,\td,y,c}\circ U\frac{\ket{c}_{\regC}\ket{\psi_{\bar{c}_1,\dots,\bar{c}_m}}_{\regX,\regZ}}{\|\ket{\psi_{\bar{c}_1,\dots,\bar{c}_m}}_{\regX,\regZ}} =\top\right]\\
    &+ \|\ket{\psi_{err}}_{\regX,\regZ}\|^2\Pr\left[M_{k,\td,y,c}\circ U\frac{\ket{c}_{\regC} \ket{\psi_{err}}_{\regX,\regZ}}{\|\ket{\psi_{err}}_{\regX,\regZ}\|}=\top\right]\Biggr]\\
        &\leq (m+2) \underset{k,\td,y,c,\hat{\gamma}}{E}\Biggl[\sum_{i=1}^{m} \|\ket{\psi_{\bar{c}_1,\dots,\bar{c}_{i-1},c_i}}_{\regX,\regZ}\|^2\Pr\left[M_{i,k_i,\td_i,y_i,c_i}\circ U\frac{\ket{c}_{\regC}\ket{\psi_{\bar{c}_1,\dots,\bar{c}_{i-1},c_i}}_{\regX,\regZ}}{\|\ket{\psi_{\bar{c}_1,\dots,\bar{c}_{i-1},c_i}}_{\regX,\regZ}\|}=\top\right]\\
   &+\|\ket{\psi_{\bar{c}_1,\dots,\bar{c}_m}}_{\regX,\regZ}\|^2+ \|\ket{\psi_{err}}_{\regX,\regZ}\|^2 \Biggr]\\
    &\leq (m+2)(m(2^{m-1}\gamma_0 +\negl(n))+ 2^{-m} + \frac{m^2}{T}+\negl(\secpar)) \\
    & \leq \mathsf{poly}(\log \mu^{-1}) \mu^{2}+\negl(\secpar). 
\end{align*}
The first equation follows from Lemma \ref{lem:partition_further}. The first inequality follows from Lemma~\ref{lem:Cauchy-Schwarz}.
 The second inequality holds since considering the verification on a particular coordinate just increases the acceptance probability and probabilities are at most $1$.
The third inequality follows from Eq.~\ref{eq:Test} and \ref{eq:Hada}, which give an upper bound of the first term and Lemma~\ref{lem:partition_further}, which gives upper bounds of the second and third terms.
The last inequality follows from our choices of $\gamma_0$, $T$, and $m$.
For sufficiently large $\secpar$,  this can be upper bounded by $\mu$.
\end{proof}

\section{Two-Round Protocol via Fiat-Shamir Transform}\label{sec:tworound}
In this section, we show that if we apply the Fiat-Shamir transform to $m$-parallel version of the Mahadev's protocol, then we obtain two-round protocol in the QROM. 
That is, we prove the following theorem.
\begin{theorem}\label{thm:MahFS}
Assuming LWE assumption, there exists a two-round CVQC protocol with overwhelming completeness and negligible soundness error in the QROM.
\end{theorem}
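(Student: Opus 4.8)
The plan is to instantiate the two-round protocol as the Fiat--Shamir collapse of the last three messages of the $m$-parallel Mahadev protocol, for $m=\secpar$ (so that $m=\Omega(\log^2\secpar)$, as Theorems~\ref{thm:rep_completeness} and~\ref{thm:rep_soundness} require). Concretely: in round~1 the verifier runs $\ver_1^m$ to get $(\key,\td)$, sends $\key$, and stores $\td$; in round~2 the prover runs $\pro_2^m(\key)$ to get $(\comy,\ket{\st_\pro})$, sets $c\defeq H(\key,\comy)\in\bit^m$, runs $\pro_4^m(\ket{\st_\pro},c)$ to get $\ans$, and sends $(\comy,\ans)$; the verifier accepts iff $\ver_\out^m(\key,\td,\comy,H(\key,\comy),\ans)=\top$. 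The verifier is still classical polynomial time and the prover is still quantum polynomial time, so this is a genuine two-round CVQC protocol. \textbf{Completeness} is immediate from Theorem~\ref{thm:rep_completeness}: since $\pro_2^m$ (and $\pro_4^m$) make no oracle queries, for $x\in\lang$ the value $c=H(\key,\comy)$ is, conditioned on the honest prover's view, a uniformly random string, so averaging over $H$ the acceptance probability equals that of the honest interactive prover against a uniform challenge, which is $1-\negl(\secpar)$.

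For \textbf{soundness}, fix $x\notin\lang$ and a quantum polynomial-time cheating prover $\pro^*$ making $q=\poly(\secpar)$ queries that convinces the verifier with probability $\epsilon$. The plan is to apply the quantum security of Fiat--Shamir \cite{C:LiuZha19,C:DFMS19} --- specifically the measure-and-reprogram technique --- to build a quantum polynomial-time cheating prover $\tilde{\pro}$ against the \emph{interactive} $m$-parallel protocol: $\tilde{\pro}$ receives $\key$ from $\ver^m$, runs $\pro^*(\key)$ with a lazily simulated $H$, measures a uniformly chosen one of $\pro^*$'s queries to extract a point $(\key,\comy)$, and sends $\comy$ as the second message; on receiving $c\sample\bit^m$ it reprograms $H(\key,\comy)\defeq c$, finishes $\pro^*$ to obtain $(\comy',\ans)$, and forwards $\ans$ as the fourth message. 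The measure-and-reprogram lemma guarantees $\Pr[\langle\tilde{\pro},\ver^m\rangle(x)=\top]\ge \epsilon/O(q^2)-\negl(\secpar)$; crucially, neither $\tilde{\pro}$ nor the simulator ever evaluates the (trapdoor-dependent) predicate $\ver_\out^m$ --- they only relay messages and reprogram the oracle --- and the reduction preserves polynomial running time. Since $\tilde{\pro}$ is quantum polynomial time, Theorem~\ref{thm:rep_soundness} (under LWE) bounds its success probability by $\negl(\secpar)$, hence $\epsilon\le O(q^2)\cdot\negl(\secpar)=\negl(\secpar)$.

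The bulk of the difficulty for this theorem was already absorbed into the parallel-repetition analysis of Section~\ref{sec:proof_of_soundness}; here the one point that needs genuine care is that Mahadev's protocol is \emph{not} a $\Sigma$-protocol --- it has a verifier-sent preamble $\key$ (on which $\td$ depends), and Fiat--Shamir is being applied only to the three-message tail $\pro_2^m\to\ver_3^m\to\pro_4^m$ conditioned on $\key$. I expect verifying that the measure-and-reprogram statements of \cite{C:LiuZha19,C:DFMS19} go through in this setting to be the main obstacle: one treats $\key$ as part of the (fixed, once sampled) ``statement'' of the compiled argument and notes that extraction and reprogramming touch only the commitment coordinate, so the lemma applies conditioned on each $(\key,\td)$ and then averages. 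It also matters that the reduction needs only plain (computational) soundness of the interactive protocol --- exactly what Theorem~\ref{thm:rep_soundness} provides --- rather than any special-soundness-type property, and that the total query count ($\pro^*$'s $q$ queries plus the single query the honest verifier makes) stays polynomial so that the $O(q^2)$ loss is harmless against the negligible interactive bound.
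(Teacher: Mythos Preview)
Your proposal is correct and takes essentially the same approach as the paper: both apply the Fiat--Shamir transform to the $m$-parallel Mahadev protocol and reduce soundness to Theorem~\ref{thm:rep_soundness} via the measure-and-reprogram lemma of \cite{C:DFMS19}, fixing $(\key,\td)$, applying the lemma to $\A(\key)$ with the predicate $\ver_\out(\key,\td,\cdot,\cdot,\cdot)$, and then averaging over $(\key,\td)$. The only cosmetic differences are that the paper hashes only $\comy$ (not $(\key,\comy)$) and invokes the black-box simulator $\calS^{\A}$ from \cite[Theorem~2]{C:DFMS19} rather than spelling out the query-measurement and reprogramming steps; your identification of the ``preamble $\key$'' subtlety and its resolution by conditioning is exactly what the paper does.
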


\begin{proof}%(of Theorem~\ref{thm:MahFS})
Let $m>\secpar$ be a sufficiently large integer so that $m$-parallel version of the Mahadev's protocol has negligible soundness.
For notational simplicity, we abuse the notation to simply use $\ver_{i}$, $\pro_{i}$, and $\ver_{\out}$ to mean the $m$-parallel repetitions of them. 
Let $H:\calY\ra \bit^{m}$ be a hash function idealized as a quantum random oracle where $\calX$ is the space of the second message $y$ and $\calY=\bit^{m}$.
Our two-round protocol is described below:
\begin{description}
\item[First Message:] The verifier runs $\ver_1$ to generate $(k,\td)$. Then it sends $k$ to the prover and keeps $\td$ as its state.
\item[Second Message:] The prover runs $\pro_2$ on input $k$ to generate $y$ along with the prover's state $\ket{\st_\pro}$. Then set $c\defeq H(y)$, and runs $\pro_4$ on input $\ket{\st_\pro}$ and $y$ to generate $a$. Finally, it returns $(y,a)$ to the verifier.
\item[Verification:] The verifier computes $c=H(y)$, runs $\ver_{\out}(k,\td,y,c,a)$, and outputs as $\ver_{\out}$ outputs.
\end{description}

It is clear that the completeness is preserved given that $H$ is a random oracle.
\ifnum\submission=1
We can reduce the soundness of this protocol to the soundness of $m$-parallel version of the Mahadev's protocol by using the result of \cite{C:DFMS19}, which shows that Fiat-Shamir transform preserves soundness in the QROM.
See Appendix \ref{app:fiat-shamir} for details.
\else
We reduce the soundness of this protocol to the soundness of $m$-parallel version of the Mahadev's protocol.
For proving this, we borrow the following lemma shown in \cite{C:DFMS19}.

%\begin{lemma}[\cite{C:DFMS19}]\label{lem:FS}
%Let $\calY$ be finite non-empty sets. There exists a black-box polynomial-time two-stage quantum algorithm $\calS$ with the following property. Let $\A$ be an arbitrary oracle quantum algorithm that makes $q$ queries to a uniformly random $H:\calY\ra \bit^{m}$ and that outputs some $y\in \calY$ and output $a$. 
%Then, the two-stage algorithm $\calS^{\A}$ outputs $y\in\calY$ in the first stage
%and, upon a random $c^*\in \bit^{m}$ as input to the second stage, output $a$ so that for any $x_\circ\in \calX$ and any predicate $V$:
%\begin{align*}
%    \Pr_{c^*}\left[y=y_\circ \land V(y,c^*,a):(y,a)\sample \langle\calS^{\A},c^* \rangle \right]\lapprox \frac{1}{O(q^2)}\Pr_{H}\left[y=y_{\circ}\land V(y,H(y),a):(y,a)\sample \A^H\right],
%\end{align*}
%where 
%$(y,a)\sample \langle\calS^{\A},c^* \rangle$ means that $\calS^{\A}$ outputs $y$ and $a$ in the first and second stages respectively on the second stage input $c^*$, and  $\lapprox$ hides a term that is bounded by $\frac{1}{2^{m+1}q}$ when summed over all $y_{\circ}\in \calY$.
%\end{lemma}

\begin{lemma}[{\cite[Theorem 2]{C:DFMS19}}]\label{lem:FS}
Let $\calY$ be finite non-empty sets. There exists a black-box polynomial-time two-stage quantum algorithm $\calS$ with the following property. Let $\A$ be an arbitrary oracle quantum algorithm that makes $q$ queries to a uniformly random $H:\calY\ra \bit^{m}$ and that outputs some $y\in \calY$ and output $a$. 
Then, the two-stage algorithm $\calS^{\A}$ outputs $y\in\calY$ in the first stage
and, upon a random $c\in \bit^{m}$ as input to the second stage, output $a$ so that for any $x_\circ\in \calX$ and any predicate $V$:
\begin{align*}
    \Pr_{c}\left[V(y,c,a):(y,a)\sample \langle\calS^{\A},c \rangle \right]\leq \frac{1}{O(q^2)}\Pr_{H}\left[V(y,H(y),a):(y,a)\sample \A^H\right]-\frac{1}{2^{m+1}q},
\end{align*}
where 
$(y,a)\sample \langle\calS^{\A},c \rangle$ means that $\calS^{\A}$ outputs $y$ and $a$ in the first and second stages respectively on the second stage input $c$.
\end{lemma}

We assume that there exists an efficient adversary $\A$ that breaks the soundness of the above two-round protocol.
We fix $x\notin \lang$ on which $\A$ succeeds in cheating.
We fix $(k,\td)$ that is in the support of the verifier's first message.
We apply Lemma~\ref{lem:FS} for $\A=\A(k)$ and $V=\ver_\out(k,\td,\cdot,\cdot,\cdot)$, to obtain an algorithm $\calS^{\A(k)}$ that satisfies
\begin{align*}
    &\Pr_{c}\left[V_{\out}(k,\td,y,c,a):(y,a)\sample \langle\calS^{\A(k)},c \rangle \right]\\
    \leq &\frac{1}{O(q^2)}\Pr_{H}\left[V_{\out}(k,\td,y,H(y),a):(y,a)\sample \A^H(k)\right]-\frac{1}{2^{m+1}q}.
\end{align*}
Averaging over all possible $(k,\td)$, we have
\begin{align*}
    &\Pr_{k,\td,c}\left[V_{\out}(k,\td,y,c,a):(y,a)\sample \langle\calS^{\A(k)},c \rangle \right]\\
    \leq &\frac{1}{O(q^2)}\Pr_{k,\td,H}\left[V_{\out}(k,\td,y,H(y),a):(y,a)\sample \A^H(k)\right]-\frac{1}{2^{m+1}q}.
\end{align*}
Since we assume that $\A$ breaks the soundness of the above two-round protocol,
\[
\Pr_{k,\td,H}\left[V_{\out}(k,\td,y,H(y),a):(y,a)\sample \A^H(k)\right]
\]
is non-negligible in $\secpar$.
Therefore, as long as $q=\poly(\secpar)$, 
\[
\Pr_{k,\td,c^*}\left[V_{\out}(k,\td,y,c^*,a):(y,a)\sample \langle\calS^{\A(k)},c^* \rangle \right]\]
is also non-negligible in $\secpar$.
Then, we construct an adversary $\B$ that breaks the soundness of parallel version of Mahadev's protocol as follows:
\begin{description}
\item[Second Message:] Given the first message $k$, $\B$ runs the first stage of $\calS^{\A(k)}$ to obtain $y$. It sends $y$ to the verifier.
\item[Forth Message:] Given the third message $c$, $\B$ gives $c$ to $\calS^{\A(k)}$ as the second stage input, and let $a$ be the output of it.
Then $\B$ sends $a$ to the verifier.
\end{description}
Clearly, the probability that $\B$ succeeds in cheating is 
\[
\Pr_{k,\td,c^*}\left[V_{\out}(k,\td,y,c^*,a):(y,a)\sample \langle\calS^{\A(k)},c^* \rangle \right],\]
which is non-negligible in $\secpar$.
This contradicts the soundness of $m$-parallel version of Mahadev's protocol (Theorem~\ref{thm:rep_soundness}).
Therefore we conclude that there does not exists an adversary that succeeds in the two-round protocol with non-negligible probability assuming LWE in the QROM.
\fi
\end{proof}

\section{Making Verifier Efficient}\label{sec:efficient}
In this section, we construct a CVQC protocol with efficient verification in the CRS+QRO model where a classical common reference string is available for both prover and verifier in addition to quantum access to QRO.
Our main theorem in this section is stated as follows:
\begin{theorem}\label{thm:Eff}
Assuming LWE assumption and existence of post-quantum iO, post-quantum FHE, and two-round CVQC protocol in the standard model, there exists a two-round CVQC protocol for $\QTIME(T)$ with verification complexity $\poly(n, \log T)$ in the CRS+QRO model.
\end{theorem}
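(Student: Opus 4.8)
I would start from a two-round CVQC protocol $\Pi_0$ in the standard model (assumed to exist; e.g.\ the standard-model instantiation of the protocol of Theorem~\ref{thm:MahFS} with a concrete hash function), and delegate the expensive parts of its verifier back to the prover. The verifier $(\ver_1,\ver_\out)$ of $\Pi_0$ is slow for two reasons: the first message $k$ and the trapdoor $\td$ produced by $\ver_1$ already have size $\poly(T)$ (one trapdoor-claw-free key per committed qubit), so even transmitting $k$ is costly; and $\ver_\out$ must read the prover's long reply and re-inspect the delegated computation. I would fix the first issue with a strong output-compressing randomized encoding and the second with FHE composed with a SNARK, invoking the QRO only for the SNARK and for a final Fiat--Shamir step; all the primitives used (randomized encoding from iO and LWE, FHE, a PRG from LWE, and the QROM SNARK of Chiesa et al.) follow from the stated assumptions. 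Throughout, ``efficient'' means time $\poly(\secpar,|x|,\log T)$.

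\textbf{Succinct messages.} Let $V_1$ be the deterministic Turing machine that on a short seed $s$ runs $\ver_1(1^\secpar,x;\PRG(s))$ and outputs only the key $k$. Instead of sending $k$, the verifier, given the common reference string $\crs$ and a freshly sampled seed $s$, computes $\ek\sample\rsetup(1^\secpar,1^\ell,\crs)$ and $\widehat{V}_1\sample\renc(\ek,V_1,s,T')$ for a time bound $T'=\poly(T)$, and sends $(\crs,\ek,\widehat{V}_1)$; by the efficiency of strong output-compressing randomized encoding the verifier's work and the message length are $\poly(\secpar,|x|,\log T)$, and the (efficient) prover recovers $k=\rdec(\crs,\widehat{V}_1)$ in time $\poly(T)$. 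The verifier keeps only $s$. Since the encoding's simulator $\rsim$ reveals only $V_1(s)=k$ and $\PRG$ is pseudorandom, a cheating prover's view here is indistinguishable from one that depends on $s$ only through $k$, so soundness is preserved up to negligible loss.

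\textbf{Delegating the final check.} The verifier also runs $(\pk,\sk)\sample\fhekeygen(1^\secpar)$ and sends $\ct\sample\fheenc(\pk,s)$, keeping $\sk$. Having produced the $\Pi_0$-messages $y,a$ (by running the honest $\Pi_0$-prover on $k$), the prover homomorphically evaluates $\ct'\sample\fheeval(\pk,C_{k,y,a},\ct)$, where $C_{k,y,a}$ on input $s$ recomputes $\td$ from $\PRG(s)$ and outputs $\ver_\out(k,\td,y,c,a)$ with $c$ the $\Pi_0$-challenge determined by the transcript; it then produces a SNARK proof $\pi$ (relative to the QRO) for the statement $x_\snark=(\pk,\ct,\ct',\crs,\ek,\widehat{V}_1)$ with witness $(k,y,a,r)$ certifying $k=\rdec(\crs,\widehat{V}_1)$ and $\ct'=\fheeval(\pk,C_{k,y,a},\ct;r)$. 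This relation is decidable in time $\poly(T)$, so by the efficient-verification property of the SNARK the verifier runs in time $\poly(\secpar,|x|,\log T)$. The prover sends $(\ct',\pi)$, and the verifier accepts iff $\ver_\snark(x_\snark,\pi)=\top$ and $\fhedec(\sk,\ct')=\top$. For soundness, if a cheating prover makes this verifier accept with non-negligible probability on $x\notin\lang$, then SNARK extractability yields $(k,y,a,r)$ with $\ct'=\fheeval(\pk,C_{k,y,a},\ct;r)$, hence $\fhedec(\sk,\ct')=C_{k,y,a}(s)=\ver_\out(k,\td,y,c,a)$ by FHE correctness; acceptance forces this to be $\top$, and $k=V_1(s)$, so $(y,a)$ is a successful $\Pi_0$-cheat against an honestly generated key. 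Replacing $\ct$ by $\fheenc(\pk,0)$ (CPA-security) and $\widehat{V}_1$ by a simulated encoding makes the cheating prover's input independent of $\td$ given $k$, producing an efficient cheating prover against $\Pi_0$ itself---a contradiction.

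\textbf{Non-adaptivity of the SNARK and round collapse.} Because the QROM SNARK of Chiesa et al.\ is proven only non-adaptively sound, while $x_\snark$ (through $\ct'$) is chosen after querying the oracle, I would first insert a public-coin third message: the verifier sends a uniformly random salt $z$ and the prover proves its statement relative to $H(z,\cdot)$. Since $x_\snark$ is fully determined by the first two messages---neither $\Pi_0$ nor the FHE evaluation touches $H$---a reprogramming argument shows that, with overwhelming probability over $z$, the statement is fixed before any query to $H(z,\cdot)$, so non-adaptive soundness applies; this yields a four-round protocol with an efficient verifier. Finally the third message $z$ is uniformly random and independent of everything, so the Fiat--Shamir transform applied exactly as in the proof of Theorem~\ref{thm:MahFS} (via Lemma~\ref{lem:FS}) collapses it to two rounds while preserving soundness, giving the claimed protocol. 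The hard part will be this last step: making the non-adaptive-to-adaptive reduction for the QROM SNARK rigorous while the SNARK's oracle, the salt, and the outer Fiat--Shamir hash are interrelated, so that the oracle is reprogrammed at the right place and oracle queries are accounted for without disturbing either the SNARK extractor or the $\Pi_0$-reduction.
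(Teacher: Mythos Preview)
Your plan is essentially the paper's own construction: compress the first message with a strong output-compressing randomized encoding of $s\mapsto\ver_1(1^\secpar,x;\PRG(s))$, delegate $\ver_\out$ via FHE on an encryption of $s$, certify the homomorphic evaluation with a QROM SNARK, salt the SNARK's oracle with a fresh public-coin $z$ to recover non-adaptive soundness, and finally Fiat--Shamir the salt away. Your soundness sketch (extract a witness, use FHE correctness to turn acceptance into $\ver_\out=\top$, then strip the FHE plaintext and simulate the encoding to reduce to $\Pi_0$) matches the paper's game sequence; the ``hard part'' you single out is exactly what the paper handles via its oracle-reprogramming Lemma~\ref{lem:adaptive_program}.

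Two small technical slips are worth flagging, both easy to repair. First, you have the verifier compute $\ek$ and send $(\crs,\ek,\widehat V_1)$, but $\crs\in\bit^{\ell}$ has length $\poly(T)$; if the verifier reads or transmits it, verification is no longer $\poly(\secpar,\log T)$. The paper fixes this by putting $\ek$ (short) in the verifier's CRS and $\crs_\re$ (long) in the prover's CRS, so the verifier never touches $\crs_\re$. Second, your SNARK instance $x_\snark$ contains $\crs$ and your relation checks $k=\rdec(\crs,\widehat V_1)$; including a $\poly(T)$-length string in the instance again kills the $\poly(\log T)$ SNARK verification bound. The paper sidesteps this entirely: its SNARK witness is just the prover's response $e$, and the homomorphically evaluated circuit $C[x,e]$ \emph{recomputes} $(k,\td)$ from $s$ via $\ver_1(1^\secpar,x;\PRG(s))$, so neither $\crs_\re$ nor $k$ appears in the SNARK statement. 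With those adjustments your argument is the paper's.
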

%assuming that a standard model instantiation of the two-round protocol given in Sec.~\ref{sec:tworound} is secure.
\begin{remark}
One may think that the underlying two-round CVQC protocol can be in the QROM instead of in the standard model since we rely on the QROM anyway.
However, this is not the case since we need to use the underlying two-round CVQC in a non-black box way, which cannot be done if that is in the QROM.
Since our two-round protocol given in Sec.~\ref{sec:tworound} is only proven secure in the QROM, we do not know any two-round CVQC protocol provably secure in the standard model.
On the other hand, it is widely used heuristic in cryptography that a scheme proven secure in the QROM is also secure in the standard model if the QRO is instantiated by a well-designed cryptographic hash function such as SHA-3. 
Therefore, we believe that it is reasonable to assume that a standard model instantiation of the scheme in  Sec.~\ref{sec:tworound} with a concrete hash function is sound.  
\end{remark}
\begin{remark}
One may think we need not assume CRS in addition to QRO since CRS may be replaced with an output of QRO.
This can be done if CRS is just a uniformly random string.
However, in our construction, CRS is non-uniform and has a certain structure.
Therefore we cannot implement CRS by QRO.
\end{remark}

\subsection{Preparation}
First, we prepare a lemma that is used in our security proof.
\begin{lemma}\label{lem:adaptive_program}
For any finite sets $\calX$ and $\calY$ and two-stage oracle-aided quantum algorithm $\A=(\A_1,\A_2)$, we have
\begin{align*}
\Pr\left[1 \sample \A_2^{H}(\ket{\st_\A},z):\ket{\st_\A}\sample\A_1^{H}()\right]-\Pr\left[1 \sample \A_2^{H[z,G]}(\ket{\st_\A},z):\ket{\st_\A}\sample\A_1^{H}()\right]\leq q_12^{-\frac{\ell}{2}+1}
\end{align*}
where 
$z\sample \bit^{\ell}$,
$H\sample \func(\bit^{\ell}\times \calX,\calY)$, $G\sample \func(\calX,\calY)$, $H[z,G]$ is defined by 
\begin{align*}
H[z,G](z',x)=
\begin{cases}
G(x) &\text{~if~}z'=z \\
H(z',x) &\text{~else}
\end{cases}.    
\end{align*}
where $q_1$ denotes the maximal number of queries by $\A_1$.
\end{lemma}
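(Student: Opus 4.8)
The plan is to prove Lemma~\ref{lem:adaptive_program} by a hybrid argument that reduces it to the standard (BBBV/one-way-to-hiding style) estimate bounding how much a $q_1$-query algorithm's output state can change when its oracle is reprogrammed on a set of points chosen uniformly at random \emph{after} the algorithm has finished running. The conceptual subtlety — and the reason only $q_1$, and not the number of queries made by $\A_2$, appears in the bound — is that although $\A_2$ is given $z$ and could therefore detect a reprogramming at prefix $z$ simply by querying there, the reprogrammed values $G(\cdot)$ and the original values $H(z,\cdot)$ are identically distributed; so from $\A_2$'s viewpoint the two worlds differ only insofar as $\st_\A$ already encodes information about $H(z,\cdot)$, and $\st_\A$ can encode essentially no such information because $\A_1$ ran before $z$ was sampled.

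To make this precise, first I would reformulate the second experiment. Sample $H'\sample\func(\bit^\ell\times\calX,\calY)$, $F\sample\func(\calX,\calY)$ and $z\sample\bit^\ell$, and set $H:=H'[z,F]$; then $H$ is uniform, and the pair $(H,H[z,G])$ appearing in the second experiment has exactly the same joint distribution as $(H'[z,F],H')$, since $H[z,H'(z,\cdot)]=H'$ and, conditioned on $H$, the function $H'(z,\cdot)$ is uniform and independent of $H$. Hence the second experiment is identical to: run $\A_1$ with oracle $H'[z,F]$ to produce $\st_\A$, then run $\A_2$ with oracle $H'$ on input $(\st_\A,z)$. The first experiment, after renaming $H$ to $H'$, is: run $\A_1$ with oracle $H'$, then run $\A_2$ with oracle $H'$ on input $(\st_\A,z)$. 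Thus the two experiments now differ \emph{only} in the oracle used by $\A_1$ — namely $H'$ versus $H'[z,F]$ — while everything downstream ($\A_2$'s oracle, its input $z$, and the final output bit) is an identical procedure applied to $\A_1$'s output.

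Next I would bound the advantage by the expected trace distance between the two possible output states of $\A_1$. Let $\ket{\psi_0}$ (resp.\ $\ket{\psi_1}$) denote the pure state output by $\A_1$ with oracle $H'$ (resp.\ $H'[z,F]$); since all subsequent processing is a fixed quantum channel applied to this state together with the classical values $H'$ and $z$, the advantage is at most $\mathbb{E}_{H',F,z}\big[\tfrac12\|\,\ket{\psi_0}\!\bra{\psi_0}-\ket{\psi_1}\!\bra{\psi_1}\,\|_1\big]\le \mathbb{E}_{H',F,z}\big[\|\,\ket{\psi_0}-\ket{\psi_1}\,\|\big]$. Writing $\ket{\phi_i}$ for the state of $\A_1$ just before its $i$-th oracle query in the $H'$-run (which depends neither on $z$ nor on $F$) and $\Pi_z$ for the projector onto query-input basis states whose $\bit^\ell$-prefix equals $z$, the standard telescoping hybrid over the $q_1$ queries gives $\|\,\ket{\psi_0}-\ket{\psi_1}\,\|\le 2\sum_{i=1}^{q_1}\|\Pi_z\ket{\phi_i}\|$, because $H'$ and $H'[z,F]$ agree outside $\{z\}\times\calX$. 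Finally, for each fixed $i$ and $H'$ we have $\mathbb{E}_z\big[\|\Pi_z\ket{\phi_i}\|^2\big]=2^{-\ell}\sum_{z'\in\bit^\ell}\|\Pi_{z'}\ket{\phi_i}\|^2\le 2^{-\ell}$, so Jensen's inequality yields $\mathbb{E}_z\|\Pi_z\ket{\phi_i}\|\le 2^{-\ell/2}$, and therefore $\mathbb{E}_{H',F,z}\|\,\ket{\psi_0}-\ket{\psi_1}\,\|\le 2q_1 2^{-\ell/2}=q_1 2^{-\ell/2+1}$, as claimed.

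The main obstacle is really the first step: recognizing that one must move the reprogramming from $\A_2$'s oracle onto $\A_1$'s oracle (via the distributional identity $(H,H[z,G])\equiv(H'[z,F],H')$) so that the estimate depends only on $q_1$ — without this observation a naive hybrid over $\A_2$'s queries would introduce an unwanted $q_2$ term. Once this reformulation is in place, the remainder is the routine quantum-reprogramming calculation sketched above; the only technical care needed is to phrase it in terms of states and channels rather than union-bounding over queries, since $\A_1$'s output register and $\A_2$ may be of unbounded size.
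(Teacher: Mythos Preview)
Your proof is correct and follows essentially the same strategy as the paper's. Both proofs hinge on the same key observation: the distributional identity $(H,H[z,G])\stackrel{d}{=}(H'[z,F],H')$ lets you shift the reprogramming from $\A_2$'s oracle to $\A_1$'s oracle, after which only $\A_1$'s $q_1$ queries matter. The paper phrases this as a three-game hybrid (Game~2 $\equiv$ Game~3) and then reduces the remaining gap to a cited decisional-Grover lemma of Song--Yun, whereas you carry out the BBBV telescoping directly; the underlying argument is the same, and your version is simply more self-contained.
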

This can be proven similarly to \cite[Lemma 2.2]{EC:SaiXagYam18}. 
We give a proof in Appendix~\ref{sec:proof_adaptive_program} for completeness.

\subsection{Four-Round Protocol}\label{sec:efficient-four}
First, we construct a four-round scheme with efficient verification, which is transformed into two-round protocol in the next subsection.
Our construction is based on the following building blocks:
\begin{itemize}
\item A two-round CVQC protocol $\Pi=(\pro=\pro_2,\ver=(\ver_1,\ver_\out))$ in the standard model, which works as follows: 
\begin{description}
\item[$\ver_1$:] On input the security parameter $1^\secpar$ and $x$, it generates a pair $(\key,\td)$ of a``key" and ``trapdoor", sends $\key$ to $\pro$, and keeps $\td$ as its internal state.
\item[$\pro_2$:] On input $x$ and $\key$, it generates a response $e$ and sends it to $\ver$.
\item[$\ver_\out$:] On input $x$, $\key$, $\td$, $e$, it returns $\top$ indicating acceptance or $\bot$ indicating rejection.
\end{description}

\item A post-quantum PRG $\PRG:\bit^{\ell_s}\ra \bit^{\ell_r}$ where $\ell_r$ is the length of randomness for $\ver_1$.

%\item A family $\famCRH$ of post-quantum CRH $\CRH:\bit^*\ra \bit^{\Omega(\secpar)}$.

\item An FHE scheme $\Pi_\FHE=(\fhekeygen,\fheenc,\fheeval,\fhedec)$ with post-quantum CPA security.

\item A strong output compressing randomized encoding scheme $\Pi_\RE=(\rsetup,\renc,\rdec)$ with post-quantum security. We denote the simulator for $\Pi_\RE$ by $\calS_\re$.

\item A SNARK $\Pi_{\SNARK}=(\pro_{\snark},\ver_{\snark})$ in the QROM for an $\NP$ language $\lang_{\snark}$ defined below:  

We have $(x,\pk_{\fhe},\ct,\ct')\in \lang_{\snark}$
if and only if there exists $e$ such that
$\ct'= \fheeval(\pk_{\fhe},\allowbreak C[x,e],\ct)$ where $C[x,e]$ is a circuit that works as follows:
\begin{description}
\item[$C{[}x,e{]}(s)$:] Given input $s$, it computes $(k,\td)\sample \ver_1(1^\secpar,x;PRG(s))$, and returns $1$ if and only if $\ver_\out(x,k,\td,e)=\top$ and $0$ otherwise. 
\end{description}

%We have $(x,\pk_{\fhe},\ct,\ct')\in \lang_{\snark}$
%if and only if there exists $e$ such that
%$\crh(\crs_{\re})=t$,
%$k=\rdec(\crs_{\re},\Menc)$, and
%$\ct'= \fheeval(\pk_{\fhe},C[x,e],\ct)$ where $C[x,e]$ is a circuit that works as follows:
%\begin{description}
%\item[$C{[}x,e{]}(s)$:] Given input $s$, it computes $(k,\td)\sample \ver_1(1^\secpar,x;PRG(s))$, and returns $1$ if and only if $\ver_\out(x,k,\td,e)=\top$ and $0$ otherwise. 
%\end{description}
\end{itemize}

Let $\lang$ be a BPP language decided by a quantum Turing machine $\QTM$ (i.e., for any $x\in \bit^*$, $x\in \lang$ if and only if $\QTM$ accepts $x$),
and for any $T$, $\lang_T$ denotes the set consisting of $x\in \lang$ such that 
$\QTM$ accepts $x$ in $T$ steps.
Then we construct a 4-round CVQC protocol $(\setupeff,\proeff=(\proefftwo,\proefffour),\vereff=(\vereffone,\vereffthree,\vereffout))$ for $\lang_T$ in the CRS+QRO model where the verifier's efficiency only logarithmically depends on $T$.
Let $H:\bit^{2\secpar}\times \bit^{2\secpar}\ra \bit^{\secpar}$ be a quantum random oracle. 

\begin{description}
\item[$\setupeff(1^\secpar)$:]
The setup algorithm takes the security parameter $1^\secpar$ as input, generates $\crs_\re \sample \bit^{\ell}$ % and $\CRH\sample \famCRH$, 
and computes $\ek_\re\sample \rsetup(1^\secpar,1^\ell,\crs_\re)$ %and $t\defeq\crh(\crs_\re)$ 
where $\ell$ is a parameter specified later.
Then it outputs a CRS for verifier $\crs_{\vereff}\defeq \ek_\re$ and a CRS for prover $\crs_{\proeff}\defeq \crs_\re$.\footnote{We note that we divide the CRS into $\crs_{\vereff}$ and $\crs_{\proeff}$ just for the verifier efficiency and soundness still holds even if a cheating prover sees $\crs_{\vereff}$.} 
\item[$\vereffone^{H}$:] 
Given $\crs_{\vereff}= \ek_\re$ and $x$, 
it generates $s\sample \bit^{\ell_s}$ and $(\pk_{\fhe},\sk_{\fhe})\sample \fhekeygen(1^\secpar)$,
computes $\ct\sample \fheenc(\pk_{\fhe},s)$
and  $\Menc\sample \renc(\ek_\re,M,s,T')$
where $M$ is a Turing machine that works as follows:
\begin{description}
\item[$M(s)$:] Given an input $s\in \bit^{\ell_s}$, it computes $(k,\td)\sample \ver_1(1^\secpar,x;PRG(s))$ and outputs $k$
\end{description}
and $T'$ is specified later.
Then it sends $(\Menc,\pk_{\fhe},\ct)$ to $\proeff$ and keeps $\sk_{\fhe}$ as its internal state.

\item[$\proefftwo^{H}$:] Given $\crs_{\proeff}=\crs_\re$, $x$ and the message  $(\Menc,\pk_{\fhe},\ct)$ from the verifier, it computes $k\la \rdec(\crs_{\re},\Menc)$, $e\sample \pro_2(x,k)$, and $\ct'\la \fheeval(\pk_{\fhe},C[x,e],\ct)$ where $C[x,e]$ is a classical circuit defined above.
Then it sends $\ct'$ to $\pro$ and keeps $(\pk_{\fhe},\ct,\ct',e)$ as its state.
\item[$\vereffthree^{H}$] Upon receiving $\ct'$, it randomly picks $z\sample \bit^{2\secpar}$ and sends $z$ to $\proeff$.
\item[$\proefffour^{H}$] Upon receiving $z$, 
it computes $\pi_\snark \sample \pro_{\snark}^{H(z,\cdot)}((x,\pk_{\fhe},\ct,\ct'),e)$
and sends $\pi_\snark$ to $\vereff$.

\item[$\vereffout^{H}$:] 
It returns $\top$ if $\ver_{\snark}^{H(z,\cdot)}((x,\pk_{\fhe},\ct,\ct'),\pi_{\snark})=\top$ and $1 \la \fhedec(\sk_{\fhe},\ct')$ and $\bot$ otherwise.
\end{description}

\paragraph{Choice of parameters.}
\begin{itemize}
\item We set $\ell$ to be an upper bound of the length of $k$ where $(k,\td)\sample \ver_1(1^\secpar,x)$ for $x\in \lang_T$. We note that we have $\ell=\poly(\secpar,T)$.
\item We set $T'$ to be an upperbound of the running time of $M$ on input $s\in \bit^{\ell_s}$ when $x\in \lang_T$. We note that we have $T'=\poly(\secpar,T)$.
\end{itemize}

\paragraph{Verification Efficiency.}
By encoding efficiency of $\Pi_{RE}$ and verification efficiency of $\Pi_{\SNARK}$, $\vereff$ runs in time $\poly(\secpar,|x|,\log T)$.

\begin{theorem}[Completeness]\label{thm:eff_completeness}
%If the protocol is run honestly on input $x\in \lang$, then $\ver$ returns $\bot$ with overwhelming probability.
For any $x\in \lang_T$, 
\begin{align*}
\Pr\left[\langle\proeff^{H}(\crs_{\proeff}),\vereff^H(\crs_{\vereff})\rangle(x)=\bot \right]=\negl(\secpar)    
\end{align*}
where $(\crs_{\proeff},\crs_{\vereff})\sample \setupeff(1^\secpar)$.
\end{theorem}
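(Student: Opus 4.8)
The plan is to trace through a single honest execution of the protocol and chain together the correctness guarantees of the four building blocks ($\Pi_\RE$, $\Pi$, $\Pi_\FHE$, $\Pi_\SNARK$), with one extra step to handle the fact that $\ver_1$ is fed pseudorandom rather than uniform coins. First I would fix $x\in\lang_T$, sample $(\crs_{\proeff},\crs_{\vereff})=(\crs_\re,\ek_\re)\sample\setupeff(1^\secpar)$, and let $s$, $(\pk_\fhe,\sk_\fhe)$, $\ct$, $\Menc$ be the values generated by $\vereffone$. By the choice of $\ell$ and $T'$, the machine $M$ halts on input $s$ within $T'$ steps and outputs a string of length at most $\ell$, so correctness of $\Pi_\RE$ yields that the honest prover recovers $k\la\rdec(\crs_\re,\Menc)=M(s)$, where $(k,\td)\sample\ver_1(1^\secpar,x;\PRG(s))$ is the (honest, pseudorandomly generated) first message of $\Pi$.

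Next I would show that $e\sample\pro_2(x,k)$ makes $\ver_\out(x,k,\td,e)=\top$ except with negligible probability. Completeness of $\Pi$ gives this when $\ver_1$ uses uniform coins; to upgrade it to the coins $\PRG(s)$, I would observe that there is a quantum polynomial-time algorithm $D$ that, on input $r$, runs $(k,\td)\sample\ver_1(1^\secpar,x;r)$ and $e\sample\pro_2(x,k)$ and outputs $1$ iff $\ver_\out(x,k,\td,e)=\bot$, so that $\Pr_{r\sample\bit^{\ell_r}}[D(r)=1]$ is negligible by completeness of $\Pi$ and $\Pr_{s\sample\bit^{\ell_s}}[D(\PRG(s))=1]$ is then negligible by post-quantum security of $\PRG$. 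Given this, correctness of $\Pi_\FHE$ applied to $\ct'\la\fheeval(\pk_\fhe,C[x,e],\ct)$ with $\ct\sample\fheenc(\pk_\fhe,s)$ gives $\fhedec(\sk_\fhe,\ct')=C[x,e](s)$ with probability $1$; since $\ver_1$ is deterministic once its coins $\PRG(s)$ are fixed, the pair recomputed inside $C[x,e]$ equals $(k,\td)$, so $C[x,e](s)=1$ whenever $\ver_\out(x,k,\td,e)=\top$, and hence $\fhedec(\sk_\fhe,\ct')=1$ with overwhelming probability.

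Finally, since $\ct'$ is literally $\fheeval(\pk_\fhe,C[x,e],\ct)$, the statement $(x,\pk_\fhe,\ct,\ct')$ lies in $\lang_\snark$ with witness $e$, so completeness of $\Pi_\SNARK$ gives $\ver_\snark^{H(z,\cdot)}((x,\pk_\fhe,\ct,\ct'),\pi_\snark)=\top$ with probability $1$ over the oracle, using that for any fixed $z$ the salted oracle $H(z,\cdot)$ is itself a uniformly random function, so the completeness statement for $\Pi_\SNARK$ (stated for a uniform QRO) applies directly. A union bound over the two negligible failure events (the FHE ciphertext $\ct'$ decrypting to $1$ and the SNARK verifying) then shows that $\vereffout$ accepts except with negligible probability, which is the claim. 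I expect the only non-routine point to be the $\PRG$ step in the second paragraph; everything else is a direct invocation of a correctness property, and the bookkeeping needed to match the $k$ recovered by the prover with the $(k,\td)$ recomputed homomorphically inside $C[x,e]$ is immediate because $\ver_1$'s randomness is passed explicitly.
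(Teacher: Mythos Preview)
Your proposal is correct and is precisely the argument the paper has in mind: the paper's own proof is the single sentence ``This easily follows from completeness and correctness of the underlying primitives,'' and your write-up is a careful unpacking of exactly that chain (correctness of $\Pi_\RE$, completeness of $\Pi$ under $\PRG$-generated coins, correctness of $\Pi_\FHE$, completeness of $\Pi_\SNARK$ with the salted oracle). The $\PRG$ step you flag as the only non-routine point is the one detail the paper leaves implicit, and your distinguisher-based reduction handles it cleanly.
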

\begin{proof}
This easily follows from completeness and correctness of the underlying primitives.
\end{proof}

\begin{theorem}[Soundness]\label{thm:eff_soundness}
For any $x\notin \lang_T$ any efficient quantum cheating prover $\A$, 
\begin{align*}
\Pr\left[\langle\A^H(\crs_{\proeff},\crs_{\vereff}),\vereff^H(\crs_{\vereff})\rangle(x)=\top\right]=\negl(\secpar)    
\end{align*}
where $(\crs_{\proeff},\crs_{\vereff})\sample \setupeff(1^\secpar)$.
\end{theorem}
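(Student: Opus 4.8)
\medskip
\noindent\emph{Proof plan.}
Fix $x\notin\lang_T$ and an efficient prover $\A$, and assume for contradiction that the verifier accepts with non-negligible probability $\epsilon$. The plan is to walk through a short sequence of indistinguishable experiments that removes, in turn, (i) the real randomized encoding, (ii) the dependence of the accepting event on the FHE secret key, (iii) the encryption of the seed $s$, and (iv) the use of $\PRG(s)$ as the coins of $\ver_1$, ending with an efficient cheating prover against the underlying two-round CVQC protocol $\Pi$ on the same false instance $x$, which contradicts soundness of $\Pi$.

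First I would move to the randomized-encoding simulator: sample $s\sample\bit^{\ell_s}$, compute $(k,\td)\sample\ver_1(1^\secpar,x;\PRG(s))$ so that $k=M(s)$, generate $(\crs_\re,\Menc)\sample\calS_\re(1^\secpar,1^{|M|},1^{|s|},k,T^*)$ and then $\ek_\re\sample\rsetup(1^\secpar,1^\ell,\crs_\re)$, everything else unchanged. Post-quantum security of $\Pi_\RE$ keeps the accepting probability within $\negl(\secpar)$, and now the only information about $s$ visible to the prover is through $k$ and $\ct=\fheenc(\pk_\fhe,s)$. Next I would handle the salt via Lemma~\ref{lem:adaptive_program}: write $\A=(\A_1,\A_2)$ where $\A_1$ outputs $\ct'$ (hence fixes the statement $(x,\pk_\fhe,\ct,\ct')\in\lang_\snark$) and $\A_2$, on input $z\sample\bit^{2\secpar}$, outputs $\pi_\snark$; replacing $\A_2$'s oracle $H$ by $H[z,G]$ for a fresh $G$ changes the experiment by at most $q_1 2^{-\secpar+1}=\negl(\secpar)$. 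After this replacement $G=H(z,\cdot)$ is a fresh random oracle and the statement was fixed before any $G$-query, so for each fixing of $\A_1$'s execution the \emph{non-adaptive} extractability of the SNARK applies to the malicious prover that runs $\A_2$ and outputs its proof only when in addition $\fhedec(\sk_\fhe,\ct')=1$ (legitimate in this experiment, where $\sk_\fhe$ is known); this prover makes $\ver_\snark$ accept with probability $\geq\epsilon-\negl(\secpar)$, so the extractor outputs, with non-negligible probability, an $e'$ with $\ct'=\fheeval(\pk_\fhe,C[x,e'],\ct)$ occurring jointly with $\fhedec(\sk_\fhe,\ct')=1$. Since $\ct=\fheenc(\pk_\fhe,s)$, FHE correctness then gives $1=\fhedec(\sk_\fhe,\ct')=C[x,e'](s)$, i.e.\ $\ver_\out(x,k,\td,e')=\top$ with $(k,\td)\sample\ver_1(1^\secpar,x;\PRG(s))$. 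Call $\mathsf{Win}$ the event ``the extractor returns $e'$ with $\ver_\out(x,k,\td,e')=\top$''; we have argued $\Pr[\mathsf{Win}]$ non-negligible, and, crucially, deciding $\mathsf{Win}$ uses $\td$ but no longer $\sk_\fhe$.

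Having removed the $\sk_\fhe$-dependence of the target event, I would then switch $\ct$ from $\fheenc(\pk_\fhe,s)$ to $\fheenc(\pk_\fhe,0^{\ell_s})$: a reduction simulating the whole experiment and testing $\mathsf{Win}$ picks $s$ only as its CPA challenge message and never touches $\sk_\fhe$, so post-quantum CPA security of $\Pi_\FHE$ keeps $\Pr[\mathsf{Win}]$ non-negligible. Now $s$ is used only to compute $\PRG(s)$, which serves as the coins of $\ver_1$, so PRG security lets me replace $\PRG(s)$ by a uniform $r\sample\bit^{\ell_r}$, again preserving non-negligibility of $\Pr[\mathsf{Win}]$. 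At this point $(k,\td)\sample\ver_1(1^\secpar,x;r)$ for uniform $r$ — precisely the honest key distribution of $\Pi$ — and the efficient procedure that produces $e'$ (simulating $\calS_\re$ from $k$, sampling $(\pk_\fhe,\sk_\fhe)$ and $\ct=\fheenc(\pk_\fhe,0^{\ell_s})$, simulating the random oracle $H$ and the salt $z$, running $\A$, running the SNARK extractor) takes $k$ as its only input correlated with $(k,\td)$ and needs neither $\td$ nor $r$. Hence I obtain an efficient prover $\B$ against $\Pi$: on receiving $k$ from $\ver_1$ it runs exactly this procedure and sends the extracted $e'$; then $\Pr[\ver_\out(x,k,\td,e')=\top]\geq\Pr[\mathsf{Win}]$ is non-negligible while $x\notin\lang_T$, contradicting soundness of $\Pi$. (Completeness is Theorem~\ref{thm:eff_completeness}.)

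The main obstacle is the ordering forced by the $\sk_\fhe$-dependence: the FHE-CPA and PRG steps can only be applied once the accepting event has been re-expressed through the extracted witness $e'$ and $\ver_\out$ instead of through $\fhedec(\sk_\fhe,\cdot)$, which is why the SNARK extraction must come first and why the statement $(x,\pk_\fhe,\ct,\ct')$ must be pinned down before anything that would let the prover adapt it; the salting step (Lemma~\ref{lem:adaptive_program}) is exactly what upgrades the SNARK's merely non-adaptive extractability into something usable against a prover that queries the random oracle while forming its statement. One must also track that the SNARK extractor incurs only a polynomial loss and that these reductions preserve only non-negligibility rather than any concrete bound, which suffices since the goal is to rule out non-negligible acceptance.
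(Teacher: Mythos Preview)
Your proposal is correct and follows essentially the same hybrid argument as the paper: both use the salting lemma to make the SNARK's non-adaptive extractability applicable, extract a witness $e'$, use FHE correctness to replace the $\fhedec$-based acceptance test by $\ver_\out(x,k,\td,e')=\top$, then apply FHE CPA, PRG security, and finally reduce to soundness of $\Pi$. The only difference is the order in which the randomized-encoding simulator is invoked---you place it first, the paper places it after the CPA hop---and both orderings are valid (in either case the RE distinguisher needs the seed $s$ as non-uniform advice to simulate the remaining components).
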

\ifnum\submission=1
We give a proof in Appendix \ref{app:proof_eff_soundness}
\else
\begin{proof}
We fix $T$ and $x\notin \lang_T$.
Let $\A$ be a cheating prover.
First, we divides $\A$ into the first stage $\A_1$, which is given $(\crs_{\proeff},\crs_{\vereff})$ and the first message and outputs the second message $\ct'$ and its internal state $\ket{\st_{\A}}$, and the second stage $\A_2$, which is given the internal state $\ket{\st_{\A}}$ and the third message and outputs the fourth message $\pi_{\snark}$.
We consider the following sequence of games between an adversary $\A=(\A_1,\A_2)$ and a challenger. 
Let $q_1$ and $q_2$ be an upper bound of number of random oracle queries by $\A_1$ and $\A_2$, respectively.
We denote the event that the challenger returns $1$ in $\game_i$ by $\TT_i$.
\begin{description}
\item[$\game_1$:] This is the original soundness game.
Specifically, the game runs as follows:
\begin{enumerate}
    \item The challenger generates 
    $H\sample \func(\bit^{2\secpar}\times \bit^{2\secpar},\bit^{\secpar})$, 
    $\crs_\re\sample \bit^{\ell}$, %$\CRH\sample \famCRH$, 
    $s\sample \bit^{\ell_s}$, and $(\pk_{\fhe},\sk_{\fhe})\sample \fhekeygen(1^\secpar)$, and computes $\ek_\re\sample \rsetup(1^\secpar,1^\ell,\crs_\re)$,  %$t\defeq\crh(\crs_\re)$, 
    $\ct\sample \fheenc(\pk_{\fhe},s)$, and  $\Menc\sample \renc(\ek_\re,M,s,T')$.
    \item $\A_1^{H}$ is given $\crs_{\proeff}\defeq \crs_\re$, $\crs_{\vereff}\defeq \ek_\re$ and the first message $(\Menc,\pk_{\fhe},\ct)$, and outputs the second message $\ct'$ and its internal state $\ket{\st_{\A}}$.
    \item The challenger randomly picks $z\sample \bit^{2\secpar}$.
    \item $\A_2^{H}$ is given the state $\ket{\st_{\A}}$ and the third message $z$ and outputs $\pi_{\snark}$.
    \item The challenger returns $1$ if  $\ver_{\snark}^{H(z,\cdot)}((x,\pk_{\fhe},\ct,\ct'),\pi_{\snark})=\top$ and $1 \la \fhedec(\sk_{\fhe},\ct')$ and $0$ otherwise.
\end{enumerate}

\item[$\game_2$:]
This game is identical to the previous game except that the oracles given to $\A_2$ and $V_{\snark}$ are replaced with $H[z,G]$ and $G$ in Step 4 and 5 respectively where $G\sample \func(\bit^{2\secpar},\bit^{\secpar})$ and $H[z,G]$ is as defined in Lemma~\ref{lem:adaptive_program}. 
We note that the oracle given to $\A_1$ in Step 2 is unchanged from $H$.

\item[$\game_3$:]
This game is identical to the previous game except that Step 4 and 5 are modified as follows:
\begin{description}
\item[4.] The challenger runs $e \sample \ext^{\A'_2[H,\ket{\st_{\A}},z]}((x,\pk_{\fhe},\ct,\ct'),1^{q_2},1^\secpar)$ where $\A'_2[H,\st_{\A},z]$ is an oracle-aided quantum algorithm that is given an oracle $G$ and emulates $\A_2^{H[z,G]}(\ket{\st_{\A}},z)$.
\item[5.] The challenger returns $1$ if 
$e$ is a valid witness for $(x,\pk_{\fhe},\ct,\ct')\in \lang_{\snark}$ and $1 \la \fhedec(\sk_{\fhe},\ct')$ and $0$ otherwise.
\end{description}

%\item[$\game_4$:]
%This game is identical to the previous game except that Step 5 is modified as follows:
%\begin{description}
%\item[5.] The challenger returns $1$ if 
%$k'=k$ where $(k,\td)\sample \ver_1(1^\secpar,x;PRG(s))$, 
%$(\crs'_{\re},k',e)$ is a valid witness for $(x,\pk_{\fhe},\ct,\ct')\in \lang_{\snark}$ and $1 \la \fhedec(\sk_{\fhe},\ct')$ and $0$ otherwise.
%\end{description}

\item[$\game_4$:]
This game is identical to the previous game except that Step 5 is modified as follows:
\begin{description}
\item[5.] The challenger returns $1$ if 
$e$ is a valid witness for $(x,\pk_{\fhe},\ct,\ct')\in \lang_{\snark}$, and $\ver_\out(x,k,\td,e)=\top$ where $(k,\td)\sample \ver_1(1^\secpar,x;PRG(s))$ and $0$ otherwise.
\end{description}

\item[$\game_5$:]
This game is identical to the previous game except that $\ct$ is generated as $\ct\sample \fheenc(\pk_{\fhe},\allowbreak 0^{2\secpar})$ in Step 1.

\item[$\game_6$:]
This game is identical to the previous game except that $\crs_\re$, $\ek_\re$, and $\Menc$ are generated in a different way. Specifically, in Step $1$, the challenger computes
$(k,\td)\sample \ver_1(1^\secpar,x;PRG(s))$, $(\crs_\re,\Menc)\sample \calS_\re(1^\secpar,1^{|M|},1^{\ell_s},k,T^*)$, and $\ek_\re \sample \rsetup(1^\secpar,1^\ell,\crs_\re)$ where $T^*$ is the running time of $M(\inp)$.
We note that the same $(k,\td)$ generated in this step is also used in Step 5.

\item[$\game_7$:]
This game is identical to the previous game except that $PRG(s)$ used for generating $(k,\td)$ in Step 1 is replaced with a true randomness.
\end{description}

This completes the descriptions of games. 
Our goal is to prove $\Pr[\TT_1]=\negl(\secpar)$. We prove this by the following lemmas. 
Since Lemmas~\ref{lem:eff_gamehop_five}, \ref{lem:eff_gamehop_six}, and \ref{lem:eff_gamehop_seven} can be proven by straightforward reductions, we only give proofs for the rest of lemmas.

\begin{lemma}\label{lem:eff_gamehop_one}
We have $|\Pr[\TT_2]-\Pr[\TT_1]|\leq q_12^{-(\secpar+1)}$.
\end{lemma}
\begin{proof}
This lemma is obtained by applying Lemma~\ref{lem:adaptive_program} for $\B=(\B_1,\B_2)$ described below:
\begin{description}
\item[$\B_1^{O_1}$():]   It generates 
    $\crs_\re\sample \bit^{\ell}$,  $s\sample \bit^{\ell_s}$, and $(\pk_{\fhe},\sk_{\fhe})\sample \fhekeygen(1^\secpar)$, computes $\ek_\re\sample \rsetup(1^\secpar,1^\ell,\crs_\re)$,  $\ct\sample \fheenc(\pk_{\fhe},s)$, $\Menc\sample \renc(\ek_\re,M,s,T')$,  and $\ct\sample \fheenc(\pk_{\fhe},s)$, and sets $\crs_{\proeff}=\crs_\re$ and $\crs_{\vereff}\defeq \ek_\re$.
Then it runs $(\ct',\ket{\st_{\A}})\sample \A_1^{O_1}(\crs_{\proeff},\crs_{\vereff},x,(\Menc,\pk_{\fhe},\ct))$, and outputs $\ket{\st_\B}\defeq(\ket{\st_\A},x,\Menc,\ct,\ct',\sk_\fhe)$.\footnote{Classical strings are encoded as quantum states in a trivial manner.}
\item[$\B_2^{O_2}(\ket{\st_\B},z)$:] 
It runs $\pi_{\snark}\sample \A_2^{O_2}(\ket{\st_\A})$, and outputs $1$ if $\ver_{\snark}^{O_2(z,\cdot)}((x,\pk_{\fhe},\ct,\ct'),\pi_{\snark})=\top$ and $1 \la \fhedec(\sk_{\fhe},\ct')$ and $0$ otherwise.
\end{description}
\end{proof}

\begin{lemma}\label{lem:eff_gamehop_two}
If $\Pi_{\SNARK}$ satisfies the extractability and $\Pr[\TT_2]$ is non-negligible, then $\Pr[\TT_3]$ is also non-negligible.
\end{lemma}
\begin{proof}
Let $\transcript_3$ be the transcript of the protocol before the forth message is sent (i.e., $\transcript_3=(\crs_{\proeff},\crs_{\vereff},\Menc,\pk_{\fhe},\ct',z)$).
We say that $(H,\sk_\fhe,\transcript_3,\ket{\st_\A})$ is good if we randomly choose $G\sample \func(\bit^{2\secpar},\bit^{\secpar})$ and run $\pi_{\snark}\sample \A_2^{H[z,G]}(\ket{\st_\A})$ to complete the transcript, then the transcript is accepted (i.e., we have $\ver_{\snark}^{G}((x,\pk_{\fhe},\ct,\ct'),\pi_{\snark})=\top$ and $1 \la \fhedec(\sk_{\fhe},\ct')$) with non-negligible probability.
By a standard averaging argument, if $\Pr[\TT_2]$ is non-negligible, then a non-negligible fraction of $(H,\sk_\fhe,\transcript_3,\ket{\st_\A})$ is good when they are generated as in $\game_2$.
We fix good $(\transcript_3,\sk_\fhe,\ket{\st_\A})$.
Then by the extractability of $\Pi_{\SNARK}$, $\ext$ succeeds in extracting a witness for $(x,\pk_{\fhe},\ct,\ct')\in \lang_{\snark}$ with non-negligible probability. Moreover, since we assume $(H,\sk_\fhe,\transcript_3,\ket{\st_\A})$ is good, we always have $1 \la \fhedec(\sk_{\fhe},\ct')$ (since otherwise a transcript with prefix $\transcript_3$ cannot be accepted).
Therefore we can conclude that $\Pr[\TT_3]$ is non-negligible.
\end{proof}

%\begin{lemma}\label{lem:eff_gamehop_three}
%If $\famCRH$ satisfies collision-resistance, then we have $|\Pr[\TT_4]-\Pr[\TT_3]|\leq \negl(\secpar)$.
%\end{lemma}
%\begin{proof}
%If $(\crs'_{\re},k',e)$ is a valid witness for $(x,\pk_{\fhe},\ct,\ct')\in \lang_{\snark}$, then we especially have   $\crh(\crs'_{\re})=t$ and
%$k'=\rdec(\Menc,\crs'_{\re})$.
%Conditioned on this happening, by the collision resistance of $\famCRH$, we have $\crs'_\re=\crs_\re$ with overwhelming probability.
%If this happens, $k'=k$ where $(k,\td)\sample \ver_1(1^\secpar,x;PRG(s))$ by the correctness of $\Pi_{\RE}$.
%Therefore, when the challenger returns $1$ in $\game_3$, we have $k'=k$ with overwhelming probability. Thus, the lemma follows. 
%\end{proof}

\begin{lemma}\label{lem:eff_gamehop_four}
We have $\Pr[\TT_4]=\Pr[\TT_3]$.
\end{lemma}
\begin{proof}
If $e$ is a valid witness for $(x,\pk_{\fhe}, \ct,\ct')\in \lang_{\snark}$, then we especially have $\ct'= \fheeval(\pk_{\fhe},\allowbreak C[x,e],\ct)$.
By the correctness of $\Pi_\FHE$, we have $\fhedec(\sk_\fhe,\ct')=C[x,e](s)=(\ver_\out(x,k,\td,e)\overset{?}{=}\top)$ where $(k,\td)\sample \ver_1(1^\secpar,x;PRG(s))$.
Therefore, the challenger returns $1$ in $\game_4$ if and only if it returns $1$ in $\game_3$.
\end{proof}

\begin{lemma}\label{lem:eff_gamehop_five}
If $\Pi_{\FHE}$ is CPA-secure, then we have $|\Pr[\TT_5]-\Pr[\TT_4]|\leq \negl(\secpar)$.
\end{lemma}

\begin{lemma}\label{lem:eff_gamehop_six}
If $\Pi_{\RE}$ is secure, then we have
$|\Pr[\TT_6]-\Pr[\TT_5]|\leq \negl(\secpar)$.
\end{lemma}

\begin{lemma}\label{lem:eff_gamehop_seven}
If $\PRG$ is secure, then we have  $|\Pr[\TT_7]-\Pr[\TT_6]|\leq \negl(\secpar)$.
\end{lemma}

\begin{lemma}\label{lem:eff_gamehop_eight}
If $(\pro,\ver)$ satisfies soundness, then we have $\Pr[\TT_7]\leq \negl(\secpar)$.
\end{lemma}
\begin{proof}
Suppose that $\Pr[\TT_7]$ is non-negligible. Then we construct an adversary $\B$ against the underlying two-round protocol as follows:
\begin{description}
\item[$\B(k)$:] Given the first message $k$, it generates 
    $H\sample\func(\bit^{2\secpar}\times \bit^{2\secpar},\bit^{\secpar})$, $G\sample \func(\bit^{2\secpar},\allowbreak \bit^{\secpar})$, $z\sample \bit^{2\secpar}$,
    $(k,\td)\sample \ver_1(1^\secpar,x;PRG(s))$, $(\crs_\re,\Menc)\sample \calS_\re(1^\secpar,1^{|M|},1^{\ell_s},k,T^*)$, $\ek_\re \sample \rsetup(1^\secpar,1^\ell,\crs_\re)$,   and $(\pk_{\fhe},\sk_{\fhe})\sample \fhekeygen(1^\secpar)$,  computes  $\ct\sample \fheenc(\allowbreak \pk_{\fhe},0^{2\secpar})$,
    and sets $\crs_{\proeff}=\crs_\re$ and $\crs_{\vereff}\defeq \ek_\re$.
Then it runs $(\ct',\ket{\st_{\A}})\sample \A_1^{H}(\crs_{\proeff},\crs_{\vereff},\allowbreak x,(\Menc,\pk_{\fhe},\ct))$ and $e\sample \ext^{\A'_2[H,\ket{\st_{\A}},z]}((x,\pk_{\fhe},\ct,\ct'),1^{q_2},1^\secpar)$ and outputs $e$.
\end{description}
Then we can easily see that the probability that we have $\ver_\out(x,k,\td,e)$ is at least $\Pr[\TT_7]$.
Therefore, if the underlying two-round protocol is sound, then $\Pr[\TT_7]=\negl(\secpar)$.
\end{proof}

By combining Lemmas~\ref{lem:eff_gamehop_one} to~\ref{lem:eff_gamehop_seven}, we can see that if $\Pr[\TT_1]$ is non-negligible, then $\Pr[\TT_7]$ is also non-negligible, which contradicts Lemma~\ref{lem:eff_gamehop_eight}.
Therefore we conclude that $\Pr[\TT_1]=\negl(\secpar)$.
\end{proof}
\fi
%\begin{remark}[On Reducing Rounds]
%One may think that we can shrink the scheme to two-round by letting the prover send the SNARK proof as a part of the second message.
%However, we do not know how to prove the soundness of this version since Chiesa et al.~\cite{TCC:ChiManSpo19} only proved non-adaptive extractability for SNARK in the QROM where the statement should be fixed in advance.  
%If we assume adaptive extractability for SNARK, then we would be able to prove the soundness for the two-round version.
%This motivates to construct SNARK with adaptive extractability in the QROM.
%\end{remark}

\subsection{Reducing to Two-Round via Fiat-Shamir}
\ifnum\submission=1
Since the third message is public-coin in the four-round protocol in the previous section, we can  apply the Fiat-Shamir transform similarly to Sec.\ref{sec:tworound}.
Then we obtain the two-round CVQC protocol in the QROM, which completes the proof of Theorem \ref{thm:Eff}.
See Appendix \ref{sec:omitted_two-round_eff} for the full proof.
\else
Here, we show that the number of rounds can be reduced to $2$ relying on another random oracle.
Namely, we observe that the third message of the scheme is just a public coin, and so we can apply the Fiat-Shamir transform similarly to Sec.\ref{sec:tworound}.
In the following, we describe the protocol for completeness.

Our two-round CVQC protocol $(\setupefffs,\proefffs,\verefffs=(\verefffsone,\verefffsout))$ for $\lang_T$ in the CRS+QRO model is described as follows.
Let $H:\bit^{2\secpar}\times \bit^{2\secpar}\ra \bit^{\secpar}$ be a quantum random oracle and $H':\bit^{\ell_{ct'}}\ra \bit^{2\secpar}$ be another quantum random oracle where $\ell_{\ct'}$ is the maximal length of $\ct'$ in the four-round scheme and $\ell$ and $T'$ be as defined in the previous section.

\begin{description}
\item[$\setupefffs(1^\secpar)$:]
The setup algorithm takes the security parameter $1^\secpar$ as input, generates $\crs_\re \sample \bit^{\ell}$ % and $\CRH\sample \famCRH$, 
and computes $\ek_\re\sample \rsetup(1^\secpar,1^\ell,\crs_\re)$. %and $t\defeq\crh(\crs_\re)$ 
Then it outputs a CRS for verifier $\crs_{\verefffs}\defeq \ek_\re$ and a CRS for prover $\crs_{\proefffs}\defeq \crs_\re$.
\item[$\verefffsone^{H,H'}$:] 
Given $\crs_{\verefffs}= \ek_\re$ and $x$, 
it generates $s\sample \bit^{\ell_s}$ and $(\pk_{\fhe},\sk_{\fhe})\sample \fhekeygen(1^\secpar)$,
computes $\ct\sample \fheenc(\pk_{\fhe},s)$
and  $\Menc\sample \renc(\ek_\re,M,s,T')$
where $M$ is a Turing machine that works as follows:
\begin{description}
\item[$M(s)$:] Given an input $s\in \bit^{\ell_s}$, it computes $(k,\td)\sample \ver_1(1^\secpar,x;PRG(s))$ and outputs $k$.
\end{description}
Then it sends $(\Menc,\pk_{\fhe},\ct)$ to $\proefffs$ and keeps $\sk_{\fhe}$ as its internal state.

\item[$\proefffstwo^{H,H'}$:] Given $\crs_{\proefffs}=\crs_\re$, $x$ and the message  $(\Menc,\pk_{\fhe},\ct)$ from the verifier, it computes $k\la \rdec(\crs_{\re},\Menc)$, $e\sample \pro_2(x,k)$, and $\ct'\la \fheeval(\pk_{\fhe},C[x,e],\ct)$ where $C[x,e]$ is a classical circuit defined above.
Then it computes $z\defeq H'(\ct')$, computes $\pi_\snark \sample \pro_{\snark}^{H(z,\cdot)}((x,\pk_{\fhe},\ct,\ct'),e)$
and sends $(\ct',\pi_\snark)$ to $\verefffs$.

\item[$\verefffsout^{H,H'}$:] 
It computes $z\defeq H'(\ct')$ and returns $\top$ if $\ver_{\snark}^{H(z,\cdot)}((x,\pk_{\fhe},\ct,\ct'),\pi_{\snark})=\top$ and $1 \la \fhedec(\sk_{\fhe},\ct')$ and $\bot$ otherwise.
\end{description}

\paragraph{Verification Efficiency.}
Clearly, the verification efficiency is preserved from the protocol in Sec.~\ref{sec:efficient-four}

\begin{theorem}[Completeness]\label{thm:efffs_completeness}
%If the protocol is run honestly on input $x\in \lang$, then $\ver$ returns $\bot$ with overwhelming probability.
For any $x\in \lang_T$, 
\begin{align*}
\Pr\left[\langle\proefffs^{H,H'}(\crs_{\proefffs}),\verefffs^{H,H'}(\crs_{\verefffs})\rangle(x)=\bot \right]=\negl(\secpar)    
\end{align*}
where $(\crs_{\proefffs},\crs_{\verefffs})\sample \setupefffs(1^\secpar)$.
\end{theorem}

\begin{theorem}[Soundness]\label{thm:efffs_soundness}
For any $x\notin \lang_T$ any efficient quantum cheating prover $\A$, 
\begin{align*}
\Pr\left[\langle\A^{H,H'}(\crs_{\proefffs},\crs_{\vereff}),\verefffs^{H,H'}(\crs_{\verefffs})\rangle(x)=\top\right]=\negl(\secpar)    
\end{align*}
where $(\crs_{\proefffs},\crs_{\verefffs})\sample \setupefffs(1^\secpar)$.
\end{theorem}
This can be reduced to Theorem~\ref{thm:eff_soundness} similarly to the proof of soundness of the protocol in Sec.~\ref{sec:tworound}.
\fi

%\begin{remark}[Making the protocol non-interactive in designated verifier setting]
%In the designated verifier setting where the verifier can hold a secret verification key (in other words, the soundness is required only if $\crs_{\verefffs}$ is hidden from a cheating prover), we can make the protocol non-interactive. Moreover, we need not use strong output-compressing randomized encoding (and thus iO) in this setting.
%In the following, we briefly describe the construction.
%First, we observe that the first message of (a standard model instantiation of) the  two-round CVQC protocol given in Sec.~\ref{sec:tworound} only depends on the size of the problem instance $x$ and does not depend on $x$ itself thanks to the similar property of the Mahadev's protocol.
%Then we can construct a non-interactive protocol as follows:
%The setup algorithm generates $s\sample \bit^\secpar$ and 
%$(\pk_{\fhe},\sk_{\fhe})\sample \fhekeygen(1^\secpar)$,
%computes 
%$(k,\td)\sample \ver_1(1^\secpar,1^{|x|};PRG(s))$
%and $\ct\sample \fheenc(\pk_{\fhe},s)$,
%and sets $(k,\pk_{\fhe},\ct)$ as a CRS and $\sk_{\fhe}$ as a secret verification key. 
%The rest of the protocol works similaly to the above scheme.
%We can prove the soundness of the scheme similarly. A caveat is that we can only prove non-adaptive soundness where the statement $x$ should be fixed before the setup since the Mahadev's protocol is only shown to satisfy non-adaptive soundness. 
%\end{remark}

\bibliographystyle{alpha}
\bibliography{abbrev3,crypto,reference}

\appendix
\ifnum\submission=0
\section{Proof of Lemma~\ref{lem:adaptive_program}}\label{sec:proof_adaptive_program}
Here, we give a proof of Lemma~\ref{lem:adaptive_program}.
We note that the proof is essentially the same as the proof of \cite[Lemma 2.2]{EC:SaiXagYam18}.

Before proving the lemma, we introduce another lemma, which gives a lower bound for a decisional variant of Grover's search problem. 

\begin{lemma}[{\cite[Lemma C.1]{SonYun17}}] \label{lem:Decision_Grover}
Let $g_z:\bit^\ell \rightarrow \bit$ denotes a function defined as $g_{z}(z):=1$ and $g_z(z'):=0$ for all $z'\not=z$, and $g_{\bot}:\bit^\ell \rightarrow \bit$ denotes a function that returns $0$ for all inputs. Then for any quantum adversary $\B=(\B_1,\B_2)$ we have 
\[
\left|\Pr[1\sample\B_2(\ket{\st_{\B}},z) \mid \ket{\st_{\B}}\sample \B_1^{g_z}()]- \Pr[1\sample\B_2(\ket{\st_{\B}},z) \mid \ket{\st_{\B}}\sample \B_1^{g_{\bot}}()] \right|\leq q_1 \cdot2^{-\frac{\ell}{2}+1}.
\]
where $z\sample \bit^{\ell}$ and $q_1$ denotes the maximal number of queries by $\B_1$.
\end{lemma}

Then we prove Lemma~\ref{lem:adaptive_program}.

\begin{proof}(of Lemma~\ref{lem:adaptive_program}.)
We consider the following sequence of games.
We denote the event that $\game_i$ returns $1$ by $\TT_i$.
\begin{description}
\item[$\game_1$:] This game simulates the environment of the first term of LHS in the inequality in the lemma. Namely, the challenger chooses $z\sample \bit^{\ell}$, $H\sample \func(\bit^{\ell}\times \calX,\calY)$, $\A_1$ runs with oracle $H$ to generate $\ket{\st_{\A}}$, $\A_2$ runs on input $(\ket{\st_{\A}},z)$ with oracle $H$ to generate a bit $b$, and the game returns $b$.
\item[$\game_2$:] This game is identical to the previous game except that the oracle given to $\A_1$ is replaced with $H[z,G]$ where $G\sample \func(\calX,\calY)$.
\item[$\game_3$:] This game is identical to the previous game except that the oracle given to $\A_1$ is replaced with $H$ and the oracle given to $\A_2$ is replaced with $H[z,G]$.
We note that this game simulates the environment as in the second term of the LHS in the inequality in the lemma.
\end{description}
What we need to prove is $|\Pr[\TT_1]-\Pr[\TT_3]|\leq q_12^{-\frac{\ell}{2}+1}$.
First we observe that the change from $\game_2$ to $\game_3$ is just conceptual and nothing changes from the adversary's view since in both games, the oracles given to $\A_1$ and $\A_2$ are random oracles that agrees on any input $(z',x)$ such that $z'\neq z$ and independent on any input $(z,x)$.
Therefore we have $\Pr[\TT_2]=\Pr[\TT_3]$.
What is left is to prove $|\Pr[\TT_1]-\Pr[\TT_2]|\leq q_12^{-\frac{\ell}{2}+1}$.
For proving this, we construct an algorithm $\B=(\B_1,\B_2)$ that breaks Lemma~\ref{lem:adaptive_program} with the advantage $|\Pr[\TT_1]-\Pr[\TT_2]$ as follows:

\begin{description}
\item[$\B_1^{g^*}()$:] It generates $H\sample \func(\bit^{\ell}\times \calX,\calY)$ and  $G\sample \func(\calX,\calY)$, implements an oracle $O_1$ as
\begin{align*}
O_1(z',x)=
\begin{cases}
G(x) &\text{~if~}g^*(z')=1 \\
H(z',x) &\text{~else}
\end{cases},    
\end{align*}
runs $\ket{\st_{\A}}\sample\A_1^{O_1}()$ and outputs $\ket{\st_{\B}}\defeq \ket{\st_{\A}}$
\item[$\B_2(\ket{\st_{\B}}=\ket{\st_{\A}},z)$:]
It runs $b \sample \A_2^{H}(\ket{\st_{\B}},z)$ and outputs $b$.
\end{description}
It is easy to see that if $g^*=g_{\bot}$, then $\B$ perfectly simulates $\game_1$ for $\A$ and if $g^*=g_z$, then $\B$ perfectly simulates $\game_2$ for $\A$.
Therefore, we have $|\Pr[\TT_1]-\Pr[\TT_2]|\leq q_12^{-\frac{\ell}{2}+1}$ by Lemma~\ref{lem:adaptive_program}.
\end{proof} 
\else
\newpage
 	\setcounter{page}{1}
 	{
	\noindent
 	\begin{center}
	{\Large SUPPLEMENTAL MATERIALS}
	\end{center}
 	}
	\setcounter{tocdepth}{2}
\section{Cryptographic Primitives}\label{app:cryptofgraphic_primitives}
\subsubsection{Pseudorandom Generator}
A post-quantum pseudorandom generator (PRG) is an efficient deterministic classical algorithm $\PRG:\bit^{\ell} \ra \bit^{m}$ such that for any efficient quantum algorithm $\A$, we have
\begin{align*}
    \left|\Pr_{s\sample \bit^{\ell}}[\A(\PRG(s))]-\Pr_{y\sample \bit^{m}}[\A(y)]\right|\leq \negl(\secpar).
\end{align*}

It is known that there exists a post-quantum PRG for any $\ell=\Omega(\secpar)$ and $m=\poly(\secpar)$ assuming post-quantum one-way function \cite{SIAM:HILL99,FOCS:Zhandry12}.
Especially, a post-quantum PRG exists assuming the quantum hardness of LWE.

%\subsubsection{Collision Resistant Hash}
%A family $\famCRH$ of post-quantum collision resistant hash functions (CRH) is a set of functions $\CRH:\bit^{*}\ra \bit^{m}$ such that for any efficient quantum algorithm $\A$, we have 
%\begin{align*}
%    \Pr_{\CRH\sample \famCRH}[\CRH(x)=\CRH(x')\land x\neq x'): (x,x')\sample \A(\CRH)]\leq \negl(\secpar).
%\end{align*}

%It is known that a post-quantum CRH for any $m=\Omega(\secpar)$ assuming quantum hardness of LWE \cite{STOC:Ajtai96}.\footnote{In fact, a weaker assumption which is quantum hardness of the short integer solution (SIS) problem suffices.}

\subsubsection{Fully Homomorphic Encryption}
A post-quantum fully homomorphic encryption consists of four efficient classical algorithm $\Pi_\FHE=\allowbreak (\fhekeygen,\fheenc,\fheeval,\fhedec)$.
\begin{description}
\item[$\fhekeygen(1^\secpar)$:] The key generation algorithm takes the security parameter $1^\secpar$ as input and outputs a public key $\pk$ and a secret key $\sk$.
\item[$\fheenc(\pk,m)$:] The encryption algorithm takes a public key $\pk$ and a message $m$ as input, and outputs a ciphertext $\ct$.
\item[$\fheeval(\pk,C,\ct)$:] The evaluation algorithm takes a public key $\pk$, a classical circuit $C$, and a ciphertext $\ct$, and outputs a evaluated ciphertext $\ct'$.
\item[$\fhedec(\sk,\ct)$:] The decryption algorithm takes secret key $\sk$ and a ciphertext $\ct$ as input and outputs a message $m$ or $\bot$. 
\end{description}
\paragraph{Correctness.}
For all $\secpar\in \mathbb{N}$, $(\pk,\sk)\sample \fhekeygen(1^\secpar)$, $m$ and $C$, we have
\begin{align*}
    \Pr[\fhedec(\sk,\fheenc(\pk,m))=m]=1
\end{align*}
and
\begin{align*}
    \Pr[\fhedec(\sk,\fheeval(\pk,C,\fheenc(\pk,m)))=C(m)]=1.
\end{align*}
\paragraph{Post-Quantum CPA-Security.}
For any efficient quantum adversary $\A=(\A_1,\A_2)$, we have  
\begin{align*}
    &|\Pr[1\sample  \A_2(\ket{\st_{\A}},\ct):(\pk,\sk)\sample \fhekeygen(1^\secpar),(m_0,m_1,\ket{\st_{\A}})\sample \A_1(\pk), \ct\sample \fheenc(\pk,m_0)]\\
    &-\Pr[1\sample  \A_2(\ket{\st_{\A}},\ct):(\pk,\sk)\sample \fhekeygen(1^\secpar),(m_0,m_1,\ket{\st_{\A}})\sample \A_1(\pk), \ct\sample \fheenc(\pk,m_1)]|\\
&\leq \negl(\secpar).
\end{align*}

FHE is usually constructed by first constructing \textit{leveled} FHE, where we have to upper bound the depth of a circuit to evaluate at the setup, and then converting it to FHE by the technique called bootstrapping~\cite{STOC:Gentry09}. 
There have been many constructions of leveled FHE whose (post-quantum) security can be reduced to the (quantum) hardness of LWE \cite{FOCS:BraVai11,ITCS:BraGenVai12,C:Brakerski12,C:GenSahWat13}.
FHE can be obtained assuming that any of these schemes is \textit{circular secure} \cite{EC:CamLys01} so that it can be upgraded into FHE via bootstrapping.
%which requires that the scheme remains secure even if an adversary sees an encryption of the secret key.
We note that Canetti et al. \cite{TCC:CLTV15} gave an alternative transformation from leveled FHE to FHE based on subexponentially secure iO.

\subsubsection{Strong Output-Compressing Randomized Encoding}
A strong output-compressing randomized encoding \cite{AC:BFKSW19} consists of three efficient classical  algorithms $(\rsetup,\renc,\rdec)$.
\begin{description}
\item[$\rsetup(1^\secpar,1^\ell,\crs)$]: It takes the security parameter $1^\secpar$, output-bound $\ell$, and a common reference string $\crs\in\bit^{\ell}$ and outputs a encoding key $\ek$. 
\item[$\renc(\ek,M,\inp,T)$:] It takes an encoding key $\ek$, Turing machine $M$, an input $\inp\in \bit^*$, and a time-bound $T\leq 2^{\secpar}$ (in binary) as input and outputs an encoding $\Menc$.
\item[$\rdec(\crs,\Menc)$:] It takes a common reference string $\crs$ and an encoding $\Menc$ as input and outputs $\out \in \bit^* \cup \{\bot\}$.
\end{description}

\paragraph{Correctness.}
For any $\secpar\in\mathbb{N}$, $\ell,T\in \mathbb{N}$, $\crs\in \bit^{\ell}$, Turing machine $M$ and input $\inp\in \bit^*$ such that $M(\inp)$ halts in at most $T$ steps and returns a string whose length is at most $\ell$, we have 
\begin{align*}
    \Pr\left[\rdec(\Menc,\crs)=M(\inp): \ek\sample\rsetup(1^\secpar,1^\ell,\crs), \Menc\sample\renc(\ek,M,\inp,T)\right]=1.
\end{align*}

\paragraph{Efficiency.}
There exists polynomials $p_1,p_2,p_3$ such that for all $\secpar\in \mathbb{N}$, $\ell\in \mathbb{N}$, $\crs\sample \bit^{\ell}$:
\begin{itemize}
    \item If $\ek\sample \rsetup(1^\secpar,1^\ell,\crs)$, $|\ek|\leq p_1(\secpar, \log \ell)$.
    \item For every Turing machine $M$, time bound $T$, input $\inp\in \bit^{*}$, if  $\Menc\sample\renc(\ek,M,\inp,T)$, then $|\Menc|\leq p_2(|M|,|\inp|,\log T, \log \ell, \secpar)$,
    \item The running time of $\rdec(\crs,\Menc)$ is at most $\min(T,\Time(M,x))\cdot p_3(\secpar,\log T)$
\end{itemize}

\paragraph{Post-Quantum Security.}
%For any efficient classical adversary $\A=(\A_1,\A_2)$, there exists a simulator $\calS$ such that for all efficient quantum distinguisher $\calD$ such that... 

There exists a simulator $\calS$ such that for any $M$ and $\inp$ such that $M(\inp)$ halts in $T^*\leq T$ steps and $|M(\inp)|\leq \ell$ and efficient quantum adversary $\A$,
\begin{align*}
    &|\Pr[1 \sample \A(\crs,\ek,\Menc):\crs\sample \bit^{\ell},\ek\sample\rsetup(1^\secpar,1^\ell,\crs),\Menc\sample\renc(\ek,M,\inp,T)]\\
    &-\Pr[1 \sample \A(\crs,\ek,\Menc):(\crs,\Menc)\sample \calS(1^{\secpar},1^{|M|},1^{|\inp|},M(\inp),T^*),\ek\sample\rsetup(1^\secpar,1^\ell,\crs)]|\leq \negl(\secpar).
\end{align*}

Badrinarayanan et al.~\cite{AC:BFKSW19} gave a construction of strong output-compressing randomized encoding based on iO and the LWE assumption. 

\subsubsection{SNARK in the QROM}
Let $H:\bit^{2\secpar}\ra \bit^{\secpar}$ be a quantum random oracle.
A SNARK for an $\NP$ language $\lang$ associated with a relation $\rela$ in the QROM consists of two efficient oracle-aided classical algorithms $\pro_\snark^H$ and $\ver_\snark^H$.
\begin{description}
\item[$\pro_\snark^H$:] It is an instance $x$ and a witness $w$ as input and outputs a proof $\pi$.
\item[$\ver_\snark^H$:] It is an instance $x$ and a proof $\pi$ as input and outputs $\top$ indicating acceptance or $\bot$ indicating rejection.
\end{description}
We require SNARK to satisfy the following properties:

\noindent\textbf{Completeness.}
For any $(x,w)\in\rela$, we have 
\begin{align*}
    \Pr_{H}[\ver_\snark^H(x,\pi)=\top:\pi\sample \pro_\snark^H(x,w)]=1.
\end{align*}

\noindent\textbf{Extractability.}
There exists an efficient quantum extractor $\ext$ such that
for any $x$ and a malicious quantum prover $\tilde{\pro}_\snark^{H}$ making at most $q=\poly(\secpar)$ queries, if 
\begin{align*}
    \Pr_H[\ver_\snark^H(x,\pi):\pi \sample \tilde{\pro}_\snark^H(x)]
\end{align*}
is non-negligible in $\secpar$, then 
\begin{align*}
    \Pr_H[(x,w)\in \rela: w\sample \ext^{\tilde{\pro}_\snark}(x,1^q,1^\secpar)]
\end{align*}
is non-negligible in $\secpar$.\\

\noindent\textbf{Efficient Verification.}
If we can verify that $(x,w)\in \rela$ in classical time $T$, then for any $\pi \sample \tilde{\pro}_\snark^H(x)$, $\ver_\snark^H(x,\pi)$ runs in classical time $\poly(\secpar,|x|,\log T)$.

Chiesa et al.~\cite{TCC:ChiManSpo19} showed that there exists SNARK in the QROM that satisfies the above properties.
\section{Proof of Lemma~\ref{lem:Cauchy-Schwarz}}\label{app:proof_CS}
\begin{proof}
Since $M$ is a projective measurement, there exists a projection $\Pi$ such that 
\begin{align*}
    \Pr[M\circ \ket{\psi}=1]=\bra{\psi} \Pi \ket{\psi}.
\end{align*}
Then we have 
\begin{align*}
    \bra{\psi} \Pi \ket{\psi}&%=\left(\sum_{i=1}^{m} \bra{\psi_i}\right) \Pi \left(\sum_{i=1}^{m} \ket{\psi_i}\right)\\
    =\|\sum_{i=1}^{m} \Pi\ket{\psi_i}\|^2\\
    &\leq m \sum_{i=1}^{m}\|\Pi\ket{\psi_i}\|^2\\
    &=m\sum_{i=1}^{m}\bra{\psi_i} \Pi \ket{\psi_i}\\
    &=m \sum_{i=1}^{m}\|\ket{\psi_i}\|^2 \Pr[M\circ \frac{\ket{\psi_i}}{\|\ket{\psi_i}\|}=1]
\end{align*}
where we used the Cauchy-Schwarz inequality from the second to third lines.
\end{proof}
\section{Soundness of Two-Round Protocol}\label{app:fiat-shamir}

Here, we show that the soundness of the two-round protocol in Sec. \ref{sec:tworound} can be reduced to the  soundness of $m$-parallel version of the Mahadev's protocol.
For proving this, we borrow the following lemma shown in \cite{C:DFMS19}.

%\begin{lemma}[\cite{C:DFMS19}]\label{lem:FS}
%Let $\calY$ be finite non-empty sets. There exists a black-box polynomial-time two-stage quantum algorithm $\calS$ with the following property. Let $\A$ be an arbitrary oracle quantum algorithm that makes $q$ queries to a uniformly random $H:\calY\ra \bit^{m}$ and that outputs some $y\in \calY$ and output $a$. 
%Then, the two-stage algorithm $\calS^{\A}$ outputs $y\in\calY$ in the first stage
%and, upon a random $c^*\in \bit^{m}$ as input to the second stage, output $a$ so that for any $x_\circ\in \calX$ and any predicate $V$:
%\begin{align*}
%    \Pr_{c^*}\left[y=y_\circ \land V(y,c^*,a):(y,a)\sample \langle\calS^{\A},c^* \rangle \right]\lapprox \frac{1}{O(q^2)}\Pr_{H}\left[y=y_{\circ}\land V(y,H(y),a):(y,a)\sample \A^H\right],
%\end{align*}
%where 
%$(y,a)\sample \langle\calS^{\A},c^* \rangle$ means that $\calS^{\A}$ outputs $y$ and $a$ in the first and second stages respectively on the second stage input $c^*$, and  $\lapprox$ hides a term that is bounded by $\frac{1}{2^{m+1}q}$ when summed over all $y_{\circ}\in \calY$.
%\end{lemma}

\begin{lemma}[{\cite[Theorem 2]{C:DFMS19}}]\label{lem:FS}
Let $\calY$ be finite non-empty sets. There exists a black-box polynomial-time two-stage quantum algorithm $\calS$ with the following property. Let $\A$ be an arbitrary oracle quantum algorithm that makes $q$ queries to a uniformly random $H:\calY\ra \bit^{m}$ and that outputs some $y\in \calY$ and output $a$. 
Then, the two-stage algorithm $\calS^{\A}$ outputs $y\in\calY$ in the first stage
and, upon a random $c\in \bit^{m}$ as input to the second stage, output $a$ so that for any $x_\circ\in \calX$ and any predicate $V$:
\begin{align*}
    \Pr_{c}\left[V(y,c,a):(y,a)\sample \langle\calS^{\A},c \rangle \right]\leq \frac{1}{O(q^2)}\Pr_{H}\left[V(y,H(y),a):(y,a)\sample \A^H\right]-\frac{1}{2^{m+1}q},
\end{align*}
where 
$(y,a)\sample \langle\calS^{\A},c \rangle$ means that $\calS^{\A}$ outputs $y$ and $a$ in the first and second stages respectively on the second stage input $c$.
\end{lemma}

We assume that there exists an efficient adversary $\A$ that breaks the soundness of the above two-round protocol.
We fix $x\notin \lang$ on which $\A$ succeeds in cheating.
We fix $(k,\td)$ that is in the support of the verifier's first message.
We apply Lemma~\ref{lem:FS} for $\A=\A(k)$ and $V=\ver_\out(k,\td,\cdot,\cdot,\cdot)$, to obtain an algorithm $\calS^{\A(k)}$ that satisfies
\begin{align*}
    &\Pr_{c}\left[V_{\out}(k,\td,y,c,a):(y,a)\sample \langle\calS^{\A(k)},c \rangle \right]\\
    \leq &\frac{1}{O(q^2)}\Pr_{H}\left[V_{\out}(k,\td,y,H(y),a):(y,a)\sample \A^H(k)\right]-\frac{1}{2^{m+1}q}.
\end{align*}
Averaging over all possible $(k,\td)$, we have
\begin{align*}
    &\Pr_{k,\td,c}\left[V_{\out}(k,\td,y,c,a):(y,a)\sample \langle\calS^{\A(k)},c \rangle \right]\\
    \leq &\frac{1}{O(q^2)}\Pr_{k,\td,H}\left[V_{\out}(k,\td,y,H(y),a):(y,a)\sample \A^H(k)\right]-\frac{1}{2^{m+1}q}.
\end{align*}
Since we assume that $\A$ breaks the soundness of the above two-round protocol,
\[
\Pr_{k,\td,H}\left[V_{\out}(k,\td,y,H(y),a):(y,a)\sample \A^H(k)\right]
\]
is non-negligible in $\secpar$.
Therefore, as long as $q=\poly(\secpar)$, 
\[
\Pr_{k,\td,c^*}\left[V_{\out}(k,\td,y,c^*,a):(y,a)\sample \langle\calS^{\A(k)},c^* \rangle \right]\]
is also non-negligible in $\secpar$.
Then, we construct an adversary $\B$ that breaks the soundness of parallel version of Mahadev's protocol as follows:
\begin{description}
\item[Second Message:] Given the first message $k$, $\B$ runs the first stage of $\calS^{\A(k)}$ to obtain $y$. It sends $y$ to the verifier.
\item[Forth Message:] Given the third message $c$, $\B$ gives $c$ to $\calS^{\A(k)}$ as the second stage input, and let $a$ be the output of it.
Then $\B$ sends $a$ to the verifier.
\end{description}
Clearly, the probability that $\B$ succeeds in cheating is 
\[
\Pr_{k,\td,c^*}\left[V_{\out}(k,\td,y,c^*,a):(y,a)\sample \langle\calS^{\A(k)},c^* \rangle \right],\]
which is non-negligible in $\secpar$.
This contradicts the soundness of $m$-parallel version of Mahadev's protocol (Theorem~\ref{thm:rep_soundness}).
Therefore we conclude that there does not exists an adversary that succeeds in the two-round protocol with non-negligible probability assuming LWE in the QROM.
\section{Omitted Proofs in Section \ref{sec:efficient}}
\subsection{Proof of Lemma~\ref{lem:adaptive_program}}\label{sec:proof_adaptive_program}
Here, we give a proof of Lemma~\ref{lem:adaptive_program}.
We note that the proof is essentially the same as the proof of \cite[Lemma 2.2]{EC:SaiXagYam18}.

Before proving the lemma, we introduce another lemma, which gives a lower bound for a decisional variant of Grover's search problem. 

\begin{lemma}[{\cite[Lemma C.1]{SonYun17}}] \label{lem:Decision_Grover}
Let $g_z:\bit^\ell \rightarrow \bit$ denotes a function defined as $g_{z}(z):=1$ and $g_z(z'):=0$ for all $z'\not=z$, and $g_{\bot}:\bit^\ell \rightarrow \bit$ denotes a function that returns $0$ for all inputs. Then for any quantum adversary $\B=(\B_1,\B_2)$ we have 
\[
\left|\Pr[1\sample\B_2(\ket{\st_{\B}},z) \mid \ket{\st_{\B}}\sample \B_1^{g_z}()]- \Pr[1\sample\B_2(\ket{\st_{\B}},z) \mid \ket{\st_{\B}}\sample \B_1^{g_{\bot}}()] \right|\leq q_1 \cdot2^{-\frac{\ell}{2}+1}.
\]
where $z\sample \bit^{\ell}$ and $q_1$ denotes the maximal number of queries by $\B_1$.
\end{lemma}

Then we prove Lemma~\ref{lem:adaptive_program}.

\begin{proof}(of Lemma~\ref{lem:adaptive_program}.)
We consider the following sequence of games.
We denote the event that $\game_i$ returns $1$ by $\TT_i$.
\begin{description}
\item[$\game_1$:] This game simulates the environment of the first term of LHS in the inequality in the lemma. Namely, the challenger chooses $z\sample \bit^{\ell}$, $H\sample \func(\bit^{\ell}\times \calX,\calY)$, $\A_1$ runs with oracle $H$ to generate $\ket{\st_{\A}}$, $\A_2$ runs on input $(\ket{\st_{\A}},z)$ with oracle $H$ to generate a bit $b$, and the game returns $b$.
\item[$\game_2$:] This game is identical to the previous game except that the oracle given to $\A_1$ is replaced with $H[z,G]$ where $G\sample \func(\calX,\calY)$.
\item[$\game_3$:] This game is identical to the previous game except that the oracle given to $\A_1$ is replaced with $H$ and the oracle given to $\A_2$ is replaced with $H[z,G]$.
We note that this game simulates the environment as in the second term of the LHS in the inequality in the lemma.
\end{description}
What we need to prove is $|\Pr[\TT_1]-\Pr[\TT_3]|\leq q_12^{-\frac{\ell}{2}+1}$.
First we observe that the change from $\game_2$ to $\game_3$ is just conceptual and nothing changes from the adversary's view since in both games, the oracles given to $\A_1$ and $\A_2$ are random oracles that agrees on any input $(z',x)$ such that $z'\neq z$ and independent on any input $(z,x)$.
Therefore we have $\Pr[\TT_2]=\Pr[\TT_3]$.
What is left is to prove $|\Pr[\TT_1]-\Pr[\TT_2]|\leq q_12^{-\frac{\ell}{2}+1}$.
For proving this, we construct an algorithm $\B=(\B_1,\B_2)$ that breaks Lemma~\ref{lem:adaptive_program} with the advantage $|\Pr[\TT_1]-\Pr[\TT_2]$ as follows:

\begin{description}
\item[$\B_1^{g^*}()$:] It generates $H\sample \func(\bit^{\ell}\times \calX,\calY)$ and  $G\sample \func(\calX,\calY)$, implements an oracle $O_1$ as
\begin{align*}
O_1(z',x)=
\begin{cases}
G(x) &\text{~if~}g^*(z')=1 \\
H(z',x) &\text{~else}
\end{cases},    
\end{align*}
runs $\ket{\st_{\A}}\sample\A_1^{O_1}()$ and outputs $\ket{\st_{\B}}\defeq \ket{\st_{\A}}$
\item[$\B_2(\ket{\st_{\B}}=\ket{\st_{\A}},z)$:]
It runs $b \sample \A_2^{H}(\ket{\st_{\B}},z)$ and outputs $b$.
\end{description}
It is easy to see that if $g^*=g_{\bot}$, then $\B$ perfectly simulates $\game_1$ for $\A$ and if $g^*=g_z$, then $\B$ perfectly simulates $\game_2$ for $\A$.
Therefore, we have $|\Pr[\TT_1]-\Pr[\TT_2]|\leq q_12^{-\frac{\ell}{2}+1}$ by Lemma~\ref{lem:adaptive_program}.
\end{proof} 

\subsection{Proof of Theorem \ref{thm:eff_soundness}}\label{app:proof_eff_soundness}
Here, we give a proof of Theorem \ref{thm:eff_soundness}.

\begin{proof}[of Theorem \ref{thm:eff_soundness}]
We fix $T$ and $x\notin \lang_T$.
Let $\A$ be a cheating prover.
First, we divides $\A$ into the first stage $\A_1$, which is given $(\crs_{\proeff},\crs_{\vereff})$ and the first message and outputs the second message $\ct'$ and its internal state $\ket{\st_{\A}}$, and the second stage $\A_2$, which is given the internal state $\ket{\st_{\A}}$ and the third message and outputs the fourth message $\pi_{\snark}$.
We consider the following sequence of games between an adversary $\A=(\A_1,\A_2)$ and a challenger. 
Let $q_1$ and $q_2$ be an upper bound of number of random oracle queries by $\A_1$ and $\A_2$, respectively.
We denote the event that the challenger returns $1$ in $\game_i$ by $\TT_i$.
\begin{description}
\item[$\game_1$:] This is the original soundness game.
Specifically, the game runs as follows:
\begin{enumerate}
    \item The challenger generates 
    $H\sample \func(\bit^{2\secpar}\times \bit^{2\secpar},\bit^{\secpar})$, 
    $\crs_\re\sample \bit^{\ell}$, %$\CRH\sample \famCRH$, 
    $s\sample \bit^{\ell_s}$, and $(\pk_{\fhe},\sk_{\fhe})\sample \fhekeygen(1^\secpar)$, and computes $\ek_\re\sample \rsetup(1^\secpar,1^\ell,\crs_\re)$,  %$t\defeq\crh(\crs_\re)$, 
    $\ct\sample \fheenc(\pk_{\fhe},s)$, and  $\Menc\sample \renc(\ek_\re,M,s,T')$.
    \item $\A_1^{H}$ is given $\crs_{\proeff}\defeq \crs_\re$, $\crs_{\vereff}\defeq \ek_\re$ and the first message $(\Menc,\pk_{\fhe},\ct)$, and outputs the second message $\ct'$ and its internal state $\ket{\st_{\A}}$.
    \item The challenger randomly picks $z\sample \bit^{2\secpar}$.
    \item $\A_2^{H}$ is given the state $\ket{\st_{\A}}$ and the third message $z$ and outputs $\pi_{\snark}$.
    \item The challenger returns $1$ if  $\ver_{\snark}^{H(z,\cdot)}((x,\pk_{\fhe},\ct,\ct'),\pi_{\snark})=\top$ and $1 \la \fhedec(\sk_{\fhe},\ct')$ and $0$ otherwise.
\end{enumerate}

\item[$\game_2$:]
This game is identical to the previous game except that the oracles given to $\A_2$ and $V_{\snark}$ are replaced with $H[z,G]$ and $G$ in Step 4 and 5 respectively where $G\sample \func(\bit^{2\secpar},\bit^{\secpar})$ and $H[z,G]$ is as defined in Lemma~\ref{lem:adaptive_program}. 
We note that the oracle given to $\A_1$ in Step 2 is unchanged from $H$.

\item[$\game_3$:]
This game is identical to the previous game except that Step 4 and 5 are modified as follows:
\begin{description}
\item[4.] The challenger runs $e \sample \ext^{\A'_2[H,\ket{\st_{\A}},z]}((x,\pk_{\fhe},\ct,\ct'),1^{q_2},1^\secpar)$ where $\A'_2[H,\st_{\A},z]$ is an oracle-aided quantum algorithm that is given an oracle $G$ and emulates $\A_2^{H[z,G]}(\ket{\st_{\A}},z)$.
\item[5.] The challenger returns $1$ if 
$e$ is a valid witness for $(x,\pk_{\fhe},\ct,\ct')\in \lang_{\snark}$ and $1 \la \fhedec(\sk_{\fhe},\ct')$ and $0$ otherwise.
\end{description}

%\item[$\game_4$:]
%This game is identical to the previous game except that Step 5 is modified as follows:
%\begin{description}
%\item[5.] The challenger returns $1$ if 
%$k'=k$ where $(k,\td)\sample \ver_1(1^\secpar,x;PRG(s))$, 
%$(\crs'_{\re},k',e)$ is a valid witness for $(x,\pk_{\fhe},\ct,\ct')\in \lang_{\snark}$ and $1 \la \fhedec(\sk_{\fhe},\ct')$ and $0$ otherwise.
%\end{description}

\item[$\game_4$:]
This game is identical to the previous game except that Step 5 is modified as follows:
\begin{description}
\item[5.] The challenger returns $1$ if 
$e$ is a valid witness for $(x,\pk_{\fhe},\ct,\ct')\in \lang_{\snark}$, and $\ver_\out(x,k,\td,e)=\top$ where $(k,\td)\sample \ver_1(1^\secpar,x;PRG(s))$ and $0$ otherwise.
\end{description}

\item[$\game_5$:]
This game is identical to the previous game except that $\ct$ is generated as $\ct\sample \fheenc(\pk_{\fhe},\allowbreak 0^{2\secpar})$ in Step 1.

\item[$\game_6$:]
This game is identical to the previous game except that $\crs_\re$, $\ek_\re$, and $\Menc$ are generated in a different way. Specifically, in Step $1$, the challenger computes
$(k,\td)\sample \ver_1(1^\secpar,x;PRG(s))$, $(\crs_\re,\Menc)\sample \calS_\re(1^\secpar,1^{|M|},1^{\ell_s},k,T^*)$, and $\ek_\re \sample \rsetup(1^\secpar,1^\ell,\crs_\re)$ where $T^*$ is the running time of $M(\inp)$.
We note that the same $(k,\td)$ generated in this step is also used in Step 5.

\item[$\game_7$:]
This game is identical to the previous game except that $PRG(s)$ used for generating $(k,\td)$ in Step 1 is replaced with a true randomness.
\end{description}

This completes the descriptions of games. 
Our goal is to prove $\Pr[\TT_1]=\negl(\secpar)$. We prove this by the following lemmas. 
Since Lemmas~\ref{lem:eff_gamehop_five}, \ref{lem:eff_gamehop_six}, and \ref{lem:eff_gamehop_seven} can be proven by straightforward reductions, we only give proofs for the rest of lemmas.

\begin{lemma}\label{lem:eff_gamehop_one}
We have $|\Pr[\TT_2]-\Pr[\TT_1]|\leq q_12^{-(\secpar+1)}$.
\end{lemma}
\begin{proof}
This lemma is obtained by applying Lemma~\ref{lem:adaptive_program} for $\B=(\B_1,\B_2)$ described below:
\begin{description}
\item[$\B_1^{O_1}$():]   It generates 
    $\crs_\re\sample \bit^{\ell}$,  $s\sample \bit^{\ell_s}$, and $(\pk_{\fhe},\sk_{\fhe})\sample \fhekeygen(1^\secpar)$, computes $\ek_\re\sample \rsetup(1^\secpar,1^\ell,\crs_\re)$,  $\ct\sample \fheenc(\pk_{\fhe},s)$, $\Menc\sample \renc(\ek_\re,M,s,T')$,  and $\ct\sample \fheenc(\pk_{\fhe},s)$, and sets $\crs_{\proeff}=\crs_\re$ and $\crs_{\vereff}\defeq \ek_\re$.
Then it runs $(\ct',\ket{\st_{\A}})\sample \A_1^{O_1}(\crs_{\proeff},\crs_{\vereff},x,(\Menc,\pk_{\fhe},\ct))$, and outputs $\ket{\st_\B}\defeq(\ket{\st_\A},x,\Menc,\ct,\ct',\sk_\fhe)$.\footnote{Classical strings are encoded as quantum states in a trivial manner.}
\item[$\B_2^{O_2}(\ket{\st_\B},z)$:] 
It runs $\pi_{\snark}\sample \A_2^{O_2}(\ket{\st_\A})$, and outputs $1$ if $\ver_{\snark}^{O_2(z,\cdot)}((x,\pk_{\fhe},\ct,\ct'),\pi_{\snark})=\top$ and $1 \la \fhedec(\sk_{\fhe},\ct')$ and $0$ otherwise.
\end{description}
\end{proof}

\begin{lemma}\label{lem:eff_gamehop_two}
If $\Pi_{\SNARK}$ satisfies the extractability and $\Pr[\TT_2]$ is non-negligible, then $\Pr[\TT_3]$ is also non-negligible.
\end{lemma}
\begin{proof}
Let $\transcript_3$ be the transcript of the protocol before the forth message is sent (i.e., $\transcript_3=(\crs_{\proeff},\crs_{\vereff},\Menc,\pk_{\fhe},\ct',z)$).
We say that $(H,\sk_\fhe,\transcript_3,\ket{\st_\A})$ is good if we randomly choose $G\sample \func(\bit^{2\secpar},\bit^{\secpar})$ and run $\pi_{\snark}\sample \A_2^{H[z,G]}(\ket{\st_\A})$ to complete the transcript, then the transcript is accepted (i.e., we have $\ver_{\snark}^{G}((x,\pk_{\fhe},\ct,\ct'),\pi_{\snark})=\top$ and $1 \la \fhedec(\sk_{\fhe},\ct')$) with non-negligible probability.
By a standard averaging argument, if $\Pr[\TT_2]$ is non-negligible, then a non-negligible fraction of $(H,\sk_\fhe,\transcript_3,\ket{\st_\A})$ is good when they are generated as in $\game_2$.
We fix good $(\transcript_3,\sk_\fhe,\ket{\st_\A})$.
Then by the extractability of $\Pi_{\SNARK}$, $\ext$ succeeds in extracting a witness for $(x,\pk_{\fhe},\ct,\ct')\in \lang_{\snark}$ with non-negligible probability. Moreover, since we assume $(H,\sk_\fhe,\transcript_3,\ket{\st_\A})$ is good, we always have $1 \la \fhedec(\sk_{\fhe},\ct')$ (since otherwise a transcript with prefix $\transcript_3$ cannot be accepted).
Therefore we can conclude that $\Pr[\TT_3]$ is non-negligible.
\end{proof}

%\begin{lemma}\label{lem:eff_gamehop_three}
%If $\famCRH$ satisfies collision-resistance, then we have $|\Pr[\TT_4]-\Pr[\TT_3]|\leq \negl(\secpar)$.
%\end{lemma}
%\begin{proof}
%If $(\crs'_{\re},k',e)$ is a valid witness for $(x,\pk_{\fhe},\ct,\ct')\in \lang_{\snark}$, then we especially have   $\crh(\crs'_{\re})=t$ and
%$k'=\rdec(\Menc,\crs'_{\re})$.
%Conditioned on this happening, by the collision resistance of $\famCRH$, we have $\crs'_\re=\crs_\re$ with overwhelming probability.
%If this happens, $k'=k$ where $(k,\td)\sample \ver_1(1^\secpar,x;PRG(s))$ by the correctness of $\Pi_{\RE}$.
%Therefore, when the challenger returns $1$ in $\game_3$, we have $k'=k$ with overwhelming probability. Thus, the lemma follows. 
%\end{proof}

\begin{lemma}\label{lem:eff_gamehop_four}
We have $\Pr[\TT_4]=\Pr[\TT_3]$.
\end{lemma}
\begin{proof}
If $e$ is a valid witness for $(x,\pk_{\fhe}, \ct,\ct')\in \lang_{\snark}$, then we especially have $\ct'= \fheeval(\pk_{\fhe},\allowbreak C[x,e],\ct)$.
By the correctness of $\Pi_\FHE$, we have $\fhedec(\sk_\fhe,\ct')=C[x,e](s)=(\ver_\out(x,k,\td,e)\overset{?}{=}\top)$ where $(k,\td)\sample \ver_1(1^\secpar,x;PRG(s))$.
Therefore, the challenger returns $1$ in $\game_4$ if and only if it returns $1$ in $\game_3$.
\end{proof}

\begin{lemma}\label{lem:eff_gamehop_five}
If $\Pi_{\FHE}$ is CPA-secure, then we have $|\Pr[\TT_5]-\Pr[\TT_4]|\leq \negl(\secpar)$.
\end{lemma}

\begin{lemma}\label{lem:eff_gamehop_six}
If $\Pi_{\RE}$ is secure, then we have
$|\Pr[\TT_6]-\Pr[\TT_5]|\leq \negl(\secpar)$.
\end{lemma}

\begin{lemma}\label{lem:eff_gamehop_seven}
If $\PRG$ is secure, then we have  $|\Pr[\TT_7]-\Pr[\TT_6]|\leq \negl(\secpar)$.
\end{lemma}

\begin{lemma}\label{lem:eff_gamehop_eight}
If $(\pro,\ver)$ satisfies soundness, then we have $\Pr[\TT_7]\leq \negl(\secpar)$.
\end{lemma}
\begin{proof}
Suppose that $\Pr[\TT_7]$ is non-negligible. Then we construct an adversary $\B$ against the underlying two-round protocol as follows:
\begin{description}
\item[$\B(k)$:] Given the first message $k$, it generates 
    $H\sample\func(\bit^{2\secpar}\times \bit^{2\secpar},\bit^{\secpar})$, $G\sample \func(\bit^{2\secpar},\allowbreak \bit^{\secpar})$, $z\sample \bit^{2\secpar}$,
    $(k,\td)\sample \ver_1(1^\secpar,x;PRG(s))$, $(\crs_\re,\Menc)\sample \calS_\re(1^\secpar,1^{|M|},1^{\ell_s},k,T^*)$, $\ek_\re \sample \rsetup(1^\secpar,1^\ell,\crs_\re)$,   and $(\pk_{\fhe},\sk_{\fhe})\sample \fhekeygen(1^\secpar)$,  computes  $\ct\sample \fheenc(\allowbreak \pk_{\fhe},0^{2\secpar})$,
    and sets $\crs_{\proeff}=\crs_\re$ and $\crs_{\vereff}\defeq \ek_\re$.
Then it runs $(\ct',\ket{\st_{\A}})\sample \A_1^{H}(\crs_{\proeff},\crs_{\vereff},\allowbreak x,(\Menc,\pk_{\fhe},\ct))$ and $e\sample \ext^{\A'_2[H,\ket{\st_{\A}},z]}((x,\pk_{\fhe},\ct,\ct'),1^{q_2},1^\secpar)$ and outputs $e$.
\end{description}
Then we can easily see that the probability that we have $\ver_\out(x,k,\td,e)$ is at least $\Pr[\TT_7]$.
Therefore, if the underlying two-round protocol is sound, then $\Pr[\TT_7]=\negl(\secpar)$.
\end{proof}

By combining Lemmas~\ref{lem:eff_gamehop_one} to~\ref{lem:eff_gamehop_seven}, we can see that if $\Pr[\TT_1]$ is non-negligible, then $\Pr[\TT_7]$ is also non-negligible, which contradicts Lemma~\ref{lem:eff_gamehop_eight}.
Therefore we conclude that $\Pr[\TT_1]=\negl(\secpar)$.
\end{proof}

\subsection{Two-Round CVQC Protocol with Efficient Verifier}\label{sec:omitted_two-round_eff}
Here, Here, we show that the number of rounds can be reduced to $2$ relying on another random oracle.
Namely, we observe that the third message of the scheme is just a public coin, and so we can apply the Fiat-Shamir transform similarly to Sec.\ref{sec:tworound}.
In the following, we describe the protocol for completeness.

Our two-round CVQC protocol $(\setupefffs,\proefffs,\verefffs=(\verefffsone,\verefffsout))$ for $\lang_T$ in the CRS+QRO model is described as follows.
Let $H:\bit^{2\secpar}\times \bit^{2\secpar}\ra \bit^{\secpar}$ be a quantum random oracle and $H':\bit^{\ell_{ct'}}\ra \bit^{2\secpar}$ be another quantum random oracle where $\ell_{\ct'}$ is the maximal length of $\ct'$ in the four-round scheme and $\ell$ and $T'$ be as defined in the previous section.

\begin{description}
\item[$\setupefffs(1^\secpar)$:]
The setup algorithm takes the security parameter $1^\secpar$ as input, generates $\crs_\re \sample \bit^{\ell}$ % and $\CRH\sample \famCRH$, 
and computes $\ek_\re\sample \rsetup(1^\secpar,1^\ell,\crs_\re)$. %and $t\defeq\crh(\crs_\re)$ 
Then it outputs a CRS for verifier $\crs_{\verefffs}\defeq \ek_\re$ and a CRS for prover $\crs_{\proefffs}\defeq \crs_\re$.
\item[$\verefffsone^{H,H'}$:] 
Given $\crs_{\verefffs}= \ek_\re$ and $x$, 
it generates $s\sample \bit^{\ell_s}$ and $(\pk_{\fhe},\sk_{\fhe})\sample \fhekeygen(1^\secpar)$,
computes $\ct\sample \fheenc(\pk_{\fhe},s)$
and  $\Menc\sample \renc(\ek_\re,M,s,T')$
where $M$ is a Turing machine that works as follows:
\begin{description}
\item[$M(s)$:] Given an input $s\in \bit^{\ell_s}$, it computes $(k,\td)\sample \ver_1(1^\secpar,x;PRG(s))$ and outputs $k$.
\end{description}
Then it sends $(\Menc,\pk_{\fhe},\ct)$ to $\proefffs$ and keeps $\sk_{\fhe}$ as its internal state.

\item[$\proefffstwo^{H,H'}$:] Given $\crs_{\proefffs}=\crs_\re$, $x$ and the message  $(\Menc,\pk_{\fhe},\ct)$ from the verifier, it computes $k\la \rdec(\crs_{\re},\Menc)$, $e\sample \pro_2(x,k)$, and $\ct'\la \fheeval(\pk_{\fhe},C[x,e],\ct)$ where $C[x,e]$ is a classical circuit defined above.
Then it computes $z\defeq H'(\ct')$, computes $\pi_\snark \sample \pro_{\snark}^{H(z,\cdot)}((x,\pk_{\fhe},\ct,\ct'),e)$
and sends $(\ct',\pi_\snark)$ to $\verefffs$.

\item[$\verefffsout^{H,H'}$:] 
It computes $z\defeq H'(\ct')$ and returns $\top$ if $\ver_{\snark}^{H(z,\cdot)}((x,\pk_{\fhe},\ct,\ct'),\pi_{\snark})=\top$ and $1 \la \fhedec(\sk_{\fhe},\ct')$ and $\bot$ otherwise.
\end{description}

\paragraph{Verification Efficiency.}
Clearly, the verification efficiency is preserved from the protocol in Sec.~\ref{sec:efficient-four}

\begin{theorem}[Completeness]\label{thm:efffs_completeness}
%If the protocol is run honestly on input $x\in \lang$, then $\ver$ returns $\bot$ with overwhelming probability.
For any $x\in \lang_T$, 
\begin{align*}
\Pr\left[\langle\proefffs^{H,H'}(\crs_{\proefffs}),\verefffs^{H,H'}(\crs_{\verefffs})\rangle(x)=\bot \right]=\negl(\secpar)    
\end{align*}
where $(\crs_{\proefffs},\crs_{\verefffs})\sample \setupefffs(1^\secpar)$.
\end{theorem}

\begin{theorem}[Soundness]\label{thm:efffs_soundness}
For any $x\notin \lang_T$ any efficient quantum cheating prover $\A$, 
\begin{align*}
\Pr\left[\langle\A^{H,H'}(\crs_{\proefffs},\crs_{\vereff}),\verefffs^{H,H'}(\crs_{\verefffs})\rangle(x)=\top\right]=\negl(\secpar)    
\end{align*}
where $(\crs_{\proefffs},\crs_{\verefffs})\sample \setupefffs(1^\secpar)$.
\end{theorem}
This can be reduced to Theorem~\ref{thm:eff_soundness} similarly to the proof of soundness of the protocol in Sec.~\ref{sec:tworound}.
This completes the proof of Theorem \ref{thm:Eff}

\fi
\end{document}